\documentclass[12pt]{article}
\pdfoutput=1 
\usepackage{makeidx}
\makeindex

\title{A local-realistic model for quantum theory}
\author{ Paul Raymond-Robichaud \\
{\normalsize ISI Foundation, Torino, Italy}\\ 
{\normalsize paul.r.robichaud@gmail.com}
}

\usepackage[breaklinks]{hyperref}
  \usepackage{verbatim}
 \usepackage{lipsum}

\usepackage{amsthm}
\usepackage{times}
\usepackage{amsmath,amssymb}
\usepackage[english]{babel}

\usepackage{tikz}

\usepackage{url}
\usetikzlibrary{decorations.pathreplacing}

%
%
%


\usepackage{xy}
\xyoption{matrix}
\xyoption{frame}
\xyoption{arrow}
\xyoption{arc}

\usepackage{ifpdf}
\ifpdf
\else
\PackageWarningNoLine{Qcircuit}{Qcircuit is loading in Postscript mode.  The Xy-pic options ps and dvips will be loaded.  If you wish to use other Postscript drivers for Xy-pic, you must modify the code in Qcircuit.tex}
\xyoption{ps}
\xyoption{dvips}
\fi

\entrymodifiers={!C\entrybox}

\newcommand{\bra}[1]{{\left\langle{#1}\right\vert}}
\newcommand{\ket}[1]{{\left\vert{#1}\right\rangle}}

\theoremstyle{definition}
\newtheorem{definition}{Definition}[section]

\newtheorem{theorem}{Theorem}[section]

\newtheorem{corollary}{Corollary}[section]

\newtheorem{axiom}{Axiom}[section]

\newcommand{\iffdef}{\stackrel{\mathrm{def}}{\iff}}

\newcommand{\isdef}{\stackrel{\mathrm{def}}{=}}

\newlength{\spacingA}
\setlength{\spacingA}{8ex}
\newlength{\spacingB}
\setlength{\spacingB}{13.15ex}
\newlength{\spacingC}
\setlength{\spacingC}{17.925ex}

\addtolength{\parskip}{1ex}

\begin{document}

 \maketitle

\begin{abstract} We provide a rigorous definition of local realism.
 We show that the universal wave function cannot be a complete description of a local reality. Finally, we construct a local-realistic model for quantum theory.
\end{abstract}

\section*{Acknowledgements}

I am   beyond grateful to Gilles Brassard  for countless stimulating discussions, many years of supervision, a thorough review of a draft of this article, along with numerous helpful suggestions.  I am also thankful for his unwavering encouragement and support.  Gilles   has been the main voice to spread far and wide the ideas presented here.  The numerous conferences that he has given on the topic were  more eloquent than anything I  could have done. This work would not have been possible without him.

I am grateful to Chiara Marletto,  David Deutsch   and Charles Alexandre Bédard for countless stimulating discussions and for providing numerous suggestions for improvements to a draft of this article.

I am grateful  to Lev Vaidman for  many   helpful suggestions to improve the present article.

I am grateful  to  Stefan Wolf, Tony Short, Renato Renner, Mario Rasetti, Sandu Popescu,  Marcin Paw{\l}owski,  Patrick Hayden,  St\'ephane Durand,  Jeff Bub, Michel Boyer and last but not least Charles Bennett for many stimulating discussions.

This work has been supported in part by NSERC, the Fonds de recherche du Qu\'ebec -- Nature et technologies and from Intesa Sanpaolo Innovation Center. The funders had no role in study design, data collection, and analysis, decision to publish, or preparation of the manuscript.

\section{Introduction}
Quantum theory has long been held to be incompatible with  local realism. 
   Local realism can be described informally, albeit incompletely, by the following principles:

\paragraph{Principles of Local Realism} 
\begin{enumerate}
\item There is a real world.
\item It can be decomposed into various parts, called \emph{systems}.
\item A system may be decomposed into subsystems.
\item  Every system is a subsystem of the \emph{global system} consisting of the entire world.
\item At any given time, each system is in some state.
\item The state of a system  determines, and is determined by, the state of its subsystems. 
\item What is observable in a system is determined by the state of the system.
\item The state of the world evolves according to some  law.
\item  The evolution of the state of a system can only be influenced by the state
  of systems in its local neighbourhood.
\end{enumerate}

The notion of local neighbourhood depends on the underlying physics. In the case of relativity theory, this would mean that no systems  can influence each others if they are space-like separated.  This implies that no action on a system can have an influence on  another system at a speed faster than light. However, these principles are not restricted  to  theories in which general relativity holds.

Two reasons are usually given as to why quantum theory cannot be compatible with local realism.

The first  is that quantum theory is incompatible with local hidden variable theories \cite{Bell64}. But those are merely a specific type of local-realistic theory. And while all local hidden variable theories are local-realistic, the converse is false.  For example, the ``non-local box'' introduced by Sandu Popescu and Daniel Rohrlich cannot be described by local hidden variables and yet can easily be given a local-realistic interpretation \cite{PR,poster,ParLives}. Note that the principles of local realism that we have enumerated make no reference to local hidden variables.

The second  is that an entangled state of a composite system cannot be described in term of the states of its subsystems.  This, if true,  would indeed  imply that quantum theory is not  local-realistic. However, we shall show precisely that in quantum theory, all states of all composite systems, including entangled states, can be fully described by the  states of their subsystems.  

Despite these objections that are still made today, the consistency of quantum theory with local realism has already been proven by David Deutsch  and Patrick Hayden \cite{DH}.  They appealed  to   an intuitive conception of local realism that is in full agreement with the formal framework provided in this article. They also provided a local-realistic model for quantum theory.
The model  presented here can easily be shown to be isomorphic with theirs.  Charles Alexandre Bédard provided a comparison of both models and showed their equivalence in his thesis \cite{CAB, CAB2}. 

But our aim here is more general.  We shall present a precise formulation of the principles of local realism, we can have precise theorems of an extremely general nature, which deals not just 
with quantum theory but will \emph{all} possible local-realistic theories.
  One example  is the no-signalling principle, which is valid in all local-realistic theories, and generalizes a well-known result of quantum theory. The no-signalling principle is that no operation on a system can have an instantaneous \emph{observable} effect on  remote systems.  That the no-signalling principle holds in all local-realistic theories is proven in a companion article, where we also show that every  no-signalling theory with reversible dynamics has a local-realistic model \cite{STRUCTURE}.  Given that all local-realistic theories are  no-signalling, this implies the following theorem: given  reversible dynamics, a theory is no-signalling if and only if it has a local-realistic model. 

Developing those ideas required a  formalization of the concept of local realism  \emph{independent} of quantum theory. Developing a local-realistic model for an  arbitrary no-signalling  theory was obtained by abstracting away all that belongs specifically to quantum theory from the model in the current article and retaining only the most general mathematical ideas.

While the model provided in the companion article is universal, the price of universality is at  a higher degree of abstraction than in the present particle, which is built around the familiar density operators, vector spaces and Hilbert spaces of quantum theory.   For a comparison of both models, see  \cite{CAB, CAB2}.  

The local-realistic model presented in this article is valid for finite-dimensional unitary quantum theory.  
For the sake of brevity, in the remainder of this article,  we shall often refer to finite-dimensional unitary quantum theory as  quantum theory.
The model of the current article can be extended to continuous-dimensional unitary quantum theory as shown in \cite{CAB, CAB2}.

This paper is an extension of the doctoral work of the author, performed under the supervision of Gilles Brassard. It is intended to supersede previous work by the author on the topic \cite{PRRthesis,pres,NDFP15}. In this paper, as opposed to previous work, local realism is described through explicit axioms, and greater emphasis is placed on explaining how quantum theory should satisfy these axioms. Furthermore, the local-realistic model for quantum theory and the proof that the universal wave function cannot be a complete description of local reality are presented here with full rigour and go in greater detail than the previous presentations. Nevertheless, all the key ideas were already presented by the author in a conference he gave in 2014~\cite{pres}, including the main axioms of local realism (called ``desiderata'' at the time) and the model based on \emph{evolution matices} (Definition~\ref{evoma}). 
Note that the slides show the author's supervisor as first author, yet all the ideas as well as the presentation itself originate with the author. A preliminary statement and proof that the universal wave function is but a shadow of the real world (under the metaphysical assumption of locality) was first presented by the author's supervisor at the 2015 edition of the annual \emph{New Directions in the Foundations of Physics} conference organized by Jeffrey Bub on slides 38 to 40 of ref.~\cite{NDFP15} with the author's permission. However, this proof was entirely the discovery of the author even though he enjoyed listening to his supervisor serving as his voice.

\section{The density operator of a system does not determine the state of the system} \label{qtnotlr}
All observable properties of quantum theory can be described by the density operator formalism.   
A system's density operator determines to everything that is locally observable on it.  
However, the density operator of system $A$ and system $B$, do not determine the density operator of composite system $AB$.
Yet, in a local-realistic theory, the state of composite system \emph{is}   determined by the state of its subsystems. 

We could be tempted to  jump to the conclusion that  quantum theory is not local-realistic.  However this would only follow   under the supposition that the state of a system is determined by its density operator.   

This argument, rather than proving that quantum theory is not local-realistic gives us a  hint for how  to find a local-realistic model for quantum theory.  Namely, by contrapositive we find that  any local-realistic model of quantum theory must distinguish between the state of a system and  what is locally observable in it --  which according to the principles of local-realism,  is determined by the  state of the system.

\section{Local-realistic theories} \label{sc:locreltheo}
Hence, our next task is to give a mathematical definition  of a local-realistic theory.     A local-realistic theory is defined as a mathematical theory that satisfies \emph{all} the axioms we shall present.  

Mathematics alone is insufficient and powerless to apprehend reality.  To fully understand a physical  concept such as local realism, there is by necessity an intuitive level, where the key consideration is the relation between mathematics and reality.  This level by its nature cannot be discussed with the same degree of rigor.
We have discussed part of the intuitive meaning of what is a local-realistic theory in the introduction of this paper.   A deeper exposition of non-mathematical aspects of local realism and  the relevant mathematics  is given in \cite{STRUCTURE}.

\subsection{Systems}
We want to define the mathematical properties of systems, where systems describe meaningful parts of the world.  These systems can be combined in various way to give rise to different systems.  For example, if $A$ is a system, there will be a system $\overline{A}$, the complement of system $A$, consisting of the rest of the world.  As an other example, if $A$ and $B$ are systems, there will be a system $A \sqcup B$, the union of system $A$ and $B$ consisting of the parts of the world consisting of the part of the world belonging to  either of the two systems $A$ and $B$.

Mathematically,  systems will be represented as elements of a lattice of systems, which we define now.

\begin{definition}[Lattice of systems]
A \emph{lattice of system} is a  6-tuple $\left( \mathcal{S}, \sqcup, \sqcap, \overline{\,\cdot\,}, S, 0 \right)$, where
 $\mathcal{S}$ is a set of elements called \emph{systems}, symbols  $\sqcup$, $\sqcap$ denote binary operations on the set of systems $\mathcal{S}$, symbol  $\overline{ \, \cdot \, }$ denotes a unary operation on the set of systems $\mathcal{S}$, symbol   $0$ and $S$ are distinguished elements of the set of systems $\mathcal{S}$ respectively called the \emph{empty system} and the \emph{global system}.

If $A$ and $B$ are systems,  we shall say that: 
\begin{enumerate}
\item $A \sqcup B $ is the \emph{union} of $A$ and $B$, 
\item $A \sqcap B $ is the \emph{intersection} of $A$ and $B$,
\item $ \overline{A}$ is the \emph{complement} of $A$.
\end{enumerate}
The following properties are verified for all systems $A$, $B$ and $C$:
\begin{enumerate} 
\item Associativity:
\begin{align*}A \sqcup ( B \sqcup C ) &= ( A \sqcup B ) \sqcup C \, , \\
A \sqcap ( B \sqcap C )  &= ( A \sqcap B ) \sqcap C  \, .
\end{align*}
\item Commutativity:
\begin{align*}
 A \sqcup B &= B \sqcap A \, , \\
 A \sqcap B  & = B \sqcap A \, . \\
\end{align*}
\item Absorption:
\begin{align*}
 A \sqcap ( B \sqcup A ) &= A \, , \\
 A \sqcup ( B \sqcap A ) &= A \, .
\end{align*}
\item Distributivity:
\begin{align*}
A \sqcap ( B \sqcup C ) &= ( A \sqcap B ) \sqcup ( A \sqcap C) \, ,  \\
A \sqcup ( B \sqcap C ) &= ( A \sqcup B ) \sqcap ( A \sqcup C ) \, .
\end{align*}
\item Complementation:
\begin{align*}
 A \sqcup \overline{A} &= S \, , \\
 A \sqcap \overline{A} &= 0  \, .
\end{align*}
\end{enumerate}
\end{definition}
 Formally,  a lattice of system is simply  a  boolean lattice.  However, we believe that an evocative terminology helps to build proper intuition, thus the properties of boolean lattices  are better described using the terminology of physical systems, rather than the general but flavourless language of boolean lattices.

Conceptually the global system could represent as much as the entire universe, or a much smaller system.  Though our goal is ultimately to be able to represent as much as possible, we might want to describe in a local-realistic way smaller systems like a quantum computer or several qubits.

Given the definition of a  lattice of systems, we can state the first axiom.

\begin{axiom}[Systems]
Associated to a local-realistic theory is a  lattice of systems  $\left( \mathcal{S}, \sqcup, \sqcap, \overline{\,\cdot\,}, S, 0 \right)$.
\end{axiom}

We now introduce some terminology on systems.

There is  a natural partial order ``$\sqsubseteq$'' among systems  defined  by:
\[ B \sqsubseteq A \iffdef  A \sqcap B = B \, . \]

\begin{definition}[Subsystem]  A system $B$ is a \emph{subsystem} of system $A$ if $ B \sqsubseteq A$.  
\end{definition}

\begin{definition}[Proper subsystem] A system $B$ is a \emph{proper subsystem} of a system  $A$ if  $B$ is a subsystem of $A$ and $B$ is distinct from the empty system $0$  and system $A$.
\end{definition}

\begin{definition}[Atomic system]  A   system  $A$  is \emph{atomic} if it has no proper subsystem and it is distinct from the empty system  $0$.
\end{definition}

\begin{definition}[Disjoint systems] Two systems $A$ and $B$  are \emph{disjoint} if  $A \sqcap B = 0 $.
\end{definition}

\begin{definition}[Non-trivial system] A  system $A$ is non-trivial, if it is neither the empty system $0$, nor the global system $S$.
\end{definition}

\begin{definition}[Composite system]
Whenever systems $A$ and $B$ are disjoint, we say that the system $A \sqcup B$ is a \emph{composite system}, composed of system $A$ and system  $B$.  For convenience, we denote it by $AB$, rather than $A\sqcup B$. 
\end{definition}

Since $\sqcup$ is commutative, we have 
\mbox{$AB = A \sqcup B = B \sqcup A = BA$}.

Since $\sqcup$ is also associative, we have
\mbox{$A \left( BC \right) = \left( A B \right) C$}
for any three mutually disjoint systems $A$, $B$ and $C$.
Thus, we shall  write $ABC$ to denote the composite system consisting of mutually disjoint systems $A$, $B$ and $C$.

\begin{definition}[Finite lattice of systems] A lattice of systems $\left( \mathcal{S}, \sqcup, \sqcap, \overline{\,\cdot\,}, S, 0 \right)$ is finite if its set of systems $\mathcal{S}$ is finite.
\end{definition}

\begin{definition}[Finite union of systems]
If  $\mathcal{A}$ is a finite set of systems,
we define recursively  $ \bigsqcup\limits_{A \in \mathcal{A} } A  $,  the \emph{union} of $ \mathcal{A}$,  in the following way:
\[\bigsqcup\limits_{A \in \mathcal{A} } A  \isdef \begin{cases} \, 0 \text{, the empty system, } & \text{ if } \mathcal{A}  \text{ is the empty set, } \\
                   ( \bigsqcup\limits_{ A \in  \mathcal{A} - \{ B \}  }   \! \! \! \! \! \!  A ) \sqcup B  & \text{ if }  B \text{ is any element  of } \mathcal{A} \, .
\end{cases}\] 
\end{definition}

It can be verified that $\bigsqcup\limits_{A \in \mathcal{A} } A$  is a system.
It can further be verified that for all  finite sets of systems $\mathcal{A}$ and $\mathcal{B}$:
\[ \bigsqcup\limits_{A \in \mathcal{A} \cup \mathcal{B} } A = ( \bigsqcup\limits_{A \in \mathcal{A} } A) \; \sqcup \; ( \bigsqcup\limits_{A \in \mathcal{B} } A) \]
We could have defined the finite intersection in the same way, but shall not need to do so for the purpose of this article.

The following theorem is a well-known theorem of finite boolean lattices, expressed in the terminology of systems. 
\begin{theorem} Let $ ( \mathcal{S} , \sqcup , \sqcap ,\overline{\cdot} ,  S, 0 )$ be a finite lattice of systems,   let $B$ be a system, there exist a unique  set of atomic systems $\mathcal{A}$ such that $B = \bigsqcup\limits_{A \in \mathcal{A} } A \, . $
\end{theorem}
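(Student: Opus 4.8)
The plan is to prove this standard structure theorem for finite Boolean lattices in two stages: first establish \emph{existence} of a decomposition of $B$ into atoms, then establish \emph{uniqueness}. Throughout I would work with the partial order $\sqsubseteq$ and use the defining properties of the lattice (absorption, distributivity, complementation) freely, as well as the fact that the lattice is finite.

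For existence, the natural approach is to let $\mathcal{A}_B$ be the set of \emph{all} atomic systems $A$ with $A \sqsubseteq B$, and then show $B = \bigsqcup_{A \in \mathcal{A}_B} A$. I would argue as follows. Let $C \isdef \bigsqcup_{A \in \mathcal{A}_B} A$; since every atom in the union is a subsystem of $B$, an easy induction using the recursive definition of finite union gives $C \sqsubseteq B$. For the reverse inclusion, consider the system $D \isdef B \sqcap \overline{C}$, which is the ``part of $B$ not yet covered.'' If $D \neq 0$, then $D$ must contain some atom (here finiteness is essential: starting from $D$ and repeatedly passing to a proper subsystem, the process cannot continue forever, so we reach an atomic $A' \sqsubseteq D$). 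But $A' \sqsubseteq D \sqsubseteq B$ forces $A' \in \mathcal{A}_B$, hence $A' \sqsubseteq C$; combined with $A' \sqsubseteq D \sqsubseteq \overline{C}$ we get $A' \sqsubseteq C \sqcap \overline{C} = 0$, contradicting that $A'$ is atomic and hence nonzero. Therefore $D = 0$, i.e. $B \sqsubseteq C$, and with $C \sqsubseteq B$ we conclude $B = C$.

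For uniqueness, suppose $\mathcal{A}$ is any set of atoms with $B = \bigsqcup_{A \in \mathcal{A}} A$. I would show $\mathcal{A} = \mathcal{A}_B$. Each $A \in \mathcal{A}$ satisfies $A \sqsubseteq B$, so $\mathcal{A} \subseteq \mathcal{A}_B$. Conversely, let $A' \in \mathcal{A}_B$, so $A'$ is an atom with $A' \sqsubseteq B = \bigsqcup_{A \in \mathcal{A}} A$. The key fact I need is that an atom below a finite union must lie below one of the members: by distributivity, $A' = A' \sqcap B = \bigsqcup_{A \in \mathcal{A}} (A' \sqcap A)$, and since $A'$ is atomic, $A' \sqcap A \in \{0, A'\}$ for each $A$. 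If all the meets were $0$ the union would be $0 \neq A'$, so some $A' \sqcap A = A'$, giving $A' \sqsubseteq A$; as $A$ is also an atom, $A' = A$, whence $A' \in \mathcal{A}$. Thus $\mathcal{A}_B \subseteq \mathcal{A}$ and the two sets coincide.

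The main obstacle is the existence step, specifically the claim that any nonzero system contains an atom. In an arbitrary (infinite) Boolean lattice this can fail, so the argument must genuinely invoke finiteness of $\mathcal{S}$: I would make the descent explicit by noting that a strictly decreasing chain of nonzero subsystems of a finite lattice must terminate, and a nonzero system with no proper subsystem is by definition atomic. The uniqueness step is comparatively routine once the ``atom below a union lies below a summand'' lemma is in hand, and that lemma rests only on distributivity together with the atomicity dichotomy $A' \sqcap A \in \{0, A'\}$. I would present the two auxiliary facts (the two small lemmas about atoms) cleanly before assembling them, since both existence and uniqueness reduce to them.
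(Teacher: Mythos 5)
Your proof is correct, but there is nothing in the paper to compare it against: the paper states this result without proof, remarking only that it ``is a well-known theorem of finite boolean lattices, expressed in the terminology of systems.'' What you have written is the standard textbook argument, and it is sound. Taking $\mathcal{A}_B$ to be the set of all atoms below $B$, proving $C \isdef \bigsqcup_{A \in \mathcal{A}_B} A \sqsubseteq B$, and then killing the remainder $D \isdef B \sqcap \overline{C}$ by extracting an atom from it is exactly how existence is usually done; your uniqueness argument via the lemma that an atom below a finite union lies below one of its members (distributivity plus the dichotomy $A' \sqcap A \in \{0, A'\}$) is likewise the standard one. You correctly identify where finiteness enters (every nonzero system dominates an atom, which can fail in infinite Boolean lattices), which is the only non-routine point. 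Two small steps you gloss over but should spell out in a polished write-up: first, that $B \sqcap \overline{C} = 0$ implies $B \sqsubseteq C$ (write $B = B \sqcap (C \sqcup \overline{C}) = (B \sqcap C) \sqcup (B \sqcap \overline{C}) = B \sqcap C$, using distributivity and the absorption-derived identity $X \sqcup 0 = X$); second, that distributivity of $\sqcap$ over the \emph{finite} union $\bigsqcup$ requires a routine induction on the recursive definition given in the paper, since the axioms only state the binary case. Neither is a gap in the idea, only in the bookkeeping.
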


We shall refer to $ \mathcal{A} $ as the \emph{atomic decomposition} of   $B$.

We shall soon discuss properties of systems in quantum theory, and for that, we will need to  work with the tensor product of Hilbert spaces. To fix some notation, we first  define  what is meant  by a tensor product.  

\begin{definition}[Tensor Product]  Let $\mathbb{F} $ be a field, let $V_{1} $, $V_{2}$, $V_{3}$ be vector spaces over the field $\mathbb{F} $ and  let $ \otimes \colon V_{1} \times V_{2} \to V_{3}$ be a bilinear operation.  We say that $( V_{3} , \otimes ) $ is a \emph{tensor product} of $V_{1}$ and $V_{2} $ if for all vector spaces $W$ over the field $\mathbb{F}$, and all bilinear functions $f \colon V_{1} \times V_{2} \to W $, there exists a unique linear application $g \colon V_{3} \to W$ such that for all vectors $v_{1}$ of  $V_{1}$ and all vector $v_{2} $ of  $V_{2}$, the following holds:
\[  f( v_{1}, v_{2} ) = g ( v_{1} \otimes v_{2} ) \, . \]
\end{definition}
If $ ( V_{3} , \otimes ) $ is a tensor product of $V_{1}$ and $V_{2}$, we write $ V_{3} = V_{1} \otimes V_{2} $ and we also refer to operation  $ \otimes$  as a \emph{tensor product}.

With slight modifications, this is the category-theoretic definition of a tensor product.

\paragraph{Systems in finite dimensional unitary quantum theory}
When the local-realistic theory is finite unitary quantum theory, its associated lattice of systems is finite, and  associated to each system $A$ is a finite dimensional complex Hilbert space denoted $ \mathcal{H}^{A}$.  Furthermore, if system $A$ is distinct from the empty system, its associated Hilbert space $\mathcal{H}^{A}$ is of dimension larger than one.  This implies that given a non-empty system $A$, the Hilbert space $\mathcal{H}^{A}$ associated to system $A$ always  contains at least two orthonormal vectors. Thus a non-empty system $A$ contains at least a qubit of information.

Also, associated with every pair of disjoint system $A$ and $B$ is a tensor product $\otimes_{( A, B)}$ such that :
\[ \mathcal{H}^{AB} = \mathcal{H}^{A} \otimes_{(A, B)} \mathcal{H}^{B} \, .
 \] 
Moreover, the family of  tensor products  satisfies the two following properties for any arbitrary mutually disjoint system $A$, $B$ and $C$, and any arbitrary vectors $ | u \rangle^{A} $, $| v \rangle^{B} $, $ | w \rangle^{C}$ belonging respectively to the Hilbert spaces of systems $A$, $B$, and $C$:
\[   | u \rangle^{A} \otimes_{(A, B)} | v \rangle^{B} = | v \rangle^{B} \otimes_{( B,A)} | u \rangle^{A} \, , \]
and
\[  ( | u \rangle^{A} \otimes_{( A, B) } | v \rangle^{B} ) \otimes_{( AB,C)} | w \rangle^{C} =  | u \rangle^{A}  \otimes_{( A, BC)} ( | v \rangle^{B}  \otimes_{( B, C)} | w \rangle^{C} )  \, . \]
These properties are consistent with the fact that since  system $AB$ is equal to system $BA$, their underlying Hilbert spaces are identical, and also that since  system $(AB)C$ is identical with system $A( BC)$ their underlying Hilbert spaces are also  identical.
When there is no ambiguity, we shall drop the subscripts and write $\otimes$  to denote all tensor products. For example, the two lines above shall be written as:
\[   | u \rangle^{A} \otimes | v \rangle^{B} = | v \rangle^{B} \otimes | u \rangle^{A} \, , \]
and
\[  ( | u \rangle^{A} \otimes | v \rangle^{B} ) \otimes | w \rangle^{C} =  | u \rangle^{A}  \otimes ( | v \rangle^{B}  \otimes | w \rangle^{C}  ) \, . \]
Since the tensor product behaves associatively, as an abuse of notation we shall omit the unnecessary parentheses and write $  | u \rangle^{A} \otimes | v \rangle^{B}  \otimes | w \rangle^{C} $ to denote the tensor product of $ | u \rangle^{A}$, $ | v \rangle^{B} $ and $ | w \rangle^{C}$.

\subsection{States}
We now turn our attention to  the states of systems.
As we discussed previously in section  \ref{qtnotlr} we need to distinguish between what can be  locally observable in a system and its complete description.    This leads us to the distinction between two kinds of states for a system.

\begin{description}
\item The \textbf{noumenal state} of a system is its complete  \emph{local} description.  
\item The \textbf{phenomenal state} of a system is a complete description of what is \emph{locally} observable in that system.  
\end{description}

The ``noumenal/phenomenal'' terminology is inspired by Kant's;  however our usage  should neither be taken as endorsement of Kantian metaphysics nor as claim  that our usage conforms to his  \cite{Kant,Kant-Brit}.

This leads us to corresponding  axioms:

\begin{axiom}[Noumenal state space]
Associated to a system $A$ is  \emph{noumenal state space}, $\textsf{Noumenal-Space}^{A}$,  which is a set of \emph{noumenal states}.
\end{axiom}

 Particular noumenal states of $A$ will be  denoted $N^{A}$,  $ N^{A}_i$, $ N^{A}_1$, etc. 

Intuitively, the noumenal state space of a system consists of all possible noumenal states that it could theoretically be in.

\begin{axiom}[Phenomenal state space]\label{pheno}
Associated to a system $A$  is a \emph{phenomenal state space}, $\textsf{Phenomenal-Space}^{A}$, which is a set of \emph{phenomenal states}.
\end{axiom} 

Particular phenomenal states of $A$ will be denoted $\rho^{A}$, $\rho^{A}_{i}$, $\rho^{A}_{1}$, etc.

Intuitively, the phenomenal state space of a systems consist of all possible phenomenal state that it could theoretically be in.

\paragraph{States in  quantum theory. } 
In quantum theory, given a Hilbert space $\mathcal{H}$, we will denote the set of density operators on the Hilbert space $\mathcal{H} $ as 
\[ \textsf{Density-Operator}^{\mathcal{H}} = \{ \rho \, \colon \, \rho \text{ is a density operator on } \mathcal{H} \} \, . \] 
In quantum theory the phenomenal state space of a system $A$, with associated Hilbert space $\mathcal{H}^{A}$  is
\[ \textsf{Phenomenal-Space}^{A} = \textsf{Density-Operator}^{\mathcal{H}^{A}} \, . \]
However, the density operator formalism of quantum theory does not supply noumenal states. Therefore noumenal states will need to be given later in order to provide quantum theory with a local-realistic model.

\paragraph{Pure state quantum theory}
A variant of quantum theory, which we shall call pure state quantum theory is like quantum theory subject to the restriction that the states of the global system are pure. 

Formally, we shall say that a density operator $\rho$ on Hilbert space $\mathcal{H}$ is  a \emph{pure density operator} on $\mathcal{H}$   if there exist a unit vector $ | \tau \rangle$ in $\mathcal{H}$ such that $ \rho = | \tau \rangle \! \langle \tau | $ .

Thus in pure state quantum theory,  the phenomenal space for the global  system $S$ is instead:
\[ \textsf{Pure-Phenomenal-Space}^{S} \isdef \{ \rho^{S} \colon \rho^{S}   \text{ is a  pure density operator on }  \mathcal{H}^{S} \} \, . \] 

For every other system $A$, the phenomenal space of that system shall consist exclusively  of density operators on $\mathcal{H}^{A}$ that are partial traces of pure density operators belonging to $\mathcal{H}^{S}$.
Formally,
\[
\textsf{Pure-Phenomenal-Space}^{A}  \]
\[ \isdef \]
\[  \{ \rho^{A} \colon \exists \rho^{S} \in \textsf{Pure-Phenomenal-Space}^{S} \text{ such that } \rho^{A} = \mathrm{tr}_{\overline{A}} ( \rho^{S}) \} \; .  \]

Everything else remains unchanged.

\subsection{Operations and actions}
We now want to describe how states can evolve in a local-realistic theory.
The evolution of a system happens through operations that can be applied to the system.  The precise operation that is applied  might depend on: the environment external to the `global' system which might be empty if the global system is the whole universe, on the dynamical laws of physics; or on time and other variables.

The next definition  characterizes the mathematical properties of a set of operations that can be performed on a system and how these operations may be combined into to new operations.

\begin{definition}[Monoid of Operations]  A monoid of operations is a 3-tuple $ ( \textsf{Operations}, \circ , I ) $, where $\textsf{Operations}$ is a set whose elements are called \emph{operations}. The set of operations comes with a binary operator denoted~``$\circ$'', called the \emph{composition}, and $I$ is an operation called the \emph{identity operation}.  A monoid of operations satisfies the following properties:
\begin{enumerate}
\item If $U$ and $V$ are operations, $U \circ V$ is an operation called the \emph{composite} of $U$ and $V$;
\item If $U$, $V$ and $W$ are operations, $U \circ  \left( V \circ  W \right) = \left( U \circ V \right) \circ W $;
\item  For all operations $U$,
\[I \circ U = U \circ I = U \, . \]
\end{enumerate}

\noindent
When there is no ambiguity, we shall omit the composition operator and write $UV$ instead of $U \circ V$.

\end{definition}

Now:

\begin{axiom}[Operations on a system]\label{op}
Associated to a system  $A$   is a monoid  of operations, $(\textsf{Operations}^{A} , \circ^{A}, I^{A} ) $.
\end{axiom}

Particular operations on system $A$ are denoted $U^{A}$, $V^{A}$, etc. Also, $I^{A}$ denotes the identity operation on system $A$.  When there is no ambiguity, we drop the superscript and write simply $U$, $V$ and~$I$.

Intuitively, the operations associated to a system are the operations that can be performed on the system, and if $U$ and $V$ are operations on a system, the operation $U \circ V$ can be implemented by first doing $V$, followed by doing $U$.

\paragraph{Operations in Quantum Theory.}
In  quantum theory, the set of  operations on system $A$ with associated Hilbert Space $\mathcal{H}^{A}$ is
\[ \textsf{Operations}^{A} = \{ U \, \colon \, U \text{ is a unitary operator on } \mathcal{H}^{A} \}  \, . \]
The composition of operations $U$ and $V$ is simply their composition as linear operators.  Finally, $I^{A}$ is the identity operator on $\mathcal{H}^{A}$.

An operation can act on a state to produce a new state, as follows:

\begin{definition}[Operation action]
Let  $(  \textsf{Operations}, \circ, I ) $ be a monoid of operations and $S$ be a set.
An~\emph{operation action} of the monoid of operation on set $S$ is a binary operator \mbox{$\star : \textsf{Operations} \times S \to S$} that satisfies, for all operations $U$ and $V$ and for all element $s$ of the set $S$,
\begin{enumerate}
\item $ U \star \left( V \star s \right) = \left( UV \right) \star s $\,;
\item  $I \star s = s$.
\end{enumerate}
\end{definition}

The next two axioms use the above definition to express the fact that an operation done on a system changes its underlying noumenal and phenomenal state.

\begin{axiom}[Noumenal action]\label{axiom:noumenalaction}
Associated to a system $A$, is a \emph{noumenal action}  denoted ``$\star^{A}$'', which is an operation action of the monoid of operations of  the system on the set of noumenal states of the system.
\end{axiom}

When there is no ambiguity, we drop the superscript of the noumenal action.  For example, we write  $U \star N^{A}$ instead of $U \star^{A} N^{A}$.

Intuitively, if  system $A$ was in noumenal state $N^{A}$, and an operation $U$ is done on it, its new noumenal state is $U \star N^{A}$.

\begin{axiom}[Phenomenal action]\label{actionpheno}
Associated to a system $A$, is a \emph{phenomenal action} ``$\cdot^{A}$'', which is an operation action of the monoid of operations of the system on the set of phenomenal states of the system.
\end{axiom}

When there is no ambiguity, we drop the superscript of the phenomenal action. For example, we write $U \cdot \rho^{A}$ instead of $U \cdot^{A}\rho^{A} $.

Intuitively, if a system $A$ was in phenomenal state $\rho^{A}$, and an operation $U$ is done on system $A$, its new phenomenal state is $U \cdot \rho^{A}$. 

\paragraph{Phenomenal action in  quantum theory.} In  quantum theory we do not have yet noumenal states, and hence no noumenal action either, but the phenomenal action of a unitary operation $U$ on a state $\rho$ is defined by the theory  as:
\[ U \cdot \rho \stackrel{\text{def}}{=} U \rho \, U^{\dagger} \, , \]
where $U \rho  \, U^{\dagger} $  is obtained through the usual composition of linear operators $U$, $\rho$ and $U^{\dagger} $ .

\subsection{Noumenal-phenomenal homomorphism}
In a local-realistic theory, what is observable locally in a system is determined by the complete description of that system, in other words, the noumenal state of a system determines its phenomenal state.  If the noumenal state of a system evolves according to an operation, its corresponding phenomenal state must evolve according to the same operation.
 Mathematically the phenomenal state of a system will be determined by its underlying noumenal state, through a structure-preserving surjective map -- a noumenal-phenomenal epimorphism. But first:

\begin{definition}[Noumenal-phenomenal homomorphism] Let $A$ be a system and let $\phi$ be a mapping whose domain is the noumenal state space of $A$ and whose range is the phenomenal state space of $A$.  We~say that $\phi$ is a \emph{noumenal-phenomenal homomorphism} on system $A$ if, for any operation $U$  of system $A$ and any noumenal state $N$ of  $A$,
\[  \phi \! \left(  U \star N \right) = U  \cdot \phi \! \left( N \right ) \, . \]  
\end{definition}

When no ambiguity can arise, we omit  the noumenal action.  For example, the equation above becomes
\[  \phi \! \left(  U   N  \right) = U  \cdot  \phi \! \left(  N \right)  \,  . \]
However,  we never omit writing the phenomenal operation action because there is a possible ambiguity in quantum theory at the phenomenal level:  we need to distinguish between the action of unitary operation $U$ on phenomenal state $\rho$, and the composition of linear operator $U$  with linear operator $\rho$.

\begin{definition}[ Noumenal-phenomenal epimorphism] A surjective noumenal-phenomenal homomorphism on system $A$  is called a \emph{noumenal-phenomenal epimorphism} on system $A$.
\end{definition}

\begin{axiom}[Noumenal-phenomenal epimorphism]
Associated with each system $A$ is  a noumenal-phenomenal epimorphism denoted~$\varphi^{A}$ called \emph{the} noumenal-phenomenal epimorphism of system $A$.
\end{axiom}
When there is no ambiguity, we write $\varphi$ \mbox{instead} of~$\varphi^{A}$. 

Intuitively, if a system $A$ is in noumenal state $N^{A}$, it has a corresponding phenomenal state $\rho^{A} = \varphi ( N^{A} )$. Furthermore, if an operation $U$ is done on system $A$, its new noumenal state is $ U  N^{A}$, and its corresponding new phenomenal state is $U \cdot \rho^{A}$.  Lastly, every phenomenal state on a system arises from at least one noumenal state, since what is observable has an underlying reality.

In unitary quantum theory, we  have not yet  provided noumenal states, nor therefore any noumenal-phenomenal epimorphism.  One task ahead is to supply it.

\subsection{Partial traces}
The next axioms express the fact that the noumenal state of a system determines the noumenal state of any of its subsystems.

\begin{axiom}[Noumenal partial trace]\label{sc:projectors}
Associated to all pairs of disjoint  systems $A$ and $B$, is a function denoted $\mathrm{tr}_{B}^{AB}$, which is called the \emph{ noumenal partial trace} of system $B$ from system $AB$.
Partial trace $\mathrm{tr}_{B}^{AB}$ is a surjective function from the noumenal space of system $AB$ to the noumenal space of system $A$.

Furthermore, for any noumenal state $N^{ABC}$ belonging to a composite system $ABC$, then the following relation {must} hold between partial traces:
\[ \left(  \mathrm{tr}_{B}^{AB} \circ \mathrm{tr}_{C}^{ABC} \right) \left( N^{ABC} \right) = \mathrm{tr}_{BC}^{ABC} \! \left( N^{ABC} \right) . \]
\end{axiom}

When there is no ambiguity, we shall omit the superscript  and we shall refer to $\mathrm{tr}_{B} $ as \emph{the}
noumenal partial trace  of system~$B$, regardless of the supersystem from which it is obtained by tracing. 
For example, the previous equation will simply be written as:
\[ ( \mathrm{tr}_{B} \circ \mathrm{tr}_{C} )( N^{ABC} ) = \mathrm{tr}_{BC} ( N^{ABC} ) \, . \]

Intuitively if a composite system $AB$ is in noumenal state $N^{AB}$, the noumenal state of system $A$ will be $N^{A} = \mathrm{tr}_{B} ( N^{AB} ) $.

The next axiom expresses the fact that the phenomenal state of a system determines the phenomenal state of any of its subsystems.  

\begin{axiom}[Phenomenal partial trace]\label{tracepheno}
Associated to all pairs of disjoint systems $A$ and $B$, is a function  called the \emph{phenomenal partial trace} of system $B$ from system $AB$. These phenomenal partial traces follow the same requirements as noumenal partial traces, {as stated in axiom~\ref{sc:projectors}},
\emph{mutatis mutandis}.  As an abuse of notation, we also \mbox{denote} the phenomenal partial trace of system $B$ from system $AB$ by $\mathrm{tr}_{A}^{AB}$, since no ambiguity will be possible with the corresponding noumenal partial trace $\mathrm{tr}_{A}^{AB}$.
\end{axiom}
Finally, when there is no ambiguity, we shall omit the superscript  and we shall refer to $\mathrm{tr}_{B} $ as \emph{the}
phenomenal partial trace  of system~$B$, regardless of the supersystem from which it is obtained by tracing. 

Intuitively if a composite system $AB$ is in phenomenal state $\rho^{AB}$, the phenomenal state of system $A$ will be $\rho^{A} = \mathrm{tr}_{B} ( \rho^{AB} ) $.

\paragraph{Partial traces in quantum theory.}
In quantum theory, the phenomenal partial traces  are the familiar partial traces of linear algebra.  However, we do not have yet  noumenal partial traces.

The next axiom expresses the consistency between the noumenal and phenomenal partial traces.
\begin{axiom}[Relation between noumenal and phenomenal partial traces] \label{ax:relnouphe}
For all composite systems $AB $, and all noumenal states $N^{AB} $, the noumenal and phenomenal partial traces of  the system $B$ are related by the following commuting relation:
\[
\mathrm{tr}_{B} \! \left( \varphi\! \left( N^{AB} \right) \right) = \varphi \! \left( \mathrm{tr}_{B} \! \left( N^{AB} \right) \right) .
\]
Here the symbol $\varphi$ stands for $\varphi^{AB} $ on the left side of the equation, but for $\varphi^{A} $ on the right side.  Also the symbol $\mathrm{tr}_{B}$ stands for the phenomenal partial trace on the left side of the equation, but for the noumenal partial trace on the right side. 
\end{axiom}

\subsection{Noumenal product}
 We shall need the following definition to express the next axiom.

\begin{definition}[Compatible states]
Let $AB$ be a composite system, let $N^{A}$ and $N^{B}$  be noumenal states of system $A$ and $B$ respectively.  We say that $N^{A} $ and $N^{B}$ are \emph{compatible states} if there exist a noumenal state $N^{AB}$ of system $AB$ such that $N^{A} = \mathrm{tr}_{B} ( N^{AB} ) $ and $ N^{B} = \mathrm{tr}_{A} ( N^{AB} ) $.
\end{definition}

The next axiom  expresses the requirement  that the state of a system is determined by the state of its subsystems.

\begin{axiom}[Noumenal product]\label{sc:join}
Associated to all {disjoint} systems $A$ and $B$ is an operation, the \emph{noumenal product}, denoted  ``$\odot$'',\footnote{\,Technically, we should write $\odot_{( A, B)}$ to denote the fact that the noumenal product depends on systems $A$ and $B$, but since there will be no confusion, as an abuse of notation, we shall not do so.}
\newcounter{fnjoin}%
\setcounter{fnjoin}{\thefootnote}%
such that for all noumenal states $N^{AB} $,  the following equation holds:
\[ N^{AB} =  \mathrm{tr}_{B} \! \left( N^{AB} \right) \odot  \mathrm{tr}_{A} \! \left( N^{AB} \right) \, . \]
Furthermore, $N^{A} \odot N^{B}$ is defined only  if states $N^{A}$ and $N^{B}$ are compatible.
\end{axiom}

There is no corresponding axiom  for a   phenomenal product, and this  is the most profound distinction between the noumenal and phenomenal level.

Unitary  quantum theory does not yet provide us with a noumenal product, so we shall  need to provide it.

The following theorems give important properties of the noumenal product.

\begin{theorem}\label{traNOU} When $N^{A} \odot N^{B}$ is defined,
\[ \mathrm{tr}_{B} \! \left( N^{A} \odot N^{B} \right) = N^{A} \text{ and } \mathrm{tr}_{A} \! \left( N^{A} \odot N^{B} \right)  = N^{B} \, . \]
\end{theorem}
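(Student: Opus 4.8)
The plan is to show that whenever $N^A \odot N^B$ is defined, it must coincide with a witnessing joint state of the composite system, after which the two claimed identities follow immediately by applying the partial traces to that state.

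First I would unfold the hypothesis. By Axiom~\ref{sc:join}, the expression $N^A \odot N^B$ is defined only when $N^A$ and $N^B$ are compatible states. Invoking the definition of compatible states, I would extract a noumenal state $N^{AB}$ of the composite system $AB$ satisfying $N^A = \mathrm{tr}_B(N^{AB})$ and $N^B = \mathrm{tr}_A(N^{AB})$.

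Next I would apply the defining equation of the noumenal product from Axiom~\ref{sc:join}, which holds for \emph{every} noumenal state of $AB$, to this particular witness $N^{AB}$. This yields
\[ N^{AB} = \mathrm{tr}_B \!\left( N^{AB} \right) \odot \mathrm{tr}_A \!\left( N^{AB} \right) = N^A \odot N^B \, , \]
so the product $N^A \odot N^B$ is in fact equal to the witnessing joint state $N^{AB}$. Substituting this identity back into the two partial traces and using the relations recorded in the previous step gives
\[ \mathrm{tr}_B \!\left( N^A \odot N^B \right) = \mathrm{tr}_B \!\left( N^{AB} \right) = N^A \quad\text{and}\quad \mathrm{tr}_A \!\left( N^A \odot N^B \right) = \mathrm{tr}_A \!\left( N^{AB} \right) = N^B \, , \]
as required.

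The argument is essentially a matter of chaining definitions, so I do not anticipate a serious obstacle. The only point demanding care is the logical order: one must first use compatibility to produce the witness $N^{AB}$, and only then invoke the reconstruction equation of Axiom~\ref{sc:join} to identify $N^A \odot N^B$ with that witness — rather than attempting to compute the partial traces of $N^A \odot N^B$ directly, since the product is a priori known only through the reconstruction property. A secondary subtlety worth noting is that the value of the product must not depend on the choice of witness; this is guaranteed because $\odot$ is declared to be an operation, hence single-valued, so any compatible witness necessarily yields the same state.
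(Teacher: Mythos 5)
Your proposal is correct and follows essentially the same route as the paper's own proof: extract a witness $N^{AB}$ from the compatibility condition, use the reconstruction equation of Axiom~\ref{sc:join} to identify $N^{A} \odot N^{B}$ with $N^{AB}$, and then read off both partial traces. The only differences are cosmetic — you spell out both identities where the paper proves one and declares the other similar, and you make explicit the (correct) observation that single-valuedness of $\odot$ renders the choice of witness immaterial.
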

\begin{proof}
We prove only $\mathrm{tr}_{B} \! \left( N^{A} \odot N^{B} \right)  = N^{A} $; the other statement is similar. Since $N^{A} \odot N^{B}$ is defined, there exist $N^{AB}$  such that $\mathrm{tr}_{B} ( N^{AB}) = N^{A} $ and $\mathrm{tr}_{B} ( N^{AB} ) = N^{B} $.   Therefore $N^{A} \odot N^{B} = \mathrm{tr}_{B} ( N^{AB})  \odot \mathrm{tr}_{A} ( N^{AB})  =  N^{AB}$. We conclude that,
\[
\mathrm{tr}_{B} \! \left( N^{A} \odot N^{B} \right) = \mathrm{tr}_{B} ( N^{AB} ) = N^{A} \, . \]
\end{proof}

\begin{theorem}[Unique Decomposition]\label{thm:uniquedec}
Let $A$ and $B$ be disjoint systems, then
\[ N^{A}_{1} \odot N^{B}_{1} = N^{A}_{2} \odot N^{B}_{2} \; \implies  \;  N^{A}_{1} = N^{A}_{2} \text{ and } N^{B}_{1} = N^{B}_{2}  \, . \] 
\end{theorem}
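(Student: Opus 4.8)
The plan is to derive this immediately from Theorem~\ref{traNOU}, since the unique decomposition property is essentially just the injectivity statement dual to the fact that the partial traces invert the noumenal product. First I note that the hypothesis $N^{A}_{1} \odot N^{B}_{1} = N^{A}_{2} \odot N^{B}_{2}$ presupposes that both noumenal products are defined; by Axiom~\ref{sc:join} this means that $N^{A}_{1}$ is compatible with $N^{B}_{1}$ and that $N^{A}_{2}$ is compatible with $N^{B}_{2}$, so Theorem~\ref{traNOU} applies to each side.

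The core step is to apply the noumenal partial trace $\mathrm{tr}_{B}$ to both sides of the hypothesised equality. Because both sides denote the same noumenal state, their images under $\mathrm{tr}_{B}$ coincide, giving
\[ \mathrm{tr}_{B} \! \left( N^{A}_{1} \odot N^{B}_{1} \right) = \mathrm{tr}_{B} \! \left( N^{A}_{2} \odot N^{B}_{2} \right) . \]
Evaluating each side by Theorem~\ref{traNOU} turns the left-hand side into $N^{A}_{1}$ and the right-hand side into $N^{A}_{2}$, whence $N^{A}_{1} = N^{A}_{2}$. The symmetric step is to apply $\mathrm{tr}_{A}$ to both sides instead; Theorem~\ref{traNOU} then yields $N^{B}_{1} = N^{B}_{2}$, completing the argument.

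There is no real obstacle here: the whole content is packaged into Theorem~\ref{traNOU}, and the proof is a two-line consequence of applying the two partial traces and reading off the values. The only point that deserves a word of care is the implicit well-definedness of both products, which I would mention explicitly so that invoking Theorem~\ref{traNOU} on each side is justified; everything else is a direct substitution.
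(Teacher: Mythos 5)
Your proof is correct and follows exactly the paper's own argument: apply the noumenal partial trace $\mathrm{tr}_{B}$ to both sides and invoke Theorem~\ref{traNOU} to conclude $N^{A}_{1} = N^{A}_{2}$, with the symmetric application of $\mathrm{tr}_{A}$ giving $N^{B}_{1} = N^{B}_{2}$. Your explicit remark about well-definedness of both products is a sound (if brief) addition that the paper leaves implicit.
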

\begin{proof} 
We prove $N^{A}_{1} = N^{B}_{1} $; the other statement is similar. We apply   theorem \ref{traNOU} and get:
\begin{align*}
& ~N^{A}_{1} \\
= &  ~\mathrm{tr}_{B} \left( N^{A}_{1} \odot N^{B}_{1} \right)  \\
= & ~\mathrm{tr}_{B} \left( N^{A}_{2} \odot N^{B}_{2} \right) \\
= & ~N^{A}_{2} \, .  \\[-7ex]
\end{align*}
\end{proof}

\begin{theorem}\label{thm:unicnoume}
For any arbitrary noumenal state $N^{AB}$, there exists a unique $N^{A}$ and a unique $N^{B}$ such that $N^{AB} = N^{A} \odot N^{B}$ .
\end{theorem}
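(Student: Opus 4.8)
The plan is to obtain existence and uniqueness separately, each of which is essentially already packaged in the results preceding the statement. For existence, I would simply invoke the Noumenal product axiom (Axiom~\ref{sc:join}) applied directly to the given state $N^{AB}$. That axiom asserts
\[ N^{AB} = \mathrm{tr}_{B}\!\left( N^{AB} \right) \odot \mathrm{tr}_{A}\!\left( N^{AB} \right), \]
so setting $N^{A} \isdef \mathrm{tr}_{B}(N^{AB})$ and $N^{B} \isdef \mathrm{tr}_{A}(N^{AB})$ furnishes an explicit decomposition. I would note that this product is legitimately defined: by the definition of compatible states, $N^{A}$ and $N^{B}$ are compatible precisely because they arise as the two partial traces of the common state $N^{AB}$, which is exactly the witnessing state required by that definition. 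Thus no side condition is left to check.

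For uniqueness, I would appeal to the Unique Decomposition theorem (Theorem~\ref{thm:uniquedec}). Suppose $N^{AB} = N^{A}_{1} \odot N^{B}_{1} = N^{A}_{2} \odot N^{B}_{2}$ are two decompositions. Then the hypothesis of Theorem~\ref{thm:uniquedec} is met verbatim, and its conclusion gives $N^{A}_{1} = N^{A}_{2}$ and $N^{B}_{1} = N^{B}_{2}$, which is the claimed uniqueness. Combining the two halves completes the argument.

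I do not anticipate any genuine obstacle here, since the statement is a direct corollary of Axiom~\ref{sc:join} and Theorem~\ref{thm:uniquedec}; the only point requiring a moment's care is the bookkeeping observation that the existence clause must exhibit \emph{some} valid decomposition (supplied by the axiom) while the uniqueness clause quantifies over \emph{all} decompositions (handled by the theorem), so that the two results dovetail to yield the ``unique $N^{A}$ and unique $N^{B}$'' phrasing. One could alternatively prove uniqueness self-contained by applying Theorem~\ref{traNOU}: any decomposition $N^{AB} = N^{A} \odot N^{B}$ forces $N^{A} = \mathrm{tr}_{B}(N^{A} \odot N^{B}) = \mathrm{tr}_{B}(N^{AB})$ and symmetrically $N^{B} = \mathrm{tr}_{A}(N^{AB})$, pinning both factors down to the partial traces and thereby recovering uniqueness without separately citing Theorem~\ref{thm:uniquedec}. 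I would most likely present the proof using this second route, since it makes the existence and uniqueness arguments share the single ingredient $N^{A} = \mathrm{tr}_{B}(N^{AB})$, $N^{B} = \mathrm{tr}_{A}(N^{AB})$ and keeps the whole proof to a few lines.
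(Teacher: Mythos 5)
Your proposal is correct and matches the paper's own proof: existence follows from Axiom~\ref{sc:join} by taking $N^{A} = \mathrm{tr}_{B}(N^{AB})$ and $N^{B} = \mathrm{tr}_{A}(N^{AB})$, and uniqueness is exactly Theorem~\ref{thm:uniquedec}. Your alternative uniqueness route via Theorem~\ref{traNOU} is not genuinely different either, since it merely inlines the paper's own proof of Theorem~\ref{thm:uniquedec}.
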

\begin{proof} We first prove existence. By definition of the noumenal product, $N^{A} = \mathrm{tr}_{B} ( N^{AB} ) $ and $N^{B} = \mathrm{tr}_{A} ( N^{AB} ) $ verify $N^{AB} = N^{A} \odot N^{B} $.
Unicity was already obtained in theorem \ref{thm:uniquedec}.
\end{proof}

\subsection{Product of operations}
Intuitively, the next axiom tells us what happens to the noumenal state of  composite system $AB$, if an operation $U$ is performed on system $A$ and an operation $V$ is performed simultaneously on system $B$.

\begin{axiom}[Product of operations]\label{sc:SEPO}
Associated to all disjoint systems $A$ and $B$, is  a \emph{product of operations}, which we \mbox{denote} ``$\times$''.\footnote{ \label{properations} Technically, we should denote this product of operations as $\times_{A, B}$ but we shall consider the dependence on $A$ and $B$ to be implicit. } 

Given  operations $U$ on system $A$ and $V$ on system $B$, $U \times V$, the \emph{product} of $U$ and $V$, is an operation on system $AB$ that satisfies the following relation for any noumenal state $N^{AB} = N^{A} \odot N^{B}$ :
\[
\big( U \times V \big) \! \left( N^{A} \odot N^{B} \right) =  (U N^{A} )  \odot  ( V   N^{B} )  \,  .
\]
\end{axiom}

\paragraph{Product of operation in quantum theory}
In unitary quantum theory, the product of operations is  the usual tensor product $ \otimes$.\footnote{\,Technically, as mentioned earlier,  we should denote this product of operations as $\otimes_{A, B}$ but we shall consider the dependence on $A$ and $B$ to be implicit. } 
\newcounter{fn}%
\setcounter{fn}{\thefootnote} %
 Of course, we will need to verify that the tensor product satisfies axiom \ref{sc:SEPO}.

This concludes our statement of the axioms that characterize a local-realistic theory.

\subsection{No action at a distance versus no-signalling}\label{sc:NSP}
We now state  two theorems that hold for arbitrary local-realistic theories.  This  will help us clarify the  difference between two important facets of locality, which   we shall need   when we analyze Everett's conception of quantum theory and contrast it  with the approach in the present article.

In a local-realistic theory, no action on a system can have any effect whatsover on any other non-interacting system. 
This is formalized as follows:

\begin{theorem}[No-action at a distance]\label{th:NAD}
Let $N^{AB}$ be a noumenal state {of composite system~$AB$}\@. For all operations $U$ on system $A$ and $V$ on system $B$,
\[
\mathrm{tr}_{B} \! \left( \left( U \times V \right)  \left( N^{AB} \right) \right)  = U \,    \mathrm{tr}_{B}  \! \left( N{^{AB}} \right) \,  .
\]
\end{theorem}
\begin{proof}
Let $N^{A} = \mathrm{tr}_{B} ( N^{AB} ) $ and let $N^{B} = \mathrm{tr}_{A} ( N^{AB} ) $.
\begin{align*}
\mathrm{tr}_{B} \! \left( \left( U \times V \right)  \left( N^{AB} \right) \right)  =& \mathrm{tr}_{B}(  ( U \times  V) (  N^{A} \odot N^{B} ) ) \\
                                                                                                                       = & \mathrm{tr}_{B} (  U N^{A} \odot V N^{B}  ) \\
                                                                                                                      = & U   N^{A} \\
                                                                                                                      = & U \,  \mathrm{tr}_{B}  \! \left( N{^{AB}} \right) \,  .
\end{align*}
\end{proof}

 Intuitively the \emph{no-action at a distance theorem} states that no operation $V$ performed on system $B$ can have a noumenal  effect on a remote system~$A$.

This theorem implies  another important, albeit obvious, consequence of a theory being local-realistic.  If no action on a system can have any effect on any other non-interacting system, then no action on a system can have any \emph{observable} effect on any other non-interacting system.  This is formalized in the following theorem, valid in all local-realistic theories, which is a generalisation of a well-known theorem of quantum theory.

\begin{theorem}[No-Signalling Principle]\label{th:NSP}
Let $\rho^{AB}$ be a phenomenal state {of system~$AB$}\@. For all operations $U$ on system $A$ and $V$ on system $B$,
\[
\mathrm{tr}_{B} \! \left( \left( U \times V \right) \cdot  \rho^{AB} \right)  = U \cdot  \mathrm{tr}_{B}  \! \left( \rho{^{AB}} \right) .
\]
\end{theorem}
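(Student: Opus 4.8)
The plan is to prove the phenomenal no-signalling statement by lifting it to the noumenal level, where the no-action at a distance theorem (Theorem~\ref{th:NAD}) already holds, and then projecting the result back down through the noumenal-phenomenal epimorphism. The crucial observation is that, although there is deliberately no phenomenal product axiom to let us mimic the noumenal proof directly, every phenomenal state is the image of some noumenal state, so all the locality work can be delegated to the noumenal machinery.

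First I would invoke the surjectivity of the noumenal-phenomenal epimorphism $\varphi^{AB}$ to pick a noumenal state $N^{AB}$ of system $AB$ with $\varphi(N^{AB}) = \rho^{AB}$. Since $U \times V$ is, by the product of operations axiom (Axiom~\ref{sc:SEPO}), a genuine operation on system $AB$, the homomorphism property of $\varphi^{AB}$ applies to it and gives
\[ (U \times V) \cdot \rho^{AB} = (U \times V) \cdot \varphi(N^{AB}) = \varphi\bigl( (U \times V)(N^{AB}) \bigr) . \]
Applying the phenomenal partial trace and then Axiom~\ref{ax:relnouphe} to commute the trace past $\varphi$ yields
\[ \mathrm{tr}_B\bigl( (U \times V) \cdot \rho^{AB} \bigr) = \mathrm{tr}_B\Bigl( \varphi\bigl( (U \times V)(N^{AB}) \bigr) \Bigr) = \varphi\Bigl( \mathrm{tr}_B\bigl( (U \times V)(N^{AB}) \bigr) \Bigr) , \]
where the trace on the far right is now the \emph{noumenal} partial trace.

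At this point the noumenal no-action at a distance theorem (Theorem~\ref{th:NAD}) applies to the inner expression, replacing $\mathrm{tr}_B((U \times V)(N^{AB}))$ by $U \, \mathrm{tr}_B(N^{AB})$. I would then descend back to the phenomenal level: the homomorphism property of $\varphi^{A}$ turns $\varphi(U \, \mathrm{tr}_B(N^{AB}))$ into $U \cdot \varphi(\mathrm{tr}_B(N^{AB}))$, and a second application of Axiom~\ref{ax:relnouphe} rewrites $\varphi(\mathrm{tr}_B(N^{AB}))$ as $\mathrm{tr}_B(\varphi(N^{AB})) = \mathrm{tr}_B(\rho^{AB})$. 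Chaining these equalities produces $U \cdot \mathrm{tr}_B(\rho^{AB})$, as required.

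The step I expect to require the most care is not any single calculation but the bookkeeping of which object is which: at each occurrence one must keep straight whether $\varphi$ denotes $\varphi^{AB}$ or $\varphi^{A}$, and whether $\mathrm{tr}_B$ denotes the phenomenal or the noumenal partial trace — precisely the two overloaded notations flagged in Axiom~\ref{ax:relnouphe}. The conceptual content, by contrast, is carried entirely by the fact that the phenomenal statement is a shadow of the noumenal one, so no fresh locality argument is needed beyond Theorem~\ref{th:NAD}.
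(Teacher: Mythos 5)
Your proof is correct, and it is worth noting that it does something the paper itself does not: the paper's ``proof'' of Theorem~\ref{th:NSP} is only a citation to the companion article~\cite{STRUCTURE}, so no argument for the no-signalling principle actually appears in this paper. Your lift--apply--descend argument is a valid, self-contained derivation from material that \emph{is} in the paper: surjectivity of $\varphi^{AB}$ gives the preimage $N^{AB}$; Axiom~\ref{sc:SEPO} guarantees $U \times V$ is an operation on $AB$, so the homomorphism property of $\varphi^{AB}$ applies to it; Axiom~\ref{ax:relnouphe} commutes the partial trace past $\varphi$ in both directions (noting, as you do, that $(U \times V)\,N^{AB}$ and $\mathrm{tr}_{B}(N^{AB})$ are again noumenal states, so the axiom and the homomorphism property of $\varphi^{A}$ legitimately apply to them); and Theorem~\ref{th:NAD} does the only locality work. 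This is almost certainly the same argument carried out abstractly in~\cite{STRUCTURE}, since it is the canonical one in this axiomatic framework, but within the present paper your write-up upgrades the theorem from ``deferred'' to ``proved.'' One small point of care you handled correctly: the theorem must hold for \emph{every} phenomenal state $\rho^{AB}$, and your argument does so precisely because the epimorphism is surjective by axiom --- had $\varphi^{AB}$ merely been a homomorphism, the argument would only cover states in its image.
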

\begin{proof}
See ref. \cite{STRUCTURE}.  
\end{proof}

Intuitively the \emph{no-signalling principle}  means that no operation $V$ applied on system $B$ can have a phenomenal (i.e.~observable) effect on a remote system~$A$. 

The no-signalling principle is the phenomenal counterpart of the no-action at a distance theorem. It can be satisfied in a purely phenomenal theory, one without any mention of noumenal states, in contrast with the no-action at distance theorem, which refers to  noumenal states.  \footnote{See ref. \cite{STRUCTURE} for more information on the notion of  a  purely phenomenal theory that satisfies the no-signalling principle.}

\section{Interlude - The universal wave function cannot be a complete description of a  local reality}
\label{Incomplete}
In his doctoral dissertation Hugh Everett III introduced a fascinating idea, the theory of the universal wave function which led to the development of unitary quantum theory  \cite{ThesisEverett,Everett}.

The most important ideas of the  theory are:
\begin{itemize}
\item At all time, the state of universe is described, up to irrelevant phase factor,  by a unit vector in the Hilbert space of the universe, called the \emph{universal wave function} of the universe.
\item All systems, including observers, are quantum systems, there are no classical systems.
\item  All interactions between  systems are described by unitary operations. This implies that  what was previously described as a measurement of a quantum system by an observer is reduced to an ordinary unitary operation involving the observing  system, the observed system, and the environment.  Also, any interaction between what was previously described as two classical systems, for example the sending of classical information,  is also described by a unitary operation.
\item The universal wave function of the universe evolves according to a global unitary operation.
\end{itemize}

Importantly, the goal of Everett was not to give a local-realistic explanation  for quantum theory, but rather to remove the arbitrary distinction between classical systems and quantum systems and therefore to analyze  all interactions between systems as  quantum interactions. 
  Nevertheless he believed that his approach could resolve  the various paradoxes that plagued quantum theory, including the Einstein-Podolsky-Rosen paradox \cite{EPR}.   
As he wrote in his  thesis \cite{ThesisEverett}:
\begin{quote}
It is therefore seen that one observer's observation upon one system
of a correlated, but non-interacting pair of systems, has no effect on the
remote system, in the sense that the outcome or expected outcome of any
experiments by another observer on the remote system are not affected.
Paradoxes like that of Einstein-Rosen-Podolsky which are concerned
with such correlated, non-interacting, systems are thus easily understood
in the present scheme.
\end{quote}
Here Everett is expressing in the language of probabilities  the idea that the theory of the universal wave function obeys the no-signalling principle,  i.e. theorem \ref{th:NSP}. \footnote{See also ref.  \cite{STRUCTURE} for  an explanation of the probabilistic version of the no-signalling principle  and how it relates to the formulation given in the present article.}

In his thesis, Everett described the states of subsystems as mixtures of pure states.  This description determines the density operator of a given subsystem.  Nevertheless he recognized that the mixture of pure state  description  was insufficient to give a  fully  local description of entangled subsystems.  As he wrote in his thesis \cite{ThesisEverett}:
\begin{quote}
However, this representation by a mixture must be regarded as only a
mathematical artifice which, although useful in many cases, is an incomplete description because it ignores phase relations between the separate
elements which actually exist, and which become important in any interactions which involve more than just a subsystem.
\end{quote}
Thus Everett was well aware of the problem of locality in quantum theory, and knew that he did not give a complete local description of the states of subsystems,  merely  a useful description of what is locally observable in a subsystem. Describing the state of subsystems as  mixture of pure states  is silent about  the most important  facet of locality, namely the possibility to retrieve the complete state of a system from the states of its subsystems, which is expressed mathematically by the existence of a noumenal product. This also implies that  the no-action at a distance principle, theorem \ref{th:NAD},  cannot be  addressed with this description.

But what about the universal wave function itself,  is it too a ``mathematical artifice'', insufficient to describe a truly local-realistic theory, but nevertheless a useful description?  Or could we take the wave function as a complete local-realistic description of the state of the universe, and find a truly local description of the states of subsystems?

Thus our purpose is  to answer the following question: ``Can the universal wave function be a complete description of a \emph{local} reality?''  Or, to phrase it according to our terminology, is there a local-realistic model of quantum theory where the noumenal state of the universe be described by the universal wave function proposed by Everett?
Certainly the universal wave function determines everything that is observable in the universe.  Therefore, at the very least,  it  determines the phenomenal state of the universe.  

We now prove that  the universal wave function cannot be a local-realistic description of the noumenal state of the universe.
For this purpose, suppose we are  given a local-realistic model of  pure state  quantum theory.

Thus for each system $A$, we are given:
\begin{enumerate}
\item A Hilbert space $\mathcal{H}^{A}$.
\item A noumenal space, which consist of a  set of noumenal states. 
\item A set of phenomenal states, namely $\textsf{Pure-Phenomenal-Space}^{A}$, whose elements are density operators on system $A$.
\item  A monoid of operations which consist of the  unitary operations on $\mathcal{H}^{A}$.
\item An action of the unitary operations on the phenomenal states, which is the usual action of a unitary operation on a density operator.
\item An action of the unitary operations on the noumenal state space.
\item A phenomenal partial trace, which consist of the usual tracing out of the system $A$ from a density operator on a larger system.
\item A  noumenal partial trace.
\item A noumenal-phenomenal epimorphism.
\end{enumerate}

For each disjoint system $A$ and $B$, we are also given a noumenal product and a product of operation, which is the usual tensor product.

We now make the following three suppositions, which  correspond to the idea that the universal wave function  is a description of the noumenal state of the universe. 
\begin{enumerate}
 \item The set of noumenal states of the global system $S$  are  the  unit vectors of  Hilbert space $\mathcal{H}^{S}$.   
\item  The noumenal action on the global system of unitary operation $U$ on a unit vector $| \tau \rangle$ is simply the usual application of the linear operator $U$ on the  vector $ | \tau \rangle $.  Formally,
\[ U \star | \tau \rangle \isdef U ( | \tau \rangle ) \, . \]
\item The noumenal-phenomenal epimorphism $\varphi^{S}$ on the global system $S$ is the  function that sends a unit vector $| \tau \rangle $  to its corresponding density operator $ | \tau \rangle \! \langle \tau | $.  
\end{enumerate}

The reader can easily verify that we have indeed defined a noumenal action on the global system  and that  $\varphi^{S}$ is indeed a noumenal-phenomenal epimorphism on the global system.

Lastly, we need to make the additional hypothesis that the theory contains at least one non-trivial system.  Recall that a system $A$ is non-trivial if it is neither the empty system nor  the global system $S$.  This hypothesis is made so that the theory contains at least two disjoint systems, each of which containing at least one qubit.
Using this hypothesis, we take an arbitrary non trivial system $A$.  System $A$ being non-trivial imply that its complement $B$ is also  non-trivial. Since neither  $A$ nor $B$  is the empty  system,   their associated  Hilbert space $\mathcal{H}^{A}$ and $\mathcal{H}^{B}$ are of dimension greater than one.
Thus the Hilbert space of system $A$ contains at least two orthonormal  states $ |0 \rangle^{A} $ and $| 1 \rangle^{A}$, and the Hilbert space of its complement, the system $B$, contains at least two orthonormal states $ |0 \rangle^{B} $ and $| 1 \rangle^{B}$.

Our argument will make use of the following Bell States:
\[ \ket{\Psi^+}^{AB} =   \frac{1}{\sqrt{2}} \left( \ket{0 1}^{AB} + \ket{10}^{AB} \right) \]
 and 
\[ \ket{ \Phi^{+}}^{AB} =   \frac{1}{\sqrt{2}} \left( \ket{0 0}^{AB} + \ket{11}^{AB} \right) \, . \]

In a local-realistic theory, all noumenal states can be written as a product state, in particular the Bell State,
$ | \Psi^+ \rangle^{AB} $
can be written as a product state:
 \[  |  \Psi^{+} \rangle^{AB}  =  \lbrack \Psi^{+} \rbrack^{A} \odot \lbrack \Psi^{+} \rbrack^{B} \, . \]
 where $ \lbrack \Psi^{+} \rbrack^{A} = \mathrm{tr}_{B} \left( | \Psi^{+} \rangle^{AB} \right) $ and $ \lbrack \Psi^{+} \rbrack^{B} = \mathrm{tr}_{A} \left( | \Psi^{+} \rangle^{AB} \right) $. Here, quite importantly,  $\mathrm{tr}_{A}$ and $\mathrm{tr}_{B}$ are noumenal traces, therefore $ \lbrack \Psi^{+} \rbrack^{A} $ and $ \lbrack \Phi^{+} \rbrack^{B}$ are noumenal states of system $A$ and $B$ respectively and they might not in general be density operators or mixture of pure states.  
 
According to axiom  \ref{sc:SEPO}, if  we apply the Pauli operation, $X = \begin{pmatrix} 0 & 1 \\ 1 & 0 \end{pmatrix} $, on both parts of the state,  we get
\[ \left( X \otimes X \right) | \Psi^{+} \rangle^{AB}  = ( X  \lbrack \Psi^{+} \rbrack^A)  \odot ( X  \lbrack \Psi^{+} \rbrack^B ) \, . \]
But since 
 \[  |  {\Psi^{+}} \rangle^{AB} = \left( X \otimes X \right) | \Psi^{+} \rangle^{AB} \, , \] 
by tracing out $B$ and applying theorem \ref{traNOU} we obtain 
\[   \lbrack  \Psi^{+} \rbrack^A = X   \lbrack  \Psi^{+} \rbrack^A  \, . \]

However by using axiom \ref{sc:SEPO} again, we obtain 
\[  |  \Phi^+ \rangle^{AB} = \left( X \otimes I \, \right) | \Psi^+ \rangle^{AB} = ( X  \lbrack \Psi^{+} \rbrack^A  ) \odot \lbrack \Psi^{+} \rbrack^B = \lbrack \Psi^{+} \rbrack^A \odot \lbrack \Psi^{+} \rbrack^B = | \Psi^+ \rangle^{AB} \, . \]

Since  $| \Phi^+ \rangle^{AB}  \neq  | \Psi^{+} \rangle^{AB} $, we have reached a contradiction!

It follows that the universal wave function cannot be the complete description of a local universe.
It~merely determines what can be observed, namely the phenomenal state of the universe.
Seen this way, if we believe in a local universe,
the answer to the question posed in the title of the EPR 1935 paper is:
``NO,~[the-then-standard] quantum-mechanical description of physical reality CANNOT be considered complete''.
In~other words, the universal wave function is but a \emph{shadow} of the real world,
which is fully described only at its noumenal level!

 In  appendix  \ref{sc:notinjective} we go further and  show that  in an arbitrary local-realistic model of  quantum theory (containing at least one non-trivial system), for  any non-empty system, the noumenal-phenomenal epimorphism associated with that system  is not injective.  Thus the noumenal space of a (non-empty) system must be distinct from (and larger than)  its corresponding  phenomenal space. 
Therefore  the distinction between the noumenal world and phenomenal world is mathematically necessary.  Under the hypothesis that if a mathematical quantity is necessary to describe reality, that quantity corresponds to something that is
real, and that quantum theory is local-realistic and describes reality, then this larger noumenal world is real and cannot described by its wave function alone.

 \section{Construction of a local-realistic model for quantum theory} \label{sc:construction}
We now construct a local-realistic model for  quantum theory.
Specifically, recall that  we are given a finite lattice of systems. For~each system  of those systems, we are given:
\begin{enumerate}
\item  A Hilbert space,
\item  A phenomenal state space consisting of density operators on the Hilbert space,
\item A monoid of operations  consisting of the  unitary operations on the Hilbert space of the system,
\item  A phenomenal action of the operations on the phenomenal state space, which is the usual application of a unitary operation on a density operator,
\item A partial phenomenal trace, which corresponds to the usual  partial trace of a density operator. 
\end{enumerate}

Our~goal is to construct for each system:
\begin{enumerate}
\item  A noumenal state space composed of noumenal states,
\item  An action of the unitary operations  on the noumenal state space,
\item A noumenal partial trace,
\item A noumenal-phenomenal epimorphism.
\end{enumerate}
For each pair of disjoint systems, we also need   a noumenal product. We also need to verify that all axioms of a local-realistic theory are satisfied.

The proof will be done in two steps, first we will construct a model for pure state quantum theory, then we shall use the objects created in that construction   to construct a model for quantum theory.

Our first goal is to associate to each system $A$ a specific orthonormal basis $\mathcal{B}^{A}$   of its  Hilbert space $\mathcal{H}^{A}$ .   
For that purpose, we  introduce some concepts related to orthonormal bases and their product.

\paragraph{Product of orthonormal bases of Hilbert spaces}
Let $\mathcal{B}_{1}$ be an orthonormal basis of Hilbert space $\mathcal{H}_{1}$ and $\mathcal{B}_{2} $ be an orthonormal basis  of Hilbert space $\mathcal{H}_{2} $. We define the tensor product of $\mathcal{B}_{1} $ and $\mathcal{B}_{2} $ to be
\[ \mathcal{B}_{1} \otimes \mathcal{B}_{2} \stackrel{\text{def}}{=} \{ | v \rangle \otimes | w \rangle \colon | v \rangle \in \mathcal{B}_{1} , | w \rangle \in \mathcal{B}_{2}  \} \, . \] 
It is well-known  that $\mathcal{B}_{1} \otimes \mathcal{B}_{2}$ is an orthonormal basis of Hilbert space $\mathcal{H}_{1} \otimes \mathcal{H}_{2} $.

We shall  now extend the notion of product of two orthonormal bases to a finite product of orthonormal bases. 
Let $\mathcal{A} $ be a set of mutually disjoint systems, and for each system $A$ of $\mathcal{A}$, let  $\mathcal{B}^{A}$ be an orthonormal basis associated to system $A$. We define recursively $\bigotimes\limits_{A \in \mathcal{A} } \mathcal{B}^{A} $ in the following way:
\[ \bigotimes\limits_{A \in \mathcal{A} } \mathcal{B}^{A}  \isdef \begin{cases} \mathcal{B}^{0} & \text{ if } \mathcal{A}  \text{ is the empty set, } \\
                (    \bigotimes\limits_{  A \in \mathcal{A} - \{ B \} }  \! \!  \! \! \! \! \! \mathcal{B}^{A} )  \otimes    \mathcal{B}^{B}  & \text{ if } B \text{ is any element of } \mathcal{A} \, .
\end{cases}\]
It can be verified that $ \bigotimes\limits_{A \in \mathcal{A} } \mathcal{B}^{A} $ is an orthornomal basis of the Hilbert space associated with system $\bigsqcup\limits_{A \in \mathcal{A} } A$.

Furthermore, if $ \mathcal{A} $, $\mathcal{A}'$ are finite sets of disjoint systems such that $\mathcal{A}$ and $\mathcal{A}'$ are disjoint sets, i.e. $\mathcal{A}  \cap \mathcal{A}' = \emptyset$, it can verified that:
\[ \bigotimes\limits_{A \in ( \mathcal{A} \cup \mathcal{A}' ) } \mathcal{B}^{A} = \bigotimes\limits_{ A \in \mathcal{A} } \mathcal{B}^{A} \; \otimes \; \bigotimes\limits_{ A \in \mathcal{A}' } \mathcal{B}^{A} \, .  \]

\begin{theorem}\label{basisortho}
There exists a family $( \mathcal{B}^{A} )_{A \in \mathcal{S}} $ indexed by the set of all systems $\mathcal{S}$, such that for every system $A$,  $\mathcal{B}^{A}$ is an orthonormal basis of the Hilbert space associated with $A$. Furthermore, for all composite system $AB$, the following relation holds between the basis of composite system $AB$ and  the bases of  subsystems $A$ and $B$:
\[ \mathcal{B}^{AB} = \mathcal{B}^{A} \otimes \mathcal{B}^{B} \,  . \]
\end{theorem}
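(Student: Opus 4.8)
The plan is to build the family by making free choices only on the atomic systems and then to propagate those choices to every system through its atomic decomposition, so that the compatibility relation $\mathcal{B}^{AB} = \mathcal{B}^A \otimes \mathcal{B}^B$ reduces to the already-verified behaviour of $\bigotimes$ under disjoint unions of index sets.

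First I would fix, once and for all, an orthonormal basis $\mathcal{B}^A$ of $\mathcal{H}^A$ for each atomic system $A$, together with the base-case basis $\mathcal{B}^0$ of the (one-dimensional) Hilbert space of the empty system. Invoking the atomic decomposition theorem, I would then assign to an arbitrary system $B$ its unique atomic decomposition $\mathcal{A}_B$ and define
\[ \mathcal{B}^B \isdef \bigotimes_{A \in \mathcal{A}_B} \mathcal{B}^A \, . \]
I would check this is consistent with the initial choices: for atomic $A$ we have $\mathcal{A}_A = \{A\}$, and the recursion yields $\mathcal{B}^0 \otimes \mathcal{B}^A$, which coincides with $\mathcal{B}^A$ under the canonical identification of $\mathcal{H}^0 \otimes \mathcal{H}^A$ with $\mathcal{H}^A$; for the empty system, $\mathcal{A}_0 = \emptyset$ and the recursion returns $\mathcal{B}^0$. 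The first assertion of the theorem — that each $\mathcal{B}^B$ is an orthonormal basis of $\mathcal{H}^B$ — is then immediate from the already-noted fact that $\bigotimes_{A \in \mathcal{A}} \mathcal{B}^A$ is an orthonormal basis of the Hilbert space of $\bigsqcup_{A \in \mathcal{A}} A$, applied to $\mathcal{A} = \mathcal{A}_B$ and using $\bigsqcup_{A \in \mathcal{A}_B} A = B$.

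For the compatibility relation, the combinatorial heart of the matter is that the atomic decomposition of a union of disjoint systems is the disjoint union of their atomic decompositions, i.e. $\mathcal{A}_{AB} = \mathcal{A}_A \cup \mathcal{A}_B$ with $\mathcal{A}_A \cap \mathcal{A}_B = \emptyset$. The disjointness of the index sets holds because any atom lying below both $A$ and $B$ would lie below $A \sqcap B = 0$, impossible for a non-empty atom; and the union equality follows from the uniqueness of atomic decomposition together with $\bigsqcup_{A' \in \mathcal{A}_A \cup \mathcal{A}_B} A' = A \sqcup B = AB$. Granting this, I would compute
\[ \mathcal{B}^{AB} = \bigotimes_{A' \in \mathcal{A}_{AB}} \mathcal{B}^{A'} = \bigotimes_{A' \in \mathcal{A}_A \cup \mathcal{A}_B} \mathcal{B}^{A'} = \Big( \bigotimes_{A' \in \mathcal{A}_A} \mathcal{B}^{A'} \Big) \otimes \Big( \bigotimes_{A' \in \mathcal{A}_B} \mathcal{B}^{A'} \Big) = \mathcal{B}^A \otimes \mathcal{B}^B \, , \]
the middle equality being precisely the stated identity for $\bigotimes$ over a disjoint union of index sets, and the outer equalities being the definition of $\mathcal{B}^{\,\cdot\,}$ via atomic decompositions.

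I expect the only genuine obstacle to lie in this combinatorial lemma on atomic decompositions — establishing $\mathcal{A}_{AB} = \mathcal{A}_A \cup \mathcal{A}_B$ as a disjoint union cleanly from the Boolean-lattice axioms and the uniqueness theorem — together with the minor bookkeeping that justifies the identification $\mathcal{H}^0 \otimes \mathcal{H}^A = \mathcal{H}^A$ so that the recursion faithfully reproduces the chosen atomic bases. Everything else is a direct application of the properties already recorded for the product of orthonormal bases.
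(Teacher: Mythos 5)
Your proposal is correct and follows essentially the same route as the paper's own proof: fix arbitrary orthonormal bases on the atomic systems, define $\mathcal{B}^{B} \isdef \bigotimes_{A' \in \mathcal{A}_B} \mathcal{B}^{A'}$ via the unique atomic decomposition, and derive $\mathcal{B}^{AB} = \mathcal{B}^{A} \otimes \mathcal{B}^{B}$ from the identity for $\bigotimes$ over a disjoint union of index sets. The only difference is one of detail: you spell out the combinatorial lemma ($\mathcal{A}_{AB} = \mathcal{A}_A \cup \mathcal{A}_B$ as a disjoint union, via $A \sqcap B = 0$ and uniqueness of decomposition) and the $\mathcal{H}^{0} \otimes \mathcal{H}^{A} = \mathcal{H}^{A}$ bookkeeping, which the paper dismisses as ``easy to see'' and ``trivial'' respectively.
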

\begin{proof}
For each atomic system $A'$, we associate   an arbitrary orthonormal basis $\mathcal{B}^{A'}$.
If $A$ is any non-atomic system with atomic decomposition $\mathcal{A} $, we associate it with the orthonormal basis $ \mathcal{B}^{A} = \bigotimes\limits_{A' \in \mathcal{A} } \mathcal{B}^{A'} $.  Note that trivially, when a system $A$ is atomic, its atomic decomposition is $ \{ A \} $ and $\mathcal{B}^{A}  = \bigotimes\limits_{A' \in \{ A \} } \mathcal{B}^{A'}$.   

Thus, we have obtained a family of orthonormal bases $ ( \mathcal{B}^{A} )_{ A \in \mathcal{S} } $ such that for all system $A$ with atomic decomposition $\mathcal{A}$, the following relation is satisfied:
\[   \mathcal{B}^{A} = \bigotimes\limits_{ A' \in \mathcal{A}  } \mathcal{B}^{A'} \]

We now verify that the family of  orthonormal bases $( \mathcal{B}^{A} )_{A \in \mathcal{S}}$ has the property that  for every composite system $AB$,  we have $\mathcal{B}^{AB} = \mathcal{B}^{A} \otimes \mathcal{B}^{B} $. Let $AB$ be a composite system,    let $ \mathcal{A}$, $\mathcal{A}'$ be the atomic decomposition of $A$ and $B$ respectively.  Thus $ \mathcal{A} \cup \mathcal{A}' $ is the atomic decomposition of $AB$, and it is easy to see that   $ \mathcal{A} $ and $ \mathcal{A}'$ are disjoint sets.
It follows that:
\begin{align*}
 &  \mathcal{B}^{AB}    \\
= &  \bigotimes_{ A' \in \mathcal{A} \cup \mathcal{A }'} \mathcal{B}^{A'} \\
= & \bigotimes_{ A' \in \mathcal{A} } \mathcal{B}^{A'}   \; \otimes \; \bigotimes_{A' \in \mathcal{A}^{'}  } \mathcal{B}^{A'} \\
= & \mathcal{B}^{A} \otimes \mathcal{B}^{B}  \; .
\end{align*}
\end{proof}

We now fix an arbitrary family of bases $(\mathcal{B}^{A} )_{A \in \mathcal{S}} $ satisfying the hypotheses of theorem \ref{basisortho}.  We shall refer to the basis $\mathcal{B}^{A}$ of system $A$, as the \emph{canonical basis} of  system $A$.

 We shall use this fixed family of canonical bases to construct noumenal states.
 To that end we introduce some definitions and concepts.

Given  vector spaces $V$, we shall denote by  $\mathcal{L} ( V) $  the set of linear maps from $V$ to to itself. 
 More formally,
\[ \mathcal{L} ( V )  = \{ L  \;  | \; L \colon V \to V \text{ is a linear map} \} \,  . \]

\begin{definition}[Operator matrix]
Let $A$ be a system with associated  Hilbert space $\mathcal{H}^{A}$, and let $\mathcal{B}^{A} $ be an orthonormal basis of $\mathcal{H}^{A}$.  An application $M^{A} \colon \mathcal{B}^{A} \times \mathcal{B}^{A} \to \mathcal{L} ( \mathcal{H}^{S} ) $, where $\mathcal{H}^{S}$ is the Hilbert space of the global system $S$,  is called an \emph{operator matrix} in basis $\mathcal{B}^{A}$.
\end{definition}

\begin{definition}[Faithfully indexed set] Let $\mathcal{A} = \{ A_{i} \}_{i \in I} $ be a set indexed by $I$.  We say that $\mathcal{A} $ is faithfully indexed, if for all distinct indices $i$ and $j$, their corresponding elements $A_{i}$ and $A_{j}$ are distinct.
\end{definition}

For the purpose of this article, all indexed orthonormal bases will be implicitly assumed to be faithfully indexed.  This condition will be necessary for sums over indices.

\begin{definition}[Entries of an operator matrix]
 Let $A$ be a system with associated  Hilbert space $\mathcal{H}^{A}$, and let $\mathcal{B}^{A}  = \{ | i \rangle \}_{i \in I } $ be an orthonormal basis of $\mathcal{H}^{A}$  indexed by a set $I$.  Let $M^{A}$ be an operator matrix in basis $\mathcal{B}^{A}$. For each pair, $i$ and $j$, of  elements of the index set $I$,   we define the $ij$ \emph{entry} of the operator matrix $M^{A}$ to be 
\[ M^{A}_{ij} \stackrel{\text{def}}{=} M^{A} ( | i \rangle, | j \rangle ) \, . \]
\end{definition}

Importantly, if two operator matrices share the same entries (with respect with the  same indexing  for  $\mathcal{B}^{A}$), it follows that they are equal.  

Note that while the entries of the operator matrix $M^{A}$ are implicitly dependent on the indexing for the basis $\mathcal{B}^{A}$, the operator matrix  itself is not.

\begin{definition}[Evolution Matrix] \label{evoma}
Let $A$ be a system with associated  Hilbert space $\mathcal{H}^{A}$, let $W$ be a unitary operation on the global system $S$, and let $ \mathcal{B}^{A} = \left\{ | i \rangle \right\}_{i \in I} $ be an orthonormal basis of $\mathcal{H}^{A}$.
The \emph{evolution matrix} $  \left[ W \, \right]^{A}_{ \mathcal{B}^{A}} $  of operation $W$  in the  basis $\mathcal{B}^{A}  $ of system $A$   is 
   an operator matrix in basis $\mathcal{B}^{A}$ whose entry $ij$  is the operator
  \[ \left[ W \, \right]^A_{ij} \isdef W^{\dagger}  ( | j \rangle \! \langle i |  \otimes I^{\overline{A}} ) W \,  .  \]
\end{definition}

\begin{definition}[Evolution Space]
Let $A$ be a system. Let $\mathcal{B}$ be an orthonormal basis of the Hilbert space $\mathcal{H}^{A}$ associated with system $A$.  We define the \emph{space of evolution matrices}  for system $A$, in basis $\mathcal{B}$  as
\[ \textsf{Evolution-Space}^{A}_{\mathcal{B} } \isdef \left\{  \left[ W \right]^{A}_{\mathcal{B}} \colon W \in \textsf{Operations}^{S}  \right\}  . \]
\end{definition}

We give additional  properties and theorems  that relate evolution spaces in different bases in appendix \ref{basischange}.

\paragraph{Noumenal states.}
We  are now ready to define \emph{the} noumenal space of system $A$ in the following way:
Let $A$ be a system and let $\mathcal{B}^{A}$ be its canonical basis,
\[ \textsf{Noumenal-Space}^{A} \stackrel{\text{def}}{=} \textsf{Evolution-Space}^{A}_{\mathcal{B}^{A}} \, . \]

When there is no ambiguity, we shall omit the  subscript denoting the basis of an evolution matrix writing $ \lbrack W \rbrack^{A} $ instead of $ \lbrack W \rbrack^{A}_{\mathcal{B}^{A}}$.  Therefore, when $ \lbrack W \rbrack^{A} $ is a noumenal state of system $A$, the omitted basis is always the canonical basis  of system $A$.

Thus an arbitrary noumenal state $N^{A}$ of system $A$  is an evolution matrix in canonical  basis $\mathcal{B}^{A} = \{ |i \rangle \}_{i \in I}$  for which there exists a unitary operation $ W $ belonging to the global system $S$ such that $N^{A} = \lbrack W \rbrack^{A}$. Therefore noumenal state $N^{A}$ is an operator matrix, and  we shall denote its $ij$ entry by $N^{A}_{ij}$ .

\begin{theorem}\label{noumenal}[Fundamental property of noumenal states] For every system  $A$, every unitary operation  $V$ on system, and every unitary operation $W$ on the global system:
 \[ \big[ W \, \big]^A = \big[ \left( I^A \otimes V  \right) W \big]^A \, . \] 
\end{theorem}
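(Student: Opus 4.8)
The plan is to reduce the claimed equality of evolution matrices to an entry-by-entry comparison. As noted right after the definition of the entries of an operator matrix, two operator matrices in the same indexed basis that share all their entries are equal; so it suffices to fix indices $i, j \in I$ and verify that $[W]^A_{ij}$ and $[(I^A \otimes V)W]^A_{ij}$ coincide, where I read $V$ as the given unitary on the complement $\overline{A}$ (so that $I^A \otimes V$ is a unitary on $\mathcal{H}^S$ and the composite $(I^A \otimes V)W$ is a legitimate global operation).

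First I would expand the right-hand entry straight from Definition~\ref{evoma}, obtaining
\[ [(I^A \otimes V)W]^A_{ij} = \big((I^A \otimes V)W\big)^{\dagger} \, \big( |j\rangle\langle i| \otimes I^{\overline{A}} \big) \, \big((I^A \otimes V)W\big) \, . \]
Using $(XY)^{\dagger} = Y^{\dagger}X^{\dagger}$ together with $(I^A \otimes V)^{\dagger} = I^A \otimes V^{\dagger}$, this rewrites as
\[ [(I^A \otimes V)W]^A_{ij} = W^{\dagger} \, (I^A \otimes V^{\dagger}) \big( |j\rangle\langle i| \otimes I^{\overline{A}} \big)(I^A \otimes V) \, W \, . \]

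The heart of the argument is to collapse the middle product of three tensor-factorized operators. Invoking the multiplicativity of the tensor product on operators, $(A_1 \otimes B_1)(A_2 \otimes B_2) = (A_1 A_2) \otimes (B_1 B_2)$, the central factor becomes
\[ (I^A \otimes V^{\dagger})\big(|j\rangle\langle i| \otimes I^{\overline{A}}\big)(I^A \otimes V) = \big(I^A |j\rangle\langle i| \, I^A\big) \otimes \big(V^{\dagger} I^{\overline{A}} V\big) = |j\rangle\langle i| \otimes (V^{\dagger} V) \, . \]
Here the unitarity of $V$ gives $V^{\dagger} V = I^{\overline{A}}$, so the middle simplifies to $|j\rangle\langle i| \otimes I^{\overline{A}}$. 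Substituting back yields exactly $W^{\dagger}\big(|j\rangle\langle i| \otimes I^{\overline{A}}\big)W = [W]^A_{ij}$. Since $i$ and $j$ were arbitrary, all entries agree and the two operator matrices are equal.

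The only point needing care is the bookkeeping of the tensor factors, namely that $I^A \otimes V$ acts as the identity on the $A$-tensorand (leaving $|j\rangle\langle i|$ untouched) while the $V^{\dagger}\,V$ on the $\overline{A}$-tensorand cancels by unitarity; everything else is a routine substitution. Conceptually, what the computation shows is that the evolution-matrix map $[\,\cdot\,]^A$ is blind to any unitary localized on the complement $\overline{A}$, which is precisely the locality one expects of the noumenal state of $A$, and this is what makes the subsequent well-definedness arguments go through.
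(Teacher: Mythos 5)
Your proof is correct and follows essentially the same route as the paper's: an entry-by-entry comparison that reduces to the identity $(I^A \otimes V^{\dagger})\big(\ket{j}\!\bra{i} \otimes I^{\overline{A}}\big)(I^A \otimes V) = \ket{j}\!\bra{i} \otimes I^{\overline{A}}$ via unitarity of $V$ and multiplicativity of the tensor product. The only cosmetic difference is direction --- the paper starts from $[W]^A_{ij}$ and inserts $V^{\dagger} I^{\overline{A}} V$, whereas you expand $[(I^A \otimes V)W]^A_{ij}$ and cancel $V^{\dagger}V$ --- and your reading of $V$ as a unitary on $\overline{A}$ is the intended one.
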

\begin{proof}
Recall that if two matrices share the same entries  then they are equal.
\begin{equation*}
\begin{split}
 \big[ W \, \big]^{A}_{ij}  =& W^{\dagger} \Big( \ket{j}  \! \bra{i} \otimes I^{\overline{A}} \, \Big) W  \\
 =& W^{\dagger} \Big( \ket{j}  \! \bra{i} \otimes \left( V^{\dagger}  I^{\overline{A}} V \right) \, \Big) W \\
 = & W^{\dagger} \left( I^{A} \otimes V^{\dagger} \right) \Big( \ket{j}  \! \bra{i} \otimes I^{\overline{A}} \, \Big) \left( I^{A} \otimes V \right) W \\
 = &  \Big( \left( I^{A} \otimes V \right) W \Big)^{\dagger} \Big( \ket{j}  \! \bra{i} \otimes I^{\overline{A}} \, \Big) \Big( \left( I^{A} \otimes V \right) W \Big) \\
 = & \Big[ \left( I^{A} \otimes V \right) W \Big]^A_{ij}  \, . 
\end{split}
\end{equation*}
\end{proof}

\begin{definition}[Action of a unitary operator on a noumenal state]
Let $A$ be a system, let $\mathcal{B}^{A}= \{ | i \rangle \}_{i \in I} $ be the canonical basis of the associated Hilbert space $\mathcal{H}^{A} $.  Let $U$ be a unitary operation on system $A$.  Let $N^{A}$ be a noumenal state of   $A$.  We define the action of $U$ on $N^{A}$ to be the operator matrix $ U  \star^{A} N^{A} $ in basis $\mathcal{B}^{A}$ whose entry $ij$ is defined as
\[  ( U  \star^{A} N^{A}  )_{ij} \stackrel{\text{def}}{=}  \sum\limits_{k,l} U_{ik} N^{A}_{kl} U^{\dagger}_{lj} \, , \]
where $U_{ik}$  is the entry $ik$ of unitary operator $U$ in basis $\mathcal{B}^{A}$ and $U^{\dagger}_{lj}$ is the $lj$ entry of unitary operator $U^{\dagger}$ in basis $\mathcal{B}^{A}$.  
\end{definition}
  Note that $U_{ik}$ , $U^{\dagger}_{lj}$ are scalars while $N^{A}_{kl}$ is a linear operator. Thus the product $U_{ik} N^{A}_{kl} U^{\dagger}_{lk}$ is obtained through the ordinary multiplication of a linear operator by scalars.

When there is no ambiguity we may omit the superscript of the action symbol and write simply $U \star N^{A}$ instead of $U \star^{A} N^{A}$. Further when there is no ambiguity, we may also omit writing the  $\star$ symbol altogether, for example writing $U N^{A}$ instead of $U \star N^{A} $ .

It remains to prove that we defined a proper action at the noumenal level and thus that axiom \ref{axiom:noumenalaction} is  satisfied. This  follows from the next three theorems.

\begin{theorem}[Fundamental property of the noumenal action]\label{actionproperty} For all system $A$, for all unitary operations on  $A$, and all unitary operations $W$ on the global system $S$, the following holds:
\[ U   [ W \, ]^A  = [ ( U \otimes I^{\overline{A}} ) W \, ]^A \, . \] 
\end{theorem}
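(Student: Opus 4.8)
The plan is to prove the two objects are equal by showing they have the same entries, which suffices because (as noted right after the definition of the entries of an operator matrix) two operator matrices sharing the same entries are equal. So I fix a pair of indices $i,j$ in the index set of the canonical basis $\mathcal{B}^{A} = \{\ket{i}\}_{i\in I}$ and aim to establish $\big(U\,[W]^{A}\big)_{ij} = \big[(U\otimes I^{\overline{A}})W\big]^{A}_{ij}$.

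First I would expand the right-hand side straight from the definition of the evolution matrix, treating $(U\otimes I^{\overline{A}})W$ as the global operation whose evolution matrix is being formed. Its $ij$ entry is $\big((U\otimes I^{\overline{A}})W\big)^{\dagger}\big(\ket{j}\!\bra{i}\otimes I^{\overline{A}}\big)\big((U\otimes I^{\overline{A}})W\big)$, and using $(U\otimes I^{\overline{A}})^{\dagger} = U^{\dagger}\otimes I^{\overline{A}}$ together with the mixed-product rule $(X\otimes Y)(X'\otimes Y') = XX'\otimes YY'$, this collapses to $W^{\dagger}\big(U^{\dagger}\ket{j}\!\bra{i}U \otimes I^{\overline{A}}\big)W$. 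Next I would resolve $U^{\dagger}\ket{j}$ and $\bra{i}U$ in the canonical basis: writing $U^{\dagger}\ket{j} = \sum_{l} U^{\dagger}_{lj}\ket{l}$ and $\bra{i}U = \sum_{k} U_{ik}\bra{k}$ gives $U^{\dagger}\ket{j}\!\bra{i}U = \sum_{k,l} U_{ik}U^{\dagger}_{lj}\,\ket{l}\!\bra{k}$. Since the $U_{ik},U^{\dagger}_{lj}$ are scalars, they pull out through the conjugation by $W$, yielding $\sum_{k,l} U_{ik}U^{\dagger}_{lj}\,W^{\dagger}\big(\ket{l}\!\bra{k}\otimes I^{\overline{A}}\big)W$.

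Finally I would recognize $W^{\dagger}\big(\ket{l}\!\bra{k}\otimes I^{\overline{A}}\big)W$ as precisely the entry $[W]^{A}_{kl}$ of the evolution matrix of $W$ (the definition assigns to entry $kl$ the operator built from $\ket{l}\!\bra{k}$). Substituting and rearranging the scalars, the expression becomes $\sum_{k,l} U_{ik}\,[W]^{A}_{kl}\,U^{\dagger}_{lj}$, which is exactly the definition of $\big(U\star[W]^{A}\big)_{ij}$. This matches the left-hand side and closes the argument.

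The main obstacle is purely the index bookkeeping rather than any conceptual difficulty: one must keep straight that entry $kl$ of an evolution matrix corresponds to the \emph{reversed} outer product $\ket{l}\!\bra{k}$, and correctly place the matrix elements $U_{ik}$ and $U^{\dagger}_{lj}$ (with $U^{\dagger}_{lj} = \overline{U_{jl}}$) so that they occupy the same slots as in the defining formula for the noumenal action. Once the basis expansions of $U^{\dagger}\ket{j}$ and $\bra{i}U$ are written down carefully, everything else is linearity and the tensor mixed-product identity.
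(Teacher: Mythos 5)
Your proof is correct and follows essentially the same route as the paper: an entry-wise verification hinging on the identity $W^{\dagger}\bigl(U^{\dagger}\ket{j}\!\bra{i}U\otimes I^{\overline{A}}\bigr)W$ as the pivot expression, with the index conventions (entry $kl$ built from $\ket{l}\!\bra{k}$, scalars placed as $U_{ik}\,[W]^{A}_{kl}\,U^{\dagger}_{lj}$) handled correctly. The only difference is direction: the paper starts from $\bigl(U\,[W]^{A}\bigr)_{ij}$ and collapses the double sum via the completeness relation $\sum_{m}\ket{m}\!\bra{m}=I^{\overline{A}}$-style resolutions, whereas you start from $\bigl[(U\otimes I^{\overline{A}})W\bigr]^{A}_{ij}$ and create that same double sum by expanding $U^{\dagger}\ket{j}$ and $\bra{i}U$ in the canonical basis, which is the identical computation read backwards.
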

\begin{proof}
\begin{equation*}
\begin{split}
 \left( U  [ W \, ]^A  \right)_{ij} & = \sum\limits_{m,n} U_{im} \big[W \, \big]_{mn}^A U^{\dagger}_{nj} \\
 & = \sum_{m,n} \bra{i}  U   \ket{m} \Big( W^{\dagger} \; \big( \ket{n} \! \bra{m} \otimes I^{\overline{A}} \big) \;  W  \Big)  \bra{n}  U^{\dagger}  \ket{j} \\
 & =  \sum_{m,n} W^{\dagger} \; \Big(  \big( \ket{n} \bra{n} \; U^{\dagger}  \ket{j}  \bra{i}  U  \; \ket{m} \bra{m} \big) \otimes I^{\overline{A}} \Big) \; W  \\
 & = W^{\dagger} \; \Big(  U^{\dagger} \ket{j} \!  \bra{i} U \; \otimes \; I^{\overline{A}} \Big) \; W \\
 & = W^{\dagger} \; \big(  U^{\dagger} \otimes I^{\overline{A}} \big) \big( \ket{j} \!  \bra{i} \; \otimes \; I^{\overline{A}} \big)  \; \big( U \otimes I^{\overline{A}} \big) W \\
 & = \big( ( U \otimes I^{\overline{A}} )  W \big)^{\dagger}  \big( \ket{j} \!  \bra{i} \otimes I^{\overline{A}} \big) \big( U \otimes I^{\overline{A}} \big) W \\
 & = \big[ \big( U \otimes I^{\overline{A}} \big) W \big]^A_{ij}  \, .
\end{split}
\end{equation*}
\end{proof}

The  theorem \ref{actionproperty}  implies that the action of a unitary operation on a noumenal state is  indeed a noumenal state. It follows that for all system $A$, we are justified in writing that 
\[ \star^{A} \colon \textsf{Operations}^{A} \times \textsf{Noumenal-Space}^{A} \to \textsf{Noumenal-Space}^{A} \, . \]

\begin{theorem}\label{assoaction}
For every system $A$ and  all operations $U$ and $V$ on system $A$, and  for every  unitary operation $W$ on the global system $S$,
\[ \left( V U \right)  [ W ]^{A}  = V \! \left( U  [ W ]^{A}  \right)   \; . \]
\end{theorem}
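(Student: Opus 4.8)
The plan is to prove the associativity of the noumenal action, $(VU)[W]^A = V(U[W]^A)$, by reducing it to the already-established Theorem~\ref{actionproperty}. That theorem tells us how the action of a single unitary on an evolution matrix behaves: it absorbs the unitary into the global operation by tensoring with the identity on $\overline{A}$. The key observation is that the right-hand side and left-hand side, after repeated application of this theorem, should both collapse to an evolution matrix of the \emph{same} global operation, at which point equality is immediate.

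First I would expand the right-hand side from the inside out. By Theorem~\ref{actionproperty} applied to $U$, we have $U[W]^A = [(U \otimes I^{\overline{A}})W]^A$. Since this is again an evolution matrix (of the global operation $(U \otimes I^{\overline{A}})W$), I can apply Theorem~\ref{actionproperty} a second time, now with the unitary $V$ and the global operation $(U \otimes I^{\overline{A}})W$ in place of $W$, to obtain
\[ V\left(U[W]^A\right) = V\left[(U \otimes I^{\overline{A}})W\right]^A = \left[(V \otimes I^{\overline{A}})(U \otimes I^{\overline{A}})W\right]^A . \]
For the left-hand side, I apply Theorem~\ref{actionproperty} once, with the composite operation $VU$ in the role of the acting unitary:
\[ (VU)[W]^A = \left[\left((VU) \otimes I^{\overline{A}}\right)W\right]^A . \]

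The proof then closes by matching the two global operations inside the evolution-matrix brackets. The hard part — really the only substantive point — is the identity $(VU)\otimes I^{\overline{A}} = (V\otimes I^{\overline{A}})(U\otimes I^{\overline{A}})$, which is the standard functoriality (interchange law) of the tensor product of linear maps: tensoring respects composition. Granting this, both sides equal $\left[(V\otimes I^{\overline{A}})(U\otimes I^{\overline{A}})W\right]^A$, and since an evolution matrix is determined by its global operation (two operator matrices with the same entries are equal, and here the entries agree because the underlying global operations agree), the two sides coincide. I do not expect any genuine obstacle: the whole argument is bookkeeping around Theorem~\ref{actionproperty}, and the only fact imported from outside is the elementary bilinearity/functoriality of $\otimes$ on operators.
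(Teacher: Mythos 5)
Your proposal is correct and is essentially the paper's own proof: both reduce the statement to repeated application of Theorem~\ref{actionproperty} together with the interchange law \mbox{$(VU)\otimes I^{\overline{A}} = (V\otimes I^{\overline{A}})(U\otimes I^{\overline{A}})$}. The only cosmetic difference is that you expand the two sides separately and match them in the middle, whereas the paper chains the same equalities in a single left-to-right derivation.
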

\begin{proof}
\begin{align*}
 &  ~ ( V U )  [ W]^{A} \\
 = & ~ [ ( ( VU ) \otimes I^{\overline{A}} \, ) W ]^{A} \\
 = & ~ [ ( V \otimes I^{\overline{A}} \, ) ( U \otimes I^{\overline{A}} \, ) W ]^{A} \\
 = & ~ V  [ ( U \otimes I^{\overline{A}} \, ) W ]^{A}  \\
 = & ~ V \! \left( U  [ W ]^{A}  \right)   \; . \\[-7ex]
 \end{align*}
\end{proof}

\begin{theorem}\label{identityaction} For all system $A$, and every   unitary operation $W$ on the global system $S$, 
\[ I^{A}   [ W ]^{A}   = [ W ]^{A}  \; . \]
\end{theorem}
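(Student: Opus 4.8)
The plan is to derive this immediately from Theorem~\ref{actionproperty}, exactly as Theorem~\ref{assoaction} was derived, by specializing the operation acting on $A$ to the identity operation $I^{A}$. Theorem~\ref{actionproperty} states that $U\,[W]^{A} = [\,(U \otimes I^{\overline{A}})\,W\,]^{A}$ for every unitary operation $U$ on $A$; setting $U = I^{A}$ yields
\[ I^{A}\,[W]^{A} = \big[\,(I^{A} \otimes I^{\overline{A}})\,W\,\big]^{A} \, . \]

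The one substantive observation I would then invoke is that $I^{A} \otimes I^{\overline{A}} = I^{S}$, the identity operation on the global system. This holds because $A$ and $\overline{A}$ are disjoint with $A \sqcup \overline{A} = S$ by the complementation property, so $\mathcal{H}^{S} = \mathcal{H}^{A} \otimes \mathcal{H}^{\overline{A}}$, and the tensor product of the two identity operators is the identity on the tensor product space. Combining this with the identity property of the monoid of operations, namely $I^{S} W = W$, I would conclude
\[ I^{A}\,[W]^{A} = [\,I^{S} W\,]^{A} = [W]^{A} \, , \]
which is the desired statement.

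As a fully self-contained alternative, one could instead verify the claim entrywise directly from the definition of the action. Writing the canonical basis as $\mathcal{B}^{A} = \{\,\ket{i}\,\}_{i \in I}$ and using $(I^{A})_{ik} = \bra{i}I^{A}\ket{k} = \delta_{ik}$ together with $(I^{A})^{\dagger}_{lj} = \delta_{lj}$, the defining sum for the action collapses:
\[ \big( I^{A} \star [W]^{A} \big)_{ij} = \sum_{k,l} (I^{A})_{ik}\,[W]^{A}_{kl}\,(I^{A})^{\dagger}_{lj} = [W]^{A}_{ij} \, . \]
Since two operator matrices with identical entries are equal, this again gives $I^{A}\,[W]^{A} = [W]^{A}$.

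I do not anticipate any genuine obstacle here: the result is essentially a corollary of Theorem~\ref{actionproperty}, and together with Theorems~\ref{actionproperty} and~\ref{assoaction} it completes the verification that $\star^{A}$ is a bona fide operation action, thereby discharging Axiom~\ref{axiom:noumenalaction}. The only point meriting an explicit sentence is the identification $I^{A} \otimes I^{\overline{A}} = I^{S}$; I would state it plainly rather than belabor it, and I would favor the first approach for stylistic consistency with the preceding proof.
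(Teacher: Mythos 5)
Your primary argument is exactly the paper's proof: specialize Theorem~\ref{actionproperty} to $U = I^{A}$, identify $I^{A} \otimes I^{\overline{A}} = I^{S}$, and invoke $I^{S}W = W$. The entrywise alternative is also valid but unnecessary; your preferred route is correct and matches the paper's approach precisely.
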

\begin{proof}
\begin{align*}
 &  ~ I^{A}  [ W]^{A}  \\
 = & ~ [ ( I^{A} \otimes I^{\overline{A}} \, ) W ]^{A} \\
  = & ~ \left[ I^{S} \, W \right]^{A} \\
 = & ~  [ W ]^{A}  \; . \\[-7ex]
 \end{align*}
\end{proof}

There is an important lesson hidden in the proof of the three last theorems.  Notice that we have defined an action and verified that it satisfies  the fundamental property of the noumenal action, namely theorem \ref{actionproperty}.  This theorem has been proved by verifying that the entries of both operator matrix $ U \lbrack W \rbrack^{A} $ and $ \lbrack  (U \otimes I^{\overline{A}}) W \rbrack^{A} $ are equal. A proof that examines only entries of matrices   belongs exclusively  to linear algebra.  However, we have then used theorem  \ref{actionproperty} to prove theorems  \ref{assoaction} and \ref{identityaction},  without reference to  entries of matrices.  Such a proof  can be more easily generalized beyond  linear algebra.
Many of the proofs that follow are built  on similar ideas, where generalization is kept in mind. We will prove that a fundamental property is satisfied by certain linear operators by showing an equality by entries, and then use this fundamental property to prove theorems  without any appeal to properties of the entries of the matrices.

There is another key to generalization.  We could have alternatively taken the fundamental property of noumenal actions to be the definition of the noumenal action. This is what was done in ref \cite{STRUCTURE}. Doing this completely eliminates the need to work with entries of matrices.

\begin{definition}[Noumenal partial trace]
Let $AB$ be a composite system, let $\mathcal{B}^{A} = \{ | i \rangle \}_{i \in I} $ be the canonical basis of system $A$, and  $\mathcal{B}^{B} = \{ | k \rangle \}_{k \in K} $ be the canonical basis of system $B$. Recall that the canonical basis of system $AB$ is 
\[ \mathcal{B}^{AB} = \mathcal{B}^{A} \otimes \mathcal{B}^{B} =  \{ |i \rangle \otimes | k \rangle \colon i \in I , k \in K \}=  \{ |i \rangle \otimes |k \rangle \}_{ (i,k) \in I \times K} \, . \]
Let $N^{AB}$ be a noumenal state of system $AB$.
 Define $\mathrm{tr}_{B}^{AB} ( N^{AB} ) $,  the \emph{noumenal partial trace} of  $N^{AB}$ with respect to system $B$, as the operator matrix in basis $\mathcal{B}^{A} $  whose entry $ij$ is defined as:
\[  ( \mathrm{tr}_{B}^{AB} ( N^{AB} ) )_{ij} \stackrel{\text{def}}{=}  \sum\limits_{k \in K}  N^{AB}_{( i, k ) ( j, k ) } \, . \]
\end{definition}

As usual, when there is no ambiguity, we omit the superscript and write $ \mathrm{tr}_{B} $ instead of $\mathrm{tr}_{B}^{AB}$.

It remains to verify that what we have defined satisfies the axiom  \ref{sc:projectors}  for noumenal partial traces.   This is a consequence of  the  next  three theorems.

\begin{theorem}[Fundamental property of the noumenal partial trace]\label{th:noumparttrace}
Let $AB$ be a composite system, let $\lbrack W \rbrack^{AB} $ be an arbitrary noumenal state of system $AB$,
  \[ \mathrm{tr}_B ( [ W \, ]^{AB} ) = [ W \, ]^{A} \, .  \]
\end{theorem}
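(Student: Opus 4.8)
The plan is to prove that the two operator matrices $\mathrm{tr}_B([W]^{AB})$ and $[W]^A$ are equal by showing that they agree entry by entry, invoking the remark (recorded just after the definition of the entries of an operator matrix, and reused in the proof of Theorem~\ref{noumenal}) that operator matrices with identical entries coincide. So I fix indices $i,j \in I$ for the canonical basis $\mathcal{B}^A = \{|i\rangle\}_{i\in I}$ and compute both sides directly.

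First I would expand the left-hand entry using the definition of the noumenal partial trace together with the definition of the evolution matrix in the product basis $\mathcal{B}^{AB} = \mathcal{B}^A \otimes \mathcal{B}^B$, where $\mathcal{B}^B = \{|k\rangle\}_{k\in K}$:
\[ \big(\mathrm{tr}_B([W]^{AB})\big)_{ij} = \sum_{k\in K}[W]^{AB}_{(i,k)(j,k)} = \sum_{k\in K} W^{\dagger}\big((|j\rangle\otimes|k\rangle)(\langle i|\otimes\langle k|)\otimes I^{\overline{AB}}\big)W. \]
By linearity I can pull the conjugation by $W$ outside the finite sum, so the claim reduces to the purely linear-algebraic identity
\[ \sum_{k\in K} |j\rangle\langle i|\otimes|k\rangle\langle k|\otimes I^{\overline{AB}} = |j\rangle\langle i|\otimes I^{\overline{A}}. \]

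To establish this identity I would factor each summand as $(|j\rangle\otimes|k\rangle)(\langle i|\otimes\langle k|) = |j\rangle\langle i|\otimes|k\rangle\langle k|$ and then apply the completeness relation $\sum_{k\in K}|k\rangle\langle k| = I^B$, valid because $\mathcal{B}^B$ is an orthonormal basis of $\mathcal{H}^B$. What makes the bookkeeping work is the system-level identity $\overline{A} = B \sqcup \overline{AB}$ with $B$ and $\overline{AB}$ disjoint (a boolean-lattice fact), which via the tensor-product structure yields the Hilbert-space factorization $\mathcal{H}^{\overline{A}} = \mathcal{H}^B\otimes\mathcal{H}^{\overline{AB}}$ and hence $I^{\overline{A}} = I^B\otimes I^{\overline{AB}}$. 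Combining these, the left-hand side collapses to $|j\rangle\langle i|\otimes I^B\otimes I^{\overline{AB}} = |j\rangle\langle i|\otimes I^{\overline{A}}$. Conjugating by $W$ then gives exactly $W^{\dagger}(|j\rangle\langle i|\otimes I^{\overline{A}})W = [W]^A_{ij}$, so the entries match and the matrices are equal.

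This is essentially a direct computation rather than a proof with a serious obstacle; the only place demanding care is the identity-operator bookkeeping in the last step, namely recognizing that tracing out $B$ is implemented at the level of the evolution matrix precisely by the completeness sum $\sum_{k}|k\rangle\langle k| = I^B$, and that this sum recombines the identity $I^{\overline{AB}}$ on the rest of the world with $I^B$ into $I^{\overline{A}}$. Everything else is the standard factorization of product-basis rank-one operators.
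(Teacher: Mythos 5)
Your proof is correct and follows essentially the same route as the paper's: an entry-by-entry comparison that expands $[W]^{AB}_{(i,k)(j,k)}$, sums over $k$ using the completeness relation $\sum_{k}|k\rangle\langle k| = I^{B}$, and recombines $I^{B}\otimes I^{\overline{AB}}$ into $I^{\overline{A}}$ (the paper writes $C=\overline{AB}$ and $I^{B}\otimes I^{C}=I^{BC}$, which is the same bookkeeping). The only difference is that you spell out the linearity and lattice-identity justifications that the paper leaves implicit.
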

 
\begin{proof}
Recall  that  two matrices of operators are the same if they share the same entries.  Let $C$ be the complement of composite system $AB$; thus $ABC = S$, where $S$ is the global system.
\begin{equation*}
\begin{split} 
\left( \mathrm{tr}_B ( [ W \, ]^{AB} ) \right)_{ij} & = 
 \sum_k [ W \, ]^{A B}_{ \left( i,k \right) \left( j,k \right)}   \\
 & = \sum_k W^{\dagger} \left( \ket{ j} \! \bra{i}^{A} \otimes \ket{ k} \! \bra{k}^{B}\otimes I^{C} \right) W \\
 & = W^{\dagger} \left( \ket{j} \! \bra{i}^A \otimes I^{B} \otimes I^{C} \right) W \\
& = W^{\dagger} ( \ket{j} \! \bra{i}^{A} \otimes I^{BC} ) W \\
 & = [ W \, ]^A_{ij} \, .
 \end{split}
 \end{equation*}
   \end{proof}

Theorem \ref{th:noumparttrace} implies  that a noumenal state of system $AB$ is sent to a noumenal state of system $A$ by noumenal  partial trace $\mathrm{tr}_{B}^{AB} $.  Thus we are justified in writing:
\[ \mathrm{tr}_{B}^{AB} \colon \textsf{Noumenal-Space}^{AB} \to \textsf{Noumenal-Space}^{A} \, . \]

\begin{theorem}\label{th:surjpartialtrace}
Noumenal partial trace $\mathrm{tr}_{B}^{AB}  \colon \textsf{Noumenal-Space}^{AB} \to \textsf{Noumenal-Space}^{A}$ is surjective.
\end{theorem}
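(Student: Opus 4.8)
The plan is to read off surjectivity directly from the fundamental property of the noumenal partial trace, Theorem~\ref{th:noumparttrace}, exploiting the fact that both noumenal spaces involved are \emph{defined} as evolution spaces indexed by the very same set of global unitaries. Concretely, recall that $\textsf{Noumenal-Space}^{A} = \textsf{Evolution-Space}^{A}_{\mathcal{B}^{A}} = \{ [W]^{A} \colon W \in \textsf{Operations}^{S} \}$, and likewise $\textsf{Noumenal-Space}^{AB} = \{ [W]^{AB} \colon W \in \textsf{Operations}^{S} \}$. The point is that the index set $\textsf{Operations}^{S}$ of global unitaries is identical in both cases.

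First I would take an arbitrary target $N^{A} \in \textsf{Noumenal-Space}^{A}$. By the very definition of the noumenal space as an evolution space, there exists a unitary operation $W$ on the global system $S$ such that $N^{A} = [W]^{A}$. Next I would exhibit an explicit preimage in the larger space: the evolution matrix $[W]^{AB}$ is, by definition, an element of $\textsf{Noumenal-Space}^{AB}$, since $W$ is a global unitary.

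It then remains only to apply Theorem~\ref{th:noumparttrace} to this preimage, which gives
\[
\mathrm{tr}_{B}^{AB} \!\left( [W]^{AB} \right) = [W]^{A} = N^{A} \, .
\]
Hence every noumenal state $N^{A}$ of system $A$ lies in the image of $\mathrm{tr}_{B}^{AB}$, so the map is surjective. There is really no hard step here: the entire mathematical content has already been discharged in the entry-by-entry computation of Theorem~\ref{th:noumparttrace}, and what makes the argument go through so cleanly is precisely that the two evolution spaces share the same parametrizing family $\{W \in \textsf{Operations}^{S}\}$, so the \emph{same} $W$ that witnesses $N^{A}$ in $\textsf{Noumenal-Space}^{A}$ simultaneously furnishes its preimage $[W]^{AB}$. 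The only thing one must be careful about is not to conflate the two distinct meanings of $[W]$ (in basis $\mathcal{B}^{AB}$ versus $\mathcal{B}^{A}$), but Theorem~\ref{th:noumparttrace} already reconciles them.
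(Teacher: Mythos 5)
Your proof is correct and is essentially identical to the paper's: both take an arbitrary $N^{A} = [W]^{A}$, use the same global unitary $W$ to exhibit the preimage $[W]^{AB}$, and invoke Theorem~\ref{th:noumparttrace} to conclude $\mathrm{tr}_{B}^{AB}([W]^{AB}) = [W]^{A}$. Your added remark about the shared parametrizing family $\textsf{Operations}^{S}$ is exactly the observation that makes the paper's two-line proof work.
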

\begin{proof} An arbitrary noumenal state of system $A$  is equal to  $\lbrack W \rbrack^{A} $ for some operation $W$ on the global system $S$. Since  $ \lbrack W \rbrack^{AB}$ is a noumenal state of system $AB$ and $ \mathrm{tr}_{B}^{AB} ( \lbrack W \rbrack^{AB} ) = \lbrack W \rbrack^{A} $, it follows that $\mathrm{tr}_{B}^{AB}$ is surjective.
\end{proof}

\begin{theorem}\label{th:otherpartialtrace}
Let $ABC$ be a composite system, let $W$ be a unitary operation on global system $S$.
\[  \mathrm{tr}_{BC} \!\left( \left[ W \right]^{ABC} \right) = \! \left(\mathrm{tr}_{B} \circ \mathrm{tr}_{C} \right) \!\left( \left[ W \right]^{ABC} \right) \, . \]
\end{theorem}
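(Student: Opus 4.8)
The plan is to prove the equality by showing that \emph{both} sides collapse to the single evolution matrix $[W]^A$, invoking the fundamental property of the noumenal partial trace (Theorem~\ref{th:noumparttrace}) rather than grinding through the entries. This is precisely the style advocated in the discussion following Theorems~\ref{assoaction} and~\ref{identityaction}: once a fundamental property has been established by a computation on entries, it should be used abstractly thereafter.

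First I would handle the left-hand side. Writing $ABC = A(BC)$, the systems $A$ and $BC$ are disjoint, so $\mathrm{tr}_{BC}$ is the noumenal partial trace of the subsystem $BC$ from $A(BC)$, and Theorem~\ref{th:noumparttrace} applies with the traced-out subsystem taken to be $BC$. This gives $\mathrm{tr}_{BC}\!\left([W]^{ABC}\right) = [W]^A$ in one step. Next I would handle the right-hand side by applying Theorem~\ref{th:noumparttrace} twice. Writing $ABC = (AB)C$, tracing out $C$ yields $\mathrm{tr}_C\!\left([W]^{ABC}\right) = [W]^{AB}$, and then tracing out $B$ from $AB$ yields $\mathrm{tr}_B\!\left([W]^{AB}\right) = [W]^A$. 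Composing, $(\mathrm{tr}_B \circ \mathrm{tr}_C)\!\left([W]^{ABC}\right) = [W]^A$ as well, so the two sides coincide.

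The main point to justify is that Theorem~\ref{th:noumparttrace}, proved for tracing a single subsystem $B$ out of $AB$, legitimately applies when the traced-out part is itself a composite $BC$. This is sound because, by Theorem~\ref{basisortho}, the canonical bases satisfy $\mathcal{B}^{ABC} = \mathcal{B}^A \otimes \mathcal{B}^{BC}$ and $\mathcal{B}^{BC} = \mathcal{B}^B \otimes \mathcal{B}^C$, so $[W]^{ABC}$ is genuinely the evolution matrix of $A(BC)$ in the product basis $\mathcal{B}^A \otimes \mathcal{B}^{BC}$, with $\mathcal{B}^{BC}$ indexed by the product of the index sets of $\mathcal{B}^B$ and $\mathcal{B}^C$. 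I expect this bookkeeping with the two groupings $(AB)C$ and $A(BC)$, and the need to know the bases factorize consistently, to be the only real obstacle; everything else is immediate.

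If a self-contained verification were preferred, I would instead expand the $ij$ entry of each side directly. Letting $\mathcal{B}^B = \{\ket{k}\}_{k \in K}$ and $\mathcal{B}^C = \{\ket{m}\}_{m \in M}$, and using associativity of the tensor product so that $(\ket{i} \otimes \ket{k}) \otimes \ket{m}$ and $\ket{i} \otimes (\ket{k} \otimes \ket{m})$ index the same basis vector of $\mathcal{H}^{ABC}$, both entries reduce to the same double sum $\sum_{k,m} W^{\dagger}\!\left( \ket{j}\!\bra{i}^A \otimes \ket{k}\!\bra{k}^B \otimes \ket{m}\!\bra{m}^C \otimes I^{\overline{ABC}} \right) W$. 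The completeness relations $\sum_{k} \ket{k}\!\bra{k}^B = I^B$ and $\sum_{m} \ket{m}\!\bra{m}^C = I^C$, together with $I^B \otimes I^C \otimes I^{\overline{ABC}} = I^{\overline{A}}$, then collapse this to $W^{\dagger}\!\left( \ket{j}\!\bra{i}^A \otimes I^{\overline{A}} \right) W = [W]^A_{ij}$, confirming both sides equal $[W]^A$.
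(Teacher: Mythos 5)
Your proposal is correct and is essentially the paper's own proof: the paper likewise chains three applications of Theorem~\ref{th:noumparttrace}, namely $\mathrm{tr}_{BC}\!\left(\left[W\right]^{ABC}\right) = \left[W\right]^{A} = \mathrm{tr}_{B}\!\left(\left[W\right]^{AB}\right) = \mathrm{tr}_{B}\!\left(\mathrm{tr}_{C}\!\left(\left[W\right]^{ABC}\right)\right)$, using exactly the groupings $A(BC)$ and $(AB)C$ that you describe. Your added remark that the canonical bases factor consistently (Theorem~\ref{basisortho}) so that the lemma applies when the traced-out part is itself composite is a point the paper leaves implicit, and is a worthwhile clarification.
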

\begin{proof}
 \begin{align*} \mathrm{tr}_{BC} \!\left( \! \left[ W \right]^{ABC} \right) = & ~ \! \left[ W \right]^{A} \\
 = &  ~  \mathrm{tr}_{B} \!\left( \left[ W \right]^{AB} \right) \\
 = &  ~ \mathrm{tr}_{B} \! \left( \mathrm{tr}_{C} \! \left[ W \right]^{ABC} \right) \\
 = & ~ \! \left( \mathrm{tr}_{B} \circ \mathrm{tr}_{C} \right) \left( \left[ W \right]^{ABC} \right)  \, . 
 \end{align*}
\end{proof}

\begin{definition}[Noumenal Product]
Let $AB$ be a composite system,  let $\mathcal{B}^{A} = \{ | i \rangle \}_{i \in I} $, and   $\mathcal{B}^{B} = \{ | k \rangle \}_{k \in K}$ be the canonical bases of systems $A$ and $B$ respectively. 
Let $N^{A}$ and $N^{B}$ be mutually compatible noumenal states of systems $A$ and $B$ respectively.  
We define $  N^{A} \odot N^{B}  $,  the \emph{noumenal product} of  $N^{A}$ and $N^{B}$, to be the operator matrix in canonical  basis 
\[ \mathcal{B}^{AB}  = \mathcal{B}^{A} \otimes \mathcal{B}^{B} =  \{ |i \rangle \otimes | k \rangle \colon i \in I , k \in K \}=  \{ |i \rangle \otimes |k \rangle \}_{ (i,k) \in I \times K} \, , \] whose entry $(i ,k ) (j ,l) $ is:
\[  ( N^{A} \odot N^{B} )_{(i,k) ( j,l) }  \stackrel{\text{def}}{=} N^{A}_{ij}  N^{B}_{k l} \, . \]
\end{definition}
Note that $N^{A}_{ij} N^{B}_{kl} $ denotes the ordinary composition of   operators $N^{A}_{ij} $ and $N^{B}_{kl} $ both of which are linear operators belonging to  $\mathcal{L} ( \mathcal{H}^{S} )$  .

 \begin{theorem}[Fundamental property of the noumenal product]\label{th:fundjoin}  For any composite system $AB$, for all unitary operation $W$ on global system $S$, 
  \[\Big[ W \, \Big]^A \odot \Big[ W \, \Big]^B \, = \,  \Big[ W \, \Big]^{AB} \, .   \]
  \end{theorem}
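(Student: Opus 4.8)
The plan is to prove the equality of the two operator matrices $\big[W\big]^A \odot \big[W\big]^B$ and $\big[W\big]^{AB}$ by establishing that they have identical entries, invoking the earlier observation that two operator matrices sharing the same entries (in a common indexing of the basis) must be equal. Since the canonical basis of $AB$ is $\mathcal{B}^A \otimes \mathcal{B}^B = \{\ket{i} \otimes \ket{k}\}_{(i,k) \in I \times K}$, both matrices are indexed by pairs, so it suffices to compare the generic entry at $(i,k)(j,l)$.

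For the left-hand side, I would unfold the definition of the noumenal product to obtain $(\big[W\big]^A \odot \big[W\big]^B)_{(i,k)(j,l)} = \big[W\big]^A_{ij}\,\big[W\big]^B_{kl}$, then substitute the evolution-matrix entries. Writing $C$ for the complement of $AB$, so that $\mathcal{H}^S = \mathcal{H}^A \otimes \mathcal{H}^B \otimes \mathcal{H}^C$ with $\overline{A} = BC$ and $\overline{B} = AC$, the product becomes $W^{\dagger}(\ket{j}\!\bra{i}^A \otimes I^{BC})W\,W^{\dagger}(\ket{l}\!\bra{k}^B \otimes I^{AC})W$. The key simplification is to cancel the inner $W W^{\dagger} = I^S$ using unitarity of $W$, and then to evaluate the product of the two elementary operators slot-by-slot across the three tensor factors: on $A$ it is $\ket{j}\!\bra{i}^A I^A = \ket{j}\!\bra{i}^A$, on $B$ it is $I^B \ket{l}\!\bra{k}^B = \ket{l}\!\bra{k}^B$, and on $C$ it is $I^C$. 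This yields $W^{\dagger}(\ket{j}\!\bra{i}^A \otimes \ket{l}\!\bra{k}^B \otimes I^C)W$.

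For the right-hand side, I would apply the definition of the evolution matrix $\big[W\big]^{AB}$ directly in the basis $\mathcal{B}^{AB}$: its entry at $(i,k)(j,l)$ is $W^{\dagger}(\ket{(j,l)}\!\bra{(i,k)} \otimes I^{\overline{AB}})W$, where $\ket{(i,k)} = \ket{i}^A \otimes \ket{k}^B$. Since $\ket{(j,l)}\!\bra{(i,k)} = \ket{j}\!\bra{i}^A \otimes \ket{l}\!\bra{k}^B$ and $\overline{AB} = C$, this is exactly $W^{\dagger}(\ket{j}\!\bra{i}^A \otimes \ket{l}\!\bra{k}^B \otimes I^C)W$, matching the left-hand side. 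Equality of all entries then delivers the claimed identity.

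The main obstacle I anticipate is purely the tensor bookkeeping: keeping the three slots $A$, $B$, $C$ and their identity operators straight, correctly identifying $\overline{A} = BC$, $\overline{B} = AC$, and $\overline{AB} = C$, and respecting the index convention in which the $(i,j)$ entry carries $\ket{j}\!\bra{i}$ rather than $\ket{i}\!\bra{j}$. Once the inner $W W^{\dagger}$ is cancelled and the slot-wise products are computed, the two sides coincide verbatim, so no deeper structural argument is required.
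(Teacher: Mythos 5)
Your proposal is correct and follows essentially the same route as the paper's proof: compare entries in the product basis, expand $\big[W\big]^A_{ij}\big[W\big]^B_{k\ell}$ via the evolution-matrix definition, cancel the inner $WW^{\dagger}$ by unitarity, multiply the elementary operators slot-by-slot across the $A$, $B$, $C$ factors, and recognize the result as the $(i,k)(j,\ell)$ entry of $\big[W\big]^{AB}$. The only difference is cosmetic (you write $I^{BC}$ and $I^{AC}$ where the paper writes $I^{B}\otimes I^{C}$ and $I^{A}\otimes I^{C}$).
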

\begin{proof}
Let $C$ be the complement of composite system $AB$, so  $ABC$ is the global system $S$. 
 \begin{equation*}
\begin{split} 
 & \bigg(   \Big[   W \, \Big]^A \odot \Big[ W \, \Big]^B \bigg)_{ \left( i,k \right) \left( j,\ell \right) }   \\
=& \Big[   W \, \Big]^{A}_{ij} \Big[ W \, \Big]^{B}_{k \ell}  \\
 = & \bigg(  W^{\dagger} \left( \ket{j} \! \bra{i}^A \otimes I^{B} \otimes I^{C} \right) W \bigg)
   \bigg(  W^{\dagger} \left( I^{A} \otimes  \ket{\ell}  \! \bra{k}^{B} \otimes I^{C} \right) W \bigg) \\
  = &  W^{\dagger} \left( \ket{j}  \! \bra{i}^A \otimes I^{B} \otimes I^{C} \right)    \left( I^{A} \otimes 
 \ket{\ell} \!  \bra{k}^{B} \otimes I^{C} \right) W \\
  = &  W^{\dagger} \left( \ket{j}  \! \bra{i}^A \otimes  \ket{\ell}  \! \bra{k}^B \otimes I^{C} \right) W \\
  = &  \Big[ W \, \Big]^{AB}_{(i,k)(j,\ell)}   \, . 
 \end{split}
 \end{equation*}   
\end{proof}

The noumenal product satisfies  axiom \ref{sc:join}  as proved  by the next theorem.

\begin{theorem}\label{th:joinworks}
Let $AB$ a composite system, and let $ \lbrack W \rbrack^{AB} $ be an arbirary noumenal state of system $AB$. It follows that
\[ \mathrm{tr}_{B} ( \lbrack W \rbrack^{AB} ) \odot \mathrm{tr}_{A}  (  \lbrack W \rbrack^{AB} )  = \lbrack W \rbrack^{AB} \, .  \]
\end{theorem}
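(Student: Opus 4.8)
The plan is to reduce the left-hand side entirely to evolution matrices on the individual subsystems and then reassemble them, relying on the two fundamental properties already established rather than on any direct computation with matrix entries. First I would invoke the fundamental property of the noumenal partial trace (Theorem~\ref{th:noumparttrace}), which gives $\mathrm{tr}_{B}(\lbrack W \rbrack^{AB}) = \lbrack W \rbrack^{A}$. Applying the same theorem with the roles of $A$ and $B$ exchanged---legitimate because $AB$ and $BA$ denote the same composite system with identical underlying Hilbert space---yields $\mathrm{tr}_{A}(\lbrack W \rbrack^{AB}) = \lbrack W \rbrack^{B}$.

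With both partial traces rewritten, the noumenal product $\mathrm{tr}_{B}(\lbrack W \rbrack^{AB}) \odot \mathrm{tr}_{A}(\lbrack W \rbrack^{AB})$ becomes $\lbrack W \rbrack^{A} \odot \lbrack W \rbrack^{B}$. This product is well-defined, since the two states are compatible: the state $\lbrack W \rbrack^{AB}$ itself serves as the required common supersystem state. I would then apply the fundamental property of the noumenal product (Theorem~\ref{th:fundjoin}), which asserts precisely that $\lbrack W \rbrack^{A} \odot \lbrack W \rbrack^{B} = \lbrack W \rbrack^{AB}$, closing the chain. Concretely, the argument is the short computation
\begin{align*}
\mathrm{tr}_{B}(\lbrack W \rbrack^{AB}) \odot \mathrm{tr}_{A}(\lbrack W \rbrack^{AB}) &= \lbrack W \rbrack^{A} \odot \lbrack W \rbrack^{B} \\
&= \lbrack W \rbrack^{AB} \, .
\end{align*}

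There is essentially no obstacle here beyond bookkeeping. Unlike the two fundamental properties on which it rests, this statement requires no manipulation of operator-matrix entries, because all of that entrywise work was already discharged in Theorems~\ref{th:noumparttrace} and~\ref{th:fundjoin}. The only points deserving a word of care are the implicit appeal to the symmetric form of the partial-trace property (obtained by interchanging $A$ and $B$) and the verification that the noumenal product is applied to compatible arguments---both of which follow immediately from taking $\lbrack W \rbrack^{AB}$ as witness. This is exactly the ``prove a fundamental property by entries, then reason abstractly'' pattern highlighted after Theorem~\ref{identityaction}.
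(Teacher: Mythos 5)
Your proof is correct and is essentially identical to the paper's: both reduce the partial traces to $\lbrack W \rbrack^{A}$ and $\lbrack W \rbrack^{B}$ via Theorem~\ref{th:noumparttrace} and then conclude with the fundamental property of the noumenal product, Theorem~\ref{th:fundjoin}. The extra remarks on compatibility and on exchanging the roles of $A$ and $B$ are sound but are left implicit in the paper's one-line proof.
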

\begin{proof}
\[ \mathrm{tr}_{B} ( \lbrack W \rbrack^{AB} ) \odot \mathrm{tr}_{A}  (  \lbrack W \rbrack^{AB} ) = \lbrack W \rbrack^{A} \odot \lbrack W \rbrack^{B} = \lbrack W \rbrack^{AB} \, .  \]
\end{proof}

With the next theorem we prove that the product of operations  satisfies axiom \ref{sc:SEPO}.
\begin{theorem}\label{th:UVW}
Let $AB$ be a composite system, let $U$ and $V$ be unitary operations on $A$ and  $B$ respectively and  let $ \lbrack W \rbrack^{AB} = \lbrack W \rbrack^{A} \, \odot \: \lbrack W \rbrack^{B} $ be  an arbitrary noumenal state of system $AB$, the following equation is satisfied: 
\[ \left( U \otimes V \right)  \left( \left[ W \right]^{A} \odot \, \left[ W \right]^{B} \right) = U \left[ W \right]^{A}  \, \odot \; V  \left[ W \right]^{B} \, . \]
\end{theorem}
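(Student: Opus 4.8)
The plan is to show that both sides of the claimed identity equal the single evolution matrix $[(U \otimes V \otimes I^C)W]^{AB}$, where $C = \overline{AB}$ is the complement of the composite system $AB$, so that $ABC = S$. This reduces the entire statement to bookkeeping with three fundamental properties already at hand: the fundamental property of the noumenal product (Theorem \ref{th:fundjoin}), the fundamental property of the noumenal action (Theorem \ref{actionproperty}), and the fundamental property of noumenal states (Theorem \ref{noumenal}). No manipulation of matrix entries is needed, in keeping with the style advocated after Theorem \ref{identityaction}.

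First I would treat the left-hand side. Using Theorem \ref{th:fundjoin} to collapse $[W]^A \odot [W]^B = [W]^{AB}$, the left-hand side becomes $(U \otimes V)[W]^{AB}$. Since $U \otimes V$ is a unitary operation on $AB$ and $\overline{AB} = C$, Theorem \ref{actionproperty} applied to system $AB$ gives
\[ (U \otimes V)[W]^{AB} = [((U \otimes V) \otimes I^C)W]^{AB} = [(U \otimes V \otimes I^C)W]^{AB}. \]

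Next I would treat the right-hand side, which is the delicate part. The obstacle is that $U[W]^A$ and $V[W]^B$ are, at face value, evolution matrices of two \emph{different} global unitaries, namely $(U \otimes I^{\overline A})W$ and $(V \otimes I^{\overline B})W$, so Theorem \ref{th:fundjoin} cannot be applied directly to recombine them. The way around this is to exhibit a single global unitary $W' = (U \otimes V \otimes I^C)W$ whose $A$- and $B$-evolution matrices are exactly $U[W]^A$ and $V[W]^B$. To see $[W']^A = U[W]^A$, factor $U \otimes V \otimes I^C = (U \otimes I^{\overline A})(I^A \otimes (V \otimes I^C))$, where $V \otimes I^C$ is a unitary on $\overline A = BC$; Theorem \ref{actionproperty} then peels off the factor $U$, and Theorem \ref{noumenal} absorbs the remaining factor $I^A \otimes (V \otimes I^C)$, yielding $[W']^A = U[W]^A$. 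The symmetric factorisation $U \otimes V \otimes I^C = (V \otimes I^{\overline B})(I^B \otimes (U \otimes I^C))$ gives $[W']^B = V[W]^B$ in the same way.

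Finally, applying Theorem \ref{th:fundjoin} to $W'$ itself gives
\[ U[W]^A \odot V[W]^B = [W']^A \odot [W']^B = [W']^{AB} = [(U \otimes V \otimes I^C)W]^{AB}, \]
which coincides with the expression already obtained for the left-hand side, completing the argument. The main obstacle, as noted, is the third step: recognising that the gauge freedom expressed by Theorem \ref{noumenal} -- that a unitary acting on the complement does not alter the evolution matrix -- is precisely what allows the two separately evolved factors to be reinterpreted as the $A$- and $B$-parts of one common global evolution $W'$.
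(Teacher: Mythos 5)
Your proposal is correct and follows essentially the same route as the paper's proof: both reduce everything to the single global unitary $W' = (U \otimes V \otimes I^{C})W$ via Theorem~\ref{th:fundjoin}, Theorem~\ref{actionproperty}, and Theorem~\ref{noumenal}, the only difference being that the paper writes one chain of equalities from left side to right side while you meet in the middle at $[W']^{AB}$. The factorisation you highlight as the delicate step (peeling off $U$ and absorbing $I^{A} \otimes (V \otimes I^{C})$) is exactly what the paper does implicitly in its last two lines.
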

\begin{proof}
Let $C$ be the complement of composite system $AB$, thus $ABC = S$, where $S$ is the global system.
\begin{align*} & \left( U \otimes V \right) \left( \left[ W \right]^{A} \odot \, \left[ W \right]^{B} \right) \\
= & \left( U \otimes V \right) \left[ W \right]^{AB}  \\
=& \left[ \left( U \otimes V \otimes I^{C} \, \right) W \right]^{AB} \\
=& \left[ \left( U \otimes V \otimes I^{C} \, \right) W \right]^{A} \, \odot \; \left[ \left( U \otimes V \otimes I^{C} \,\right) W \right]^{B} \\
=& \left[ \left( U \otimes I^{B} \otimes I^{C} \, \right) W \right]^{A} \, \odot \; \left[ \left( I^{A} \otimes V \otimes I^{C} \,\right) W \right]^{B} \\
=& ~ U  \left[ W \right]^{A} \; \odot \; V  \left[ W \right]^{B}  \, .
\end{align*}
\end{proof}

\begin{definition}[Noumenal-phenomenal homomorphism]                               
For each phenomenal state $\rho$ of the global system $S$,
we define the application of  homomorphism $\phi_{\rho}^{A}$ on noumenal state $N^{A}$ of system $A$ to be the density  operator $ \phi_{\rho}^{A} ( N^{A} ) $ for which the $ij$ entry    of the corresponding density matrix in canonical basis $\mathcal{B}^{A}$ satisfies:
\[  \left( \phi_{\rho}^{A} (  N^{A}  ) \right)_{ij}   \isdef  \mathrm{tr} \left( N_{ij}^{A}  \rho  \right) \, . \]
\end{definition}

We now  verify that in quantum theory, for each $\rho$ in the phenomenal space of the global system $S$, function $\phi_{\rho}^{A}$ is indeed a noumenal-phenomenal homomorphism on system $A$.   This follows  from the two next theorems. 

\begin{theorem} [Fundamental property of the noumenal-phenomenal homomorphism] \label{th:fundnouphehomo}  For every system $A$, every phenomenal state $\rho$ of the global system $S$ and every unitary operation $W$ of the global system, 
\[ \phi_{\rho}^{A} ( \left[ W \right]^{A})    = \mathrm{tr}_{\overline{A}} \left( W  \cdot \rho  \right) \, .  \]
\end{theorem}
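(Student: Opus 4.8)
The plan is to prove the identity entrywise in the canonical basis $\mathcal{B}^A = \{|i\rangle\}_{i\in I}$ of $\mathcal{H}^A$. Both sides are density operators on $\mathcal{H}^A$ (the left by the definition of $\phi_\rho^A$, the right as the partial trace of the density operator $W\cdot\rho$), so by the remark that an operator matrix is determined by its entries it suffices to check that the $ij$ entries coincide for all $i,j\in I$. I would first unfold the left-hand side using the definition of $\phi_\rho^A$ together with Definition~\ref{evoma}:
\[ \left(\phi_\rho^A([W]^A)\right)_{ij} = \mathrm{tr}\!\left([W]^A_{ij}\,\rho\right) = \mathrm{tr}\!\left(W^{\dagger}(|j\rangle\langle i|\otimes I^{\overline{A}})\,W\rho\right). \]
By cyclicity of the trace this equals $\mathrm{tr}\!\left((|j\rangle\langle i|\otimes I^{\overline{A}})\,W\rho W^{\dagger}\right)$, and since the phenomenal action in quantum theory is $W\cdot\rho = W\rho W^{\dagger}$, the left-hand entry becomes $\mathrm{tr}\!\left((|j\rangle\langle i|\otimes I^{\overline{A}})(W\cdot\rho)\right)$.

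Next I would compute the right-hand entry. Writing $\sigma = W\cdot\rho$ and fixing an orthonormal basis $\{|k\rangle\}_{k\in K}$ of $\mathcal{H}^{\overline{A}}$, the standard entry formula for the phenomenal partial trace gives
\[ \left(\mathrm{tr}_{\overline{A}}(\sigma)\right)_{ij} = \langle i|\,\mathrm{tr}_{\overline{A}}(\sigma)\,|j\rangle = \sum_{k}(\langle i|\otimes\langle k|)\,\sigma\,(|j\rangle\otimes|k\rangle). \]
The remaining task is to identify this with the quantity $\mathrm{tr}\!\left((|j\rangle\langle i|\otimes I^{\overline{A}})\sigma\right)$ produced above. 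Expanding that trace over the product basis $\{|m\rangle\otimes|k\rangle\}$ and using the contraction $(\langle m|\otimes\langle k|)(|j\rangle\langle i|\otimes I^{\overline{A}}) = \delta_{mj}(\langle i|\otimes\langle k|)$ collapses the sum over $m$, leaving precisely $\sum_k(\langle i|\otimes\langle k|)\sigma(|j\rangle\otimes|k\rangle)$. Chaining the equalities then yields $\left(\phi_\rho^A([W]^A)\right)_{ij} = \left(\mathrm{tr}_{\overline{A}}(W\cdot\rho)\right)_{ij}$ for all $i,j$, and hence the theorem.

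I expect no deep obstacle: the argument is a bookkeeping exercise in linear algebra. The one point demanding care is the transposed-looking index convention $[W]^A_{ij} = W^{\dagger}(|j\rangle\langle i|\otimes I^{\overline{A}})W$, where $j$ labels the ket and $i$ the bra; one must track this so that the Kronecker delta arising from $(\langle m|\otimes\langle k|)(|j\rangle\langle i|\otimes I^{\overline{A}})$ contracts the correct summation index and the entry formulas line up. As the paper stresses, this fundamental property is established by an explicit entrywise computation, after which the higher-level facts (that $\phi_\rho^A$ is genuinely a homomorphism, and so on) can be obtained without any further appeal to matrix entries.
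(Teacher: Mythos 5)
Your proposal is correct and follows essentially the same route as the paper's own proof: an entrywise computation that unfolds $\phi_{\rho}^{A}([W]^{A})_{ij} = \mathrm{tr}([W]^{A}_{ij}\rho)$, uses cyclicity of the trace to produce $W\rho W^{\dagger} = W\cdot\rho$, and identifies the result with the $ij$ entry of $\mathrm{tr}_{\overline{A}}(W\cdot\rho)$ via a sum over a product basis. The only cosmetic difference is ordering --- the paper expands $I^{\overline{A}} = \sum_{k}\ket{k}\!\bra{k}$ before applying cyclicity, whereas you apply cyclicity first and then expand the trace over the product basis --- which is the same computation rearranged.
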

\begin{proof}
\begin{equation*}
\begin{split} 
 \Big( \phi_{\rho}^{A} \left( \left[ W \, \right]^{A} \right)  \Big)_{ij} & = \mathrm{tr} \left(  \big[ W \, \big]^{A}_{ij} \rho \right) \\
 &= \mathrm{tr} \left( W^{\dagger} \left( \ket{j} \! \bra{i} \otimes I^{\overline{A}} \right) W \rho \right) \\
 &= \mathrm{tr} \left( W^{\dagger} \left( \ket{j}  \! \bra{i} \otimes \sum\limits_{k} \ket{k} \!  \bra{k} \right) W \rho \right) \\
  & = \mathrm{tr} \left( \sum\limits_{k} \left( \ket{j} \! \bra{i} \otimes \ket{k} \! \bra{k} \right) W \rho W^{\dagger} \right) \\
  &=  \sum\limits_{k} \left( \bra{i} \otimes \bra{k} \right) W \rho W^{\dagger} \left( \ket{j} \otimes \ket{k} \right) \\
  &=  \left( \mathrm{tr}_{\overline{A}} \left( W \rho W^{\dagger} \right) \right)_{ij}     \\
  &=  \left( \;  \mathrm{tr}_{\overline{A}}  \left( W \cdot \rho  \right) \; \right)_{ij} \, .
\end{split}
\end{equation*}
\end{proof}

Theorem \ref{th:fundnouphehomo} shows that for all systems $A$, the application of $\phi_{\rho}^{A}$ on an arbitrary noumenal state $\lbrack W \rbrack^{A} $ indeed gives a density operator on system $A$, namely the partial trace applied to  the complement of system $A$ on the density operator $ W \cdot \rho = W \rho W^{\dagger} $. Therefore we may write:  
\[ \phi_{\rho}^{A} \colon \textsf{Noumenal-Space}^{A} \to \textsf{Phenomenal-Space}^{A}\, . \]

\begin{theorem}\label{th:ishomo} For each $\rho$ in the phenomenal space of the global system, the function $\phi_{\rho}^{A}$ is a noumenal-phenomenal homomorphism and satisfies
\[ U \cdot   \phi_{\rho}^{A} \! \left( \left[ W \right]^{A} \right)  = \phi_{\rho}^{A} \! \left(  U \left[ W \right]^{A} \right) \]
for any system $A$,   noumenal state $\left[ W \right]^{A}$ and unitary operation $U$ on $A$.
  \end{theorem}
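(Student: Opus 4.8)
The plan is to establish the displayed identity $U \cdot \phi_{\rho}^{A}(\left[W\right]^{A}) = \phi_{\rho}^{A}(U\left[W\right]^{A})$ for an arbitrary global operation $W$. Since every noumenal state of $A$ is of the form $\left[W\right]^{A}$ for some $W$, this identity, together with Theorem~\ref{th:fundnouphehomo} (which already guarantees that $\phi_{\rho}^{A}$ maps the noumenal space into the phenomenal space), is exactly the defining condition of a noumenal-phenomenal homomorphism. Thus the entire content of the theorem reduces to verifying this one equation.

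First I would rewrite the argument on the right-hand side using the fundamental property of the noumenal action, Theorem~\ref{actionproperty}, which gives $U\left[W\right]^{A} = \left[(U \otimes I^{\overline{A}})W\right]^{A}$. Applying the fundamental property of the homomorphism, Theorem~\ref{th:fundnouphehomo}, to the global operation $(U \otimes I^{\overline{A}})W$ then converts the left-hand side into a partial trace:
\[ \phi_{\rho}^{A}(U\left[W\right]^{A}) = \mathrm{tr}_{\overline{A}}\big( (U \otimes I^{\overline{A}})W \cdot \rho \big) \, . \]
Expanding the phenomenal action as $(U \otimes I^{\overline{A}})W \cdot \rho = (U \otimes I^{\overline{A}})(W\rho W^{\dagger})(U \otimes I^{\overline{A}})^{\dagger}$, the argument of the trace becomes $(U \otimes I^{\overline{A}}) \cdot (W \cdot \rho)$.

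The crux is then to pull the unitary $U$ out through the partial trace over $\overline{A}$, that is, to invoke $\mathrm{tr}_{\overline{A}}\big( (U \otimes I^{\overline{A}}) \cdot \sigma \big) = U \cdot \mathrm{tr}_{\overline{A}}(\sigma)$ with $\sigma = W \cdot \rho$. This is precisely the no-signalling principle, Theorem~\ref{th:NSP}, applied to the global system $S = A\overline{A}$ with the trivial operation $I^{\overline{A}}$ on $\overline{A}$, recalling that the product of operations in quantum theory is $\otimes$. Feeding this in, and using Theorem~\ref{th:fundnouphehomo} once more in the reverse direction $\mathrm{tr}_{\overline{A}}(W \cdot \rho) = \phi_{\rho}^{A}(\left[W\right]^{A})$, yields $U \cdot \phi_{\rho}^{A}(\left[W\right]^{A})$ and closes the chain.

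The main obstacle is this partial-trace commutation step; everything else is bookkeeping with fundamental properties already established. If one prefers not to rely on the companion paper for Theorem~\ref{th:NSP}, the same identity can be obtained directly and more cheaply by comparing entries: since $(U\left[W\right]^{A})_{ij} = \sum_{k,l} U_{ik}\left[W\right]^{A}_{kl}U^{\dagger}_{lj}$ and $\phi_{\rho}^{A}$ acts entrywise by $N \mapsto \mathrm{tr}(N\rho)$, linearity of the trace gives $(\phi_{\rho}^{A}(U\left[W\right]^{A}))_{ij} = \sum_{k,l} U_{ik}(\phi_{\rho}^{A}(\left[W\right]^{A}))_{kl}U^{\dagger}_{lj}$, which is exactly the $ij$ entry of $U \cdot \phi_{\rho}^{A}(\left[W\right]^{A}) = U\,\phi_{\rho}^{A}(\left[W\right]^{A})\,U^{\dagger}$. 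I would present the fundamental-property version in the main text to remain consistent with the paper's stated philosophy of proving results without appeal to matrix entries, relegating the entrywise check to a remark.
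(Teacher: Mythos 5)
Your main chain is the paper's own proof read in reverse: the paper starts from $U \cdot \phi_{\rho}^{A}\!\left( \left[ W \right]^{A} \right)$, rewrites it as $U \cdot \mathrm{tr}_{\overline{A}}\!\left( W \cdot \rho \right)$ via Theorem~\ref{th:fundnouphehomo}, pulls $U$ inside the partial trace to get $\mathrm{tr}_{\overline{A}}\!\left( \left( \left( U \otimes I^{\overline{A}} \right) W \right) \cdot \rho \right)$, and then applies Theorem~\ref{th:fundnouphehomo} and Theorem~\ref{actionproperty} to land on $\phi_{\rho}^{A}\!\left( U \left[ W \right]^{A} \right)$; you run exactly these steps from the other end. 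Same two fundamental-property lemmas, same partial-trace commutation as the pivot, so the route is essentially identical.

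One caveat, however, on how you justify that pivot. Citing Theorem~\ref{th:NSP} is circular within the logical order of the paper: that theorem is stated for local-realistic theories and is derived (in the companion article) from the axioms of local realism --- in particular from the existence of a noumenal-phenomenal homomorphism compatible with partial traces, which is precisely what Theorem~\ref{th:ishomo} is in the middle of establishing for the model under construction. What is actually needed, and what the paper uses silently in its second equality, is the elementary linear-algebra identity $\mathrm{tr}_{\overline{A}}\!\left( \left( U \otimes I^{\overline{A}} \right) \sigma \left( U^{\dagger} \otimes I^{\overline{A}} \right) \right) = U \, \mathrm{tr}_{\overline{A}}\!\left( \sigma \right) U^{\dagger}$ for any density operator $\sigma$ on $\mathcal{H}^{S}$ --- the well-known quantum no-signalling fact, provable directly without any noumenal apparatus, not the axiomatic Theorem~\ref{th:NSP}. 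Your entrywise fallback is exactly such a direct proof and is correct as written (it only uses the definitions of $\star^{A}$ and $\phi_{\rho}^{A}$ plus linearity of the trace), so the proposal stands once the pivot is justified that way rather than by appeal to Theorem~\ref{th:NSP}.
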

\begin{proof} 
\begin{align*}
\begin{split}
   & U \cdot ( \phi_{\rho}^{A} (  [ W ]^A  ) ) \\
 = & U  \cdot  ( \mathrm{tr}_{\overline{A}} ( W \cdot \rho  ) )  \\
 = & \mathrm{tr}_{\overline{A}} ( \; ( (  U \otimes I^{\overline{A}} ) W ) \cdot \rho  \;  ) \\
 = & \phi_{\rho}^{A}   ( \,  [ (  U \otimes I^{\overline{A}} ) W ]^{A} \,  ) \\
 = &  \phi_{\rho}^{A} \big( U \big[ W \big]^A \big)  \, .
 \end{split}
\end{align*}
\end{proof}

When  there is no ambiguity,  we shall   omit writing the superscript indicating the system on which a noumenal-phenomenal homorphism act.  Thus we shall write $\phi_{\rho}$ instead of $\phi_{\rho}^{A}$.

The next theorem shows that for each $\rho$ in the phenomenal space of the global system $S$, the  noumenal-phenomenal homomorphism $\phi_{\rho}$ satisfies  axiom    \ref{ax:relnouphe} on the relation between noumenal and phenomenal partial traces.
\begin{theorem}\label{th:relnouphetrace}  Let $\rho$  be a phenomenal state belonging to  the global system $S$,
\[ \mathrm{tr}_{B} \! \left( \phi_{\rho} \! \left(  \left[  W \right]^{AB} \right) \right) =  \phi_{\rho} \! \left(  \mathrm{tr}_{B} \! \left( \left[ W \right]^{AB} \right) \right)
\]
for any composite  system $AB$ and   noumenal state $\left[ W \right]^{AB}$.
\end{theorem}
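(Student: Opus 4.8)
The plan is to reduce both sides of the claimed identity to the single density operator $\mathrm{tr}_{\overline{A}}(W\cdot\rho)$ by invoking the two fundamental properties already established, namely Theorem~\ref{th:fundnouphehomo} (which evaluates $\phi_\rho$ on an evolution matrix as a partial trace of $W\cdot\rho$) and Theorem~\ref{th:noumparttrace} (which says the noumenal partial trace sends $[W]^{AB}$ to $[W]^A$). Because every noumenal state of $AB$ has the form $[W]^{AB}$ for some unitary $W$ on $S$, it suffices to verify the equation on such states, which is exactly how the statement is phrased.

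First I would treat the right-hand side, since it is the shorter computation. Applying Theorem~\ref{th:noumparttrace} gives $\mathrm{tr}_B([W]^{AB}) = [W]^A$ (here $\mathrm{tr}_B$ is the \emph{noumenal} partial trace), and then Theorem~\ref{th:fundnouphehomo} applied to system $A$ yields $\phi_\rho([W]^A) = \mathrm{tr}_{\overline{A}}(W\cdot\rho)$. Thus the right-hand side equals $\mathrm{tr}_{\overline{A}}(W\cdot\rho)$ immediately.

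Next I would treat the left-hand side. Let $C$ be the complement of $AB$, so $ABC = S$ and $\overline{AB} = C$. Applying Theorem~\ref{th:fundnouphehomo} to system $AB$ gives $\phi_\rho([W]^{AB}) = \mathrm{tr}_{\overline{AB}}(W\cdot\rho) = \mathrm{tr}_C(W\cdot\rho)$. The outer $\mathrm{tr}_B$ on the left-hand side is the \emph{phenomenal} partial trace, i.e.\ the ordinary partial trace of linear algebra, so I would then compose it with $\mathrm{tr}_C$ using the standard partial-trace composition rule (the phenomenal analogue of the relation in Axiom~\ref{sc:projectors}), obtaining $\mathrm{tr}_B(\mathrm{tr}_C(W\cdot\rho)) = \mathrm{tr}_{BC}(W\cdot\rho) = \mathrm{tr}_{\overline{A}}(W\cdot\rho)$. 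Since both sides reduce to $\mathrm{tr}_{\overline{A}}(W\cdot\rho)$, the identity follows.

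The only point requiring care — and the step I would flag as the main obstacle — is keeping the two distinct notions of partial trace straight throughout: on the left $\mathrm{tr}_B$ is phenomenal (acting on the density operator $\phi_\rho([W]^{AB})$) while on the right it is noumenal (acting on the evolution matrix $[W]^{AB}$), and similarly one must correctly identify $\overline{AB} = C$ versus $\overline{A} = BC$. The composition $\mathrm{tr}_B\circ\mathrm{tr}_C = \mathrm{tr}_{BC}$ used on the left is the familiar property of ordinary partial traces, so no new verification is needed there; the entire content of the theorem is that, once these bookkeeping matters are handled, the two fundamental properties force agreement.
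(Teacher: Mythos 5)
Your proposal is correct and follows essentially the same route as the paper: both reduce the two sides to $\mathrm{tr}_{\overline{A}}(W\cdot\rho)$ using Theorem~\ref{th:fundnouphehomo}, Theorem~\ref{th:noumparttrace}, and the composition rule $\mathrm{tr}_{B}\circ\mathrm{tr}_{C}=\mathrm{tr}_{BC}$ for phenomenal traces, with the identifications $\overline{AB}=C$ and $BC=\overline{A}$. The paper merely writes this as a single chain of equalities from the left-hand side to the right-hand side, whereas you split the chain at its midpoint; the content is identical.
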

\begin{proof}
Let $C = \overline{AB}$ i.e. $S = ABC$,
\begin{align*}
  &  \mathrm{tr}_B ( \phi_{\rho}  (  \,  [ W \, ]^{AB} \, )  ) \\
= &  \mathrm{tr}_B \left( \mathrm{tr}_{\overline{AB}} \left( W  \cdot \rho \right) \right) \\
= & \mathrm{tr}_{B} \left( \mathrm{tr}_{C} \left( W \cdot \rho  \right) \right) \\
= & \left( \mathrm{tr}_{B} \circ \mathrm{tr}_{C} \right) \left( W \cdot \rho  \right) \\
= & \mathrm{tr}_{BC} ( W \cdot \rho  ) \\
 = & \mathrm{tr}_{\overline{A}} ( W \cdot \rho  ) \\
 = & \phi_{\rho}  (  \,  [ W \, ]^A  \, )  \\
 = &   \phi_{\rho}  ( \mathrm{tr}_B (  \, [ W \, ]^{AB}  \, )) \, .
\end{align*}
  \end{proof}

We now prove that for any pure density operator $\rho$ on the Hilbert space $\mathcal{H}^{S}$ of the global system $S$, function $\phi_{\rho}^{A}$ is a noumenal-phenomenal epimorphism in pure state quantum theory.  This is the purpose of the following two theorems.

\begin{theorem} \label{th:puredefine} Let $\rho$ be a pure density operator belonging to the phenomenal space of the global system $S$,  for each noumenal state $\lbrack W \rbrack^{A}$ of system $A$, $\phi_{\rho}^{A} ( \lbrack W \rbrack^{A} )$ is an element of $\textsf{Pure-Phenomenal-Space}^{A}$.
\end{theorem}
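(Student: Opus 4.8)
The plan is to reduce the statement immediately to the fundamental property of the noumenal-phenomenal homomorphism, Theorem~\ref{th:fundnouphehomo}, which already tells us that $\phi_{\rho}^{A}(\lbrack W \rbrack^{A}) = \mathrm{tr}_{\overline{A}}(W \cdot \rho)$, where $W \cdot \rho = W \rho W^{\dagger}$. Once this rewriting is in hand, the membership claim becomes purely a matter of recognizing that $W \rho W^{\dagger}$ is itself a pure density operator, so that $\phi_{\rho}^{A}(\lbrack W \rbrack^{A})$ is exhibited as the partial trace over $\overline{A}$ of a pure state on $\mathcal{H}^{S}$, which is precisely the defining condition for membership in $\textsf{Pure-Phenomenal-Space}^{A}$.

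First I would invoke Theorem~\ref{th:fundnouphehomo} to replace $\phi_{\rho}^{A}(\lbrack W \rbrack^{A})$ by $\mathrm{tr}_{\overline{A}}(W \rho W^{\dagger})$. Since $\rho$ is pure by hypothesis, I would write $\rho = \ket{\tau}\!\bra{\tau}$ for some unit vector $\ket{\tau} \in \mathcal{H}^{S}$, so that $W \rho W^{\dagger} = (W\ket{\tau})(W\ket{\tau})^{\dagger}$. Because $W$ is unitary, $W\ket{\tau}$ is again a unit vector, and therefore $W \rho W^{\dagger}$ is a pure density operator on $\mathcal{H}^{S}$, i.e. an element of $\textsf{Pure-Phenomenal-Space}^{S}$.

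Finally, by the very definition of $\textsf{Pure-Phenomenal-Space}^{A}$, any density operator arising as $\mathrm{tr}_{\overline{A}}$ of an element of $\textsf{Pure-Phenomenal-Space}^{S}$ belongs to $\textsf{Pure-Phenomenal-Space}^{A}$; taking the witness $\rho^{S} = W \rho W^{\dagger}$ closes the argument. I expect essentially no obstacle here: the entire mathematical content is the single observation that unitary conjugation preserves purity. The only point requiring care is bookkeeping—ensuring that the image is displayed as a partial trace of a \emph{global} pure state, which is exactly what Theorem~\ref{th:fundnouphehomo} delivers, and noting that the degenerate case $A = S$ (where $\overline{A} = 0$ and the partial trace is trivial) is already covered since $W \rho W^{\dagger}$ is then directly an element of $\textsf{Pure-Phenomenal-Space}^{S}$.
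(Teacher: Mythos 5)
Your proposal is correct and follows essentially the same route as the paper's own proof: both reduce the claim to the identity $\phi_{\rho}^{A}(\lbrack W \rbrack^{A}) = \mathrm{tr}_{\overline{A}}(W \cdot \rho)$ from Theorem~\ref{th:fundnouphehomo}, observe that unitary conjugation preserves purity, and conclude by the definition of $\textsf{Pure-Phenomenal-Space}^{A}$. Your version merely spells out the purity-preservation step ($W\ket{\tau}$ is a unit vector) that the paper states without proof.
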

\begin{proof}
Since $\rho $ is a pure density operator of the global system $S$, it follows that $W \cdot \rho$ is also a pure density operator of the global system, and thus that $\mathrm{tr}_{\overline{A}} ( W \cdot \rho ) $ is an element of $\textsf{Pure-Phenomenal-Space}^{A}$.  Since $ \phi_{\rho}^{A} ( \lbrack W \rbrack^{A} ) = \mathrm{tr}_{\overline{A}} ( W \cdot \rho ) $, the theorem follows. 
\end{proof}
By theorem \ref{th:puredefine}, when $\rho$ is a pure density operator it follows that  either:
\[\phi_{\rho}^{A} \colon \textsf{Noumenal-Space}^{A} \to \textsf{Pure-Phenomenal-Space}^{A} \]
or
\[\phi_{\rho}^{A} \colon \textsf{Noumenal-Space}^{A} \to \textsf{Phenomenal-Space}^{A} \]
depending on  whether we work in pure state quantum theory or in quantum theory.

\begin{theorem}  Let $\rho$ be a pure density operator belonging to the phenomenal space of the global system $S$. For each system $A$, the function  
\[\phi_{\rho}^{A} \colon \textsf{Noumenal-Space}^{A} \to \textsf{Pure-Phenomenal-Space}^{A} \] 
is a noumenal-phenomenal-epimorphism in pure state quantum theory.
\end{theorem}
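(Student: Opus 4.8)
The plan is to invoke the definition of a noumenal-phenomenal epimorphism as a \emph{surjective} noumenal-phenomenal homomorphism, and to assemble the required properties from results already established. By Theorem~\ref{th:ishomo}, the function $\phi_{\rho}^{A}$ is a noumenal-phenomenal homomorphism for every phenomenal state $\rho$ of the global system, and by Theorem~\ref{th:puredefine}, when $\rho$ is pure its image lies in $\textsf{Pure-Phenomenal-Space}^{A}$. Hence the only remaining task is to prove that $\phi_{\rho}^{A} \colon \textsf{Noumenal-Space}^{A} \to \textsf{Pure-Phenomenal-Space}^{A}$ is surjective.

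To establish surjectivity, I would start from an arbitrary target phenomenal state $\rho^{A} \in \textsf{Pure-Phenomenal-Space}^{A}$. By the definition of this space, there exists a unit vector $\ket{\tau}$ in $\mathcal{H}^{S}$ such that $\rho^{A} = \mathrm{tr}_{\overline{A}} ( \ket{\tau}\!\bra{\tau} )$. Since the fixed operator $\rho$ is itself pure, write $\rho = \ket{\psi}\!\bra{\psi}$ for some unit vector $\ket{\psi}$ in $\mathcal{H}^{S}$. Because any two unit vectors of a Hilbert space are related by a unitary operator, there exists a unitary operation $W$ on the global system $S$ with $W\ket{\psi} = \ket{\tau}$, whence $W \cdot \rho = W \rho W^{\dagger} = \ket{\tau}\!\bra{\tau}$.

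The conclusion then follows from the fundamental property of the noumenal-phenomenal homomorphism, Theorem~\ref{th:fundnouphehomo}, which gives
\[ \phi_{\rho}^{A} ( [ W ]^{A} ) = \mathrm{tr}_{\overline{A}} ( W \cdot \rho ) = \mathrm{tr}_{\overline{A}} ( \ket{\tau}\!\bra{\tau} ) = \rho^{A} \, . \]
Thus every element of $\textsf{Pure-Phenomenal-Space}^{A}$ is attained, so $\phi_{\rho}^{A}$ is surjective, and together with Theorem~\ref{th:ishomo} this makes it a noumenal-phenomenal epimorphism. I expect no serious obstacle here: the entire argument rests on already-proven fundamental properties, and the single genuinely new ingredient --- the existence of a global unitary $W$ carrying $\ket{\psi}$ to $\ket{\tau}$ --- is an elementary fact of linear algebra, valid precisely because both vectors have unit norm and both live in the same Hilbert space $\mathcal{H}^{S}$.
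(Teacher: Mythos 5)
Your proposal is correct and follows essentially the same route as the paper's own proof: establish the homomorphism property via Theorem~\ref{th:ishomo}, then prove surjectivity by taking an arbitrary $\rho^{A}$ with purifying global state $\rho^{S}$, invoking a global unitary $W$ with $W \cdot \rho = \rho^{S}$, and applying Theorem~\ref{th:fundnouphehomo} to get $\phi_{\rho}^{A}([W]^{A}) = \mathrm{tr}_{\overline{A}}(W \cdot \rho) = \rho^{A}$. The only difference is cosmetic: you spell out the elementary fact that two unit vectors are connected by a unitary, which the paper simply asserts.
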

\begin{proof}   Theorem \ref{th:ishomo} implies that  $\phi_{\rho}^{A}$ is a noumenal-phenomenal homomorphism on system $A$, it remains to prove that  
\[ \phi_{\rho}^{A} \colon \textsf{Noumenal-Space}^{A} \to \textsf{Pure-Phenomenal-Space}^{A} \]
 is surjective.
Let $\rho^{A}$ be an element of  $\textsf{Pure-Phenomenal-Space}^{A}$, by definition there exist $\rho^{S}$ a pure density operator of the global system such that $\mathrm{tr}_{\overline{A}} ( \rho^{S} )= \rho^{A} $.  Since $\rho$ and $\rho^{S}$ are pure density operators of the global system, there exist a unitary operation $W$ such that $W \cdot \rho = \rho^{S}$.  It follows that $ \phi_{\rho}^{A} ( \lbrack W \rbrack^{A} ) = \mathrm{tr}_{\overline{A}} ( W \cdot \rho ) = \mathrm{tr}_{\overline{A}} (  \rho^{S} ) = \rho^{A}$.  Therefore 
\[\phi_{\rho}^{A} \colon \textsf{Noumenal-Space}^{A} \to \textsf{Pure-Phenomenal-Space}^{A} \] is surjective.
\end{proof}

Thus, we may fix an arbitrary pure density operator $\rho$  in the phenomenal space of the global system $S$ and associate to each system $A$ the noumenal-phenomenal epimorphism $\phi^{A}_{\rho}$. 
This  complete the construction of a  local realistic model for pure state quantum theory.

Note  that for any pure density operator $\rho$ on global system $S$,   $\phi_{\rho}^{S}$ is a noumenal-phenomenal epimorphism in \emph{pure state} quantum theory, but it is not one in general  in quantum theory, because
  \[ \phi_{\rho}^{S} \colon \textsf{Noumenal-Space}^{S} \to \textsf{Pure-Phenomenal-Space}^{S} \] is surjective,  implying directly that  \[ \phi_{\rho}^{S} \colon \textsf{Noumenal-Space}^{S} \to \textsf{Phenomenal-Space}^{S} \] is not surjective in any theory containing mixed states. There  will be mixed states  in quantum theory whenever the global system $S$ is distinct from the empty system.

Now that we have a model for pure state quantum theory, we will create a model for quantum theory. 
We shall achieve this by the redefinition of the noumenal states, the noumenal actions, the noumenal partial traces, the noumenal products and the construction  of the noumenal-phenomenal-epimorphisms. These new objects will be constructed by using the objects previously constructed for the model in pure state quantum theory.

Let $A$ a system, we define the new noumenal state of system $A$ to be : 
\[ \textsf{New-Noumenal-Space}^{A} \]
\[ \isdef \]
\[ \left\{ \left( N^{A} , \rho \right) \colon N^{A} \in \textsf{Noumenal-Space}^{A} , \rho \in \text{Phenomenal-Space}^{S} \right\} \, .  \]

Thus an arbitrary new noumenal state of system $A$ will be of the form $ ( \lbrack  W \rbrack^{A} , \rho ) $ for some unitary operation $W$ on the global system, and some density operator $\rho$ of the global system.

Consider any two systems $A$ and~$B$.
\[ \left\{ \begin{array}{lr}
\mathrm{tr}'_{B} \! \left( N^{AB}, \rho \right) \isdef \left( \mathrm{tr}_{B} \! \left( N^{AB} \right) , \rho \right)  & \mathrm{~provided~} AB  \text{ is a composite system}  \\[1ex]
\left( N^{A} , \rho \right) \odot' \left( N^{B} , \rho \right) \isdef \left( N^{A} \odot N^{B} , \rho \right) 
& ~~~~\mathrm{~provided~} N^{A} \odot N^{B} \mathrm{~is~defined} \\[1ex]
U  \! \left( N^{A}, \rho \right) \isdef \left( U  N^{A} , \rho \right)
\end{array}
\right. \]

The new noumenal product $\odot'$ is a partial function, where
\mbox{$\left( N^{A} , \rho_{1} \right) \odot' \left( N^{B} , \rho_{2} \right)$} is defined only whenever
\mbox{$\rho_{1} = \rho_{2}$} 
 and that  $A$ and $B$ are disjoint system and that $N^{A} \odot N^{B}$ is already defined according to the original noumenal product~$\odot$.
The new noumenal-phenomenal epimorphism $\varphi^{A}$ on system $A$ is defined as follows:
\[ \varphi^{A} \! \left( N^{A} , \rho  \right) \isdef \phi_{\rho}^{A} \! \left(  N^{A} \right) . \]

It is easy to verify that the new objects constructed, including the noumenal-phenomenal epimorphisms, satisfy all the axioms of a local-realistic theory.   The verification is tedious but straightforward.
The next two theorems are given as way of examples of  how the axioms are verified. 

First,  axiom~\ref{sc:join}.
\begin{theorem}
\[\mathrm{tr}'_{B} \! \left( N^{AB} , \rho \right) \odot' \mathrm{tr}'_{A} \! \left( N^{AB} , \rho \right) = \left( N^{AB} , \rho \right) \, .\]
\end{theorem}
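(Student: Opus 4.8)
The plan is to unfold the definitions of the primed operations and then reduce everything to the unprimed noumenal product, whose behaviour is already pinned down by axiom~\ref{sc:join} in the previously constructed model. First I would rewrite the left-hand side by applying the definition of the new partial trace to each factor, obtaining
\[
\mathrm{tr}'_{B}\!\left(N^{AB},\rho\right) \odot' \mathrm{tr}'_{A}\!\left(N^{AB},\rho\right)
= \left(\mathrm{tr}_{B}(N^{AB}),\rho\right) \odot' \left(\mathrm{tr}_{A}(N^{AB}),\rho\right).
\]
At this point the two global density operators are literally the same $\rho$, so the matching condition built into the definition of $\odot'$ is met, and I can apply the definition of the new noumenal product to carry $\rho$ outside while leaving the unprimed product of the first components:
\[
= \left(\mathrm{tr}_{B}(N^{AB}) \odot \mathrm{tr}_{A}(N^{AB}),\,\rho\right).
\]

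Next I would invoke axiom~\ref{sc:join} for the \emph{original} noumenal product, which asserts precisely that $\mathrm{tr}_{B}(N^{AB}) \odot \mathrm{tr}_{A}(N^{AB}) = N^{AB}$ for every noumenal state $N^{AB}$ of the composite system. Substituting this collapses the first component to $N^{AB}$, yielding $\left(N^{AB},\rho\right)$, which is exactly the desired right-hand side. The whole argument is thus a three-line chain of equalities resting entirely on the corresponding axiom for the model already built in pure state quantum theory.

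The only point requiring genuine care — rather than pure mechanics — is that $\odot'$ is a \emph{partial} function, so I must confirm it is actually defined on the two factors appearing here. Its domain conditions are that $A$ and $B$ be disjoint, that the two global density operators coincide, and that the underlying unprimed product $\mathrm{tr}_{B}(N^{AB}) \odot \mathrm{tr}_{A}(N^{AB})$ be defined. Disjointness holds because $AB$ is a composite system; the density operators coincide trivially, both being $\rho$; and the last condition holds because $\mathrm{tr}_{B}(N^{AB})$ and $\mathrm{tr}_{A}(N^{AB})$ are the noumenal partial traces of one and the same state $N^{AB}$, hence are \emph{compatible} in the sense of the compatibility definition, which is exactly the situation in which the original $\odot$ is defined. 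With these side conditions discharged, no real obstacle remains and the statement follows immediately.
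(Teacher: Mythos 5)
Your proof is correct and follows essentially the same route as the paper's: unfold the definitions of $\mathrm{tr}'_{B}$, $\mathrm{tr}'_{A}$ and $\odot'$, then collapse the first component using the fact that the original noumenal product satisfies axiom~\ref{sc:join} (established for the constructed model in Theorem~\ref{th:joinworks}). Your additional check that the partial function $\odot'$ is actually defined on the two factors is a worthwhile bit of rigour that the paper leaves implicit, but it does not change the argument.
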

\begin{proof}
\begin{align*}
\mathrm{tr}'_{B} \! \left( N^{AB} , \rho  \right) \odot' \mathrm{tr}'_{A} \! \left( N^{AB} , \rho \right) &= \left( \mathrm{tr}_{B} \! \left( N^{AB} \right) , \rho \right) \odot' \left( \mathrm{tr}_{A} \! \left( N^{AB} \right), \rho \right) \\
&=  \left( \mathrm{tr}_{B} \! \left( N^{AB} \right) \odot \mathrm{tr}_{A} \! \left( N^{AB} \right), \rho \right) \\
&= \left( N^{AB} , \rho \right) \, .  \\[-7ex]
\end{align*}
\end{proof}

We now prove that the axiom   \ref{ax:relnouphe} concerning  the relation between noumenal and phenomenal partial traces is satisfied.
\begin{theorem}
\[
\mathrm{tr}_{B} \! \left( \varphi^{AB} \! \left( N^{AB} , \rho \right) \right) = \varphi^{A} \! \left( \mathrm{tr}'_{B} \! \left( N^{AB} , \rho \right) \right) \, .
\]
Note that on the left, we have the phenomenal partial trace, which is unchanged, while on the right, we have the new noumenal partial trace.
\end{theorem}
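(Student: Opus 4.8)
The plan is to unfold both sides using the definitions of the new objects and to observe that the resulting identity is nothing but theorem~\ref{th:relnouphetrace}, already established for the pure-state model. The only genuine work is careful bookkeeping: tracking that the partial trace on the left is \emph{phenomenal} while the one on the right is \emph{noumenal}, and that the same fixed global phenomenal state $\rho$ is carried through both epimorphisms.

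First I would expand the left-hand side. By the definition of the new noumenal-phenomenal epimorphism, $\varphi^{AB}(N^{AB}, \rho) = \phi_{\rho}^{AB}(N^{AB})$, which is an ordinary density operator on system $AB$. Applying the phenomenal partial trace, which is unchanged (the usual partial trace of linear algebra), the left-hand side becomes $\mathrm{tr}_{B}(\phi_{\rho}^{AB}(N^{AB}))$.

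Next I would expand the right-hand side. By the definition of the new noumenal partial trace, $\mathrm{tr}'_{B}(N^{AB}, \rho) = (\mathrm{tr}_{B}(N^{AB}), \rho)$, where the inner $\mathrm{tr}_{B}$ is the noumenal partial trace. Applying $\varphi^{A}$ and its definition gives $\varphi^{A}(\mathrm{tr}_{B}(N^{AB}), \rho) = \phi_{\rho}^{A}(\mathrm{tr}_{B}(N^{AB}))$, so the right-hand side becomes $\phi_{\rho}^{A}(\mathrm{tr}_{B}(N^{AB}))$.

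Finally, since every noumenal state $N^{AB}$ is of the form $\left[ W \right]^{AB}$ for some unitary operation $W$ on the global system $S$, I would invoke theorem~\ref{th:relnouphetrace} directly to conclude
\[ \mathrm{tr}_{B}\!\left(\phi_{\rho}^{AB}(N^{AB})\right) = \phi_{\rho}^{A}\!\left(\mathrm{tr}_{B}(N^{AB})\right), \]
which is precisely the equality between the two reduced sides. There is essentially no obstacle here: the entire content of this axiom for the new model has already been discharged by theorem~\ref{th:relnouphetrace} in the pure-state construction, and the passage to the new noumenal states—pairs $(N^{AB}, \rho)$—merely threads the fixed global state $\rho$ through unchanged, so the verification reduces to matching definitions against a theorem proved earlier.
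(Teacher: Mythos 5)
Your proposal is correct and follows essentially the same route as the paper's own proof: unfold the definitions of the new epimorphism and new noumenal partial trace on both sides, then close the gap with theorem~\ref{th:relnouphetrace}. The only cosmetic difference is that you reduce both sides separately and meet in the middle (also explicitly noting that every noumenal state has the form $\left[ W \right]^{AB}$, a hypothesis the paper applies tacitly), whereas the paper chains the same four equalities in a single left-to-right derivation.
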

\begin{proof}
\begin{align*}
\mathrm{tr}_{B} \! \left( \varphi^{AB} \! \left( N^{AB} , \rho \right) \right)  &= \mathrm{tr}_{B}  ( \phi^{AB}_{\rho} ( N^{AB} ))  \\
                                                                                                                         &= \phi^{A}_{\rho} ( \mathrm{tr}_{B} ( N^{AB} ) ) \\
                                                                                                                        &= \varphi^{A} ( \mathrm{tr}_{B} ( N^{AB} ), \rho )  \\
                                                                                                                       &= \varphi^{A} \! \left( \mathrm{tr}'_{B} \! \left( N^{AB} , \rho \right) \right) \, .
\end{align*}
Note that we used theorem \ref{th:relnouphetrace}.
\end{proof}

Furthermore for each system, $\varphi^{A}$ is indeed a  noumenal-phenomenal epimorphism,
which was the object of the exercise of defining `new' objects.  
To~prove this, it suffices to show that for each system $A$,
$\varphi^{A}$ is a  noumenal-phenomenal \emph{homo}morphism (which is obvious) and that
$\varphi^{A}$ is surjective,
as follows:

\begin{theorem}
Let $A$ be a system,  \[ \varphi^{A} \colon \textsf{New-Noumenal-Space}^{A} \to \textsf{Phenomenal-Space}^{A}\] is surjective.
\end{theorem}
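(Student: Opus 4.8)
The plan is to exhibit, for every phenomenal state $\rho^{A}$ of system $A$, an explicit preimage under $\varphi^{A}$. Recall that $\varphi^{A}(N^{A}, \rho) = \phi_{\rho}^{A}(N^{A})$ and that Theorem~\ref{th:fundnouphehomo} evaluates this homomorphism on the generators of the noumenal space: $\phi_{\rho}^{A}([W]^{A}) = \mathrm{tr}_{\overline{A}}(W \cdot \rho)$. The key simplification is to take the trivial global operation $W = I^{S}$, whose evolution matrix $[I^{S}]^{A}$ is a legitimate noumenal state since $I^{S} \in \textsf{Operations}^{S}$. Because the phenomenal action satisfies $I^{S} \cdot \rho = \rho$, this gives $\varphi^{A}([I^{S}]^{A}, \rho) = \mathrm{tr}_{\overline{A}}(\rho)$. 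Thus the whole problem reduces to showing that every $\rho^{A}$ arises as $\mathrm{tr}_{\overline{A}}(\rho)$ for some $\rho \in \textsf{Phenomenal-Space}^{S}$.

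First I would observe that the phenomenal partial trace $\mathrm{tr}_{\overline{A}} \colon \textsf{Phenomenal-Space}^{S} \to \textsf{Phenomenal-Space}^{A}$ is surjective, since by Axiom~\ref{tracepheno} the phenomenal partial traces inherit, \emph{mutatis mutandis}, the surjectivity required of noumenal partial traces in Axiom~\ref{sc:projectors}. Given an arbitrary $\rho^{A}$, surjectivity supplies a global density operator $\rho$ with $\mathrm{tr}_{\overline{A}}(\rho) = \rho^{A}$, and then $([I^{S}]^{A}, \rho)$ is the desired preimage, for $\varphi^{A}([I^{S}]^{A}, \rho) = \mathrm{tr}_{\overline{A}}(I^{S} \cdot \rho) = \mathrm{tr}_{\overline{A}}(\rho) = \rho^{A}$. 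Alternatively, to keep the argument self-contained in the language of density operators, I could build $\rho$ by hand: when $A$ is a proper subsystem, so that $\overline{A}$ is non-empty, set $\rho = \rho^{A} \otimes \sigma$ for any fixed density operator $\sigma$ on $\mathcal{H}^{\overline{A}}$, which gives $\mathrm{tr}_{\overline{A}}(\rho^{A} \otimes \sigma) = \rho^{A}$.

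The main point requiring care is the boundary case $A = S$, where $\overline{A}$ is the empty system and the explicit tensor-product construction has no room to insert a factor $\sigma$. Here one notes that $\mathrm{tr}_{\overline{S}}$ is the trace over the empty system, hence the identity on $\textsf{Phenomenal-Space}^{S}$, so $\varphi^{S}([I^{S}]^{S}, \rho^{S}) = \rho^{S}$ hits every $\rho^{S}$ directly. Invoking surjectivity of the phenomenal partial trace, rather than the explicit product, sidesteps this case distinction and handles all systems uniformly; I would therefore present that version as the main proof and relegate the explicit construction to a remark. No deeper obstacle arises, because the added freedom of letting $\rho$ range over \emph{all} global density operators --- not merely the pure ones of the preceding pure-state construction --- is exactly what makes $\varphi^{A}$ surject onto the full phenomenal space rather than only onto its pure part.
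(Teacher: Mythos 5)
Your proposal is correct, and its core is exactly the paper's: the preimage is \mbox{$(\lbrack I^{S} \rbrack^{A}, \rho)$} for a global density operator $\rho$ with \mbox{$\mathrm{tr}_{\overline{A}}(\rho) = \rho^{A}$}, evaluated via Theorem~\ref{th:fundnouphehomo} as \mbox{$\varphi^{A}(\lbrack I^{S}\rbrack^{A}, \rho) = \mathrm{tr}_{\overline{A}}(I^{S} \cdot \rho) = \rho^{A}$}. In fact, the paper's proof \emph{is} what you relegate to a remark: it sets \mbox{$\rho = \rho^{A} \otimes \rho^{\overline{A}}$} for an arbitrary density operator $\rho^{\overline{A}}$ on $\overline{A}$ and computes. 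Where you differ is in promoting the appeal to Axiom~\ref{tracepheno} to the main argument, and that choice is slightly backwards within the paper's logical structure: in Section~\ref{sc:construction} the axioms are obligations to be verified for the model, not premises available for free, and the phenomenal partial trace there is the concrete linear-algebra partial trace of quantum theory, whose surjectivity is a fact whose standard proof is precisely the tensor-product construction you set aside. So the axiomatic detour is not wrong --- the fact it invokes is true --- but it buys no economy and obscures the self-contained argument; the explicit construction should be the main proof, as in the paper. Your worry about the boundary case $A = S$ is also already absorbed by the paper's conventions: the empty system carries a one-dimensional Hilbert space with a unique density operator, so tensoring with it and tracing it out are both trivial, which is the same observation as your identity-map remark for $\mathrm{tr}_{\overline{S}}$.
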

\begin{proof}
Let  $\rho^{A}$ be an arbitrary density operator of system $A$. We show that there exists $(N^{A}, \rho )$ in 
$\textsf{New-Noumenal-Space}^{A}$ such that $\varphi^{A} \! \left( N^{A}, \rho \right)  = \rho^{A}$.
Let $\rho^{\overline{A} } $ be any density operator on system $\overline{A}$, thus $\rho^{A} \otimes \rho^{\overline{A}}$ is a density operator on the global system $S$.  Let $\rho = \rho^{A} \otimes \rho^{\overline{A}}$.  Recall that  $I^{S}$ is the identity operation on global system $S$.
\begin{align*} \varphi^{A} ( \lbrack I^{S}  \rbrack^{A} ,  \rho  ) =  \, & \phi_{\rho}^{A}  ( \lbrack I^{S} \rbrack^{A} ) \\
 = \,  & \mathrm{tr}_{\overline{A}} ( I^{S} \cdot \rho ) \\
                                                                                                               = \,  & \mathrm{tr}_{\overline{A}} (  \rho  ) \\
                                                                                                               =  \, & \mathrm{tr}_{\overline{A}}   ( \rho^{A} \otimes \rho^{\overline{A}}  ) \\
 = \,  &  \rho^{A}  \, .
\end{align*} 
\end{proof}

The existence of
the required noumenal-phenomenal epimorphism is established for each system,
which concludes our construction of a local-realistic model for quantum theory.

\section{Conclusion} \label{sc:conclusion}
The goal of this article was to give a precise definition of local realism and to show that quantum theory is local-realistic by proving that it has a local-realistic model. The key idea allowing this was to distinguish between what can be observed locally in a system and the complete local description of the system.

Prima facie, one might have hoped that, as in classical physics, the noumenal world-as-it-is would be equivalent to the phenomenal world as it could be observed.  But in quantum theory, there exist entangled systems, in states that defy any local-realistic explanation equating observable with real. This forces us to distinguish noumenal from phenomenal.

Given a precise definition of local-realism and a rigorous local-realistic model for quantum theory, we might wonder why such a model exists.
The precise answer is in ref. \cite{STRUCTURE}.  It is that any theory that conforms to the the no-signalling principle has a local-realistic model.  In other words, a theory forbids action at a distance if and only if it forbids \emph{observable} action at a distance.  

That, in turn, leads to perhaps the deepest question: ``Is quantum theory local-realistic?''

In a discussion with the author, David Deutsch remarked: 
\begin{quote}
Locality is not a symmetric property.  If a theory has no local-realistic model, it is obviously nonlocal.  However, if a theory has a local-realistic model, it will also admit nonlocal models. The meaningful way to describe a local theory is to say that a theory is local when it admits a local model and conversely, a theory is nonlocal when it admits no local model.
\end{quote}

Thus we can conclude that  quantum theory \emph{is}  local-realistic since it has a local-realistic model.

\appendix

\section{The noumenal-phenomenal epimorphism cannot be injective} \label{sc:notinjective}
Here our goal is to show that in quantum theory, under the supposition that the theory contains a non-trivial system, the noumenal-phenomenal epimorphism on any non-empty system is not injective.

In some sense, this result is not very surprising, since we saw in section \ref{qtnotlr} that there exist no local-realistic model of quantum theory, where for all systems the density operator of a system is its complete description.   However, this argument via density operators was formulated casually, and without recourse to the formal framework of local-realistic theories.  Once, we had completed the description of the axioms of local-realism, the argument   could have been  rephrased in the following way:
\begin{quote}
In any local-realistic model of quantum theory, containing a non-trivial system,  it is not the case that for all  system $A$, its associated  noumenal-phenomenal epimorphism $\varphi^{A}$   is injective.
\end{quote}
This should be contrasted with the stronger result in this appendix, namely:
\begin{quote}
In any local-realistic model of quantum theory, containing a non-trivial system, it is not the case that there exist a non-empty system $A$ such that its associated  noumenal-phenomenal epimorphism $\varphi^{A}$ is injective.
\end{quote}

For the purpose of this proof, assume we are given a local-realistic model of quantum theory containing a  non-trivial system distinct from the empty system and the global system.  Let $A$ be  any non-trivial system,  and let system $B$ be the complement of system $A$.   System $B$ is distinct from the empty system.
Thus, both systems $A$ and $B$ are non-empty systems whose associated Hilbert spaces $\mathcal{H}^{A}$ and $\mathcal{H}^{B}$  are of dimension greater than one.
Therefore the Hilbert space of system $A$ contains at least two orthonormal  states $ |0 \rangle^{A} $ and $| 1 \rangle^{A}$, and the Hilbert space of its complement the system $B$ contains at least two orthonormal states $ |0 \rangle^{B} $ and $| 1 \rangle^{B}$.  

To prove our result we shall use the density operators corresponding to  the Bell states
 \[ \ket{ \Psi^{+} }^{AB} = \frac{1}{\sqrt{2}} ( \ket{01}^{AB} + \ket{10}^{AB} )  \] 
and 
\[ \ket{ \Phi^{+} }^{AB} =  \frac{1}{\sqrt{2}} ( \ket{01}^{AB} + \ket{10}^{AB} )  \]
and the Pauli operation  $X= \begin{pmatrix} 0 & 1 \\ 1 & 0 \end{pmatrix} $.  

Let $\varphi^{AB}$ be the noumenal-phenomenal epimorphism on the global system $S = AB$ and let $\varphi^{A}$ be the noumenal-phenomenal epimorphism on system $A$.
We shall  prove that neither $\varphi^{A}$ nor $\varphi^{AB}$ are injective.  From the fact that $\varphi^{A}$ is not injective, it follows that for an arbitrary non-trivial system $A$, its associated noumenal-phenomenal epimorphism is not injective.  From the fact that $\varphi^{AB}$ is not injective, it follows that the noumenal-phenomenal epimorphism on the global systen is not injective.  Combining these two facts, we obtain that on any system distinct from the empty system, the noumenal-phenomenal epimorphism associated with that system is not injective. This proves the desired result.

The idea behind the proof that neither $\varphi^{A}$ nor $\varphi^{AB}$ is injective goes  as follow: 

Given that $AB$ is in   Bell state $\ket{\Psi^{+} }^{AB}$, applying Pauli operation $X$ to $A$ leads to phenomenal changes to  $AB$ but no phenomenal change to  $A$. It follows that the operation caused an underlying noumenal change in $AB$.  By locality, if the action of applying  $X$ to$A$ caused an underlying in $AB$ noumenal change to system $AB$,  the noumenal state of system $A$ has changed.  Thus applying the Pauli operation $X$ to system $A$ changes the  noumenal state of $A$ without changing the phenomenal state of  $A$. Hence $\varphi^{A}$ is not injective.

Next, applying operation $X$ to both system $A$ and system $B$ to Bell state $\ket{\Psi^{+}}$, leads to no phenomenal change on system $AB$.  But the effect of this action on system $A$ has been the application of $X$, which changed its  noumenal state, as previously argued.  Since we changed the noumenal state of system $A$, we changed the noumenal state of system $AB$.  Therefore, by applying Pauli operation $X$ on both system $A$ and $B$ to Bell state $\ket{\Psi^{+}} $, we changed the noumenal state of system $AB$ without changing its corresponding phenomenal state. This implies that $\varphi^{AB}$ is not injective.

Thus  in quantum theory, we may change the noumenal state of any (non-empty) system without changing its corresponding phenomenal state.

That is the idea behind the proof. Here  are the details.

By the surjectivity of the noumenal-phenomenal epimorphism $\varphi^{AB}$ on the global system $S = AB$ there exist a noumenal state $\langle \Psi^{+} \rangle^{AB} $ such that
\[ \varphi^{AB} ( \langle  \Psi^{+} \rangle^{AB}   ) = \ket{ \Psi^{+} } \!  \bra{ \Psi^{+} }^{AB} \, .  \]
Let us define 
\[ \langle \Phi^{+} \rangle^{AB} \isdef   ( X \otimes I ) \langle  \Psi^{+} \rangle^{AB}  \, . \]
From this, it follows that :
\[   \varphi^{AB} ( \langle  \Phi^{+} \rangle^{AB}   ) = \ket{ \Phi^{+} } \! \bra{ \Phi^{+} }^{AB}  \, ,  \]
since:
\begin{align*}
 \varphi^{AB} ( \langle \Phi^{+} \rangle^{AB}   ) =& \varphi^{AB} ( ( X \otimes I ) \langle  \Psi^{+} \rangle^{AB}  ) \\
                                                                               =&  ( X \otimes I ) \cdot  \varphi^{AB} ( \langle \Psi^{+} \rangle^{AB} )  \\
                                                                               =& (X \otimes I ) \cdot   \ket{ \Psi^{+} } \!  \bra{ \Psi^{+} }^{AB} \\
                                                                                = &(  X \otimes I ) \ket{ \Psi^{+} } \!  \bra{ \Psi^{+} }^{AB} ( X \otimes I) \\
                                                                               =&  \ket{ \Phi^{+} } \! \bra{ \Phi^{+} }^{AB}  \, . 
\end{align*}

The noumenal state
 \[    \langle  \Psi^{+} \rangle^{AB}  = \langle \Psi^{+} \rangle^{A} \odot \langle \Psi^{+} \rangle^{B}  \]
can be written thus as a product state where $\langle \Psi^{+} \rangle^{A} = \mathrm{tr}_{B} \left(  \langle \Psi^{+} \rangle^{AB} \right) $ and $ \langle \Psi^{+} \rangle^{B} = \mathrm{tr}_{A} \left(  \langle \Psi^{+} \rangle^{AB} \right) $. Here,   $\mathrm{tr}_{A}$ and $\mathrm{tr}_{B}$ are noumenal traces.

This implies that
\[ \langle \Phi^{+} \rangle^{AB}  = X \langle \Psi^{+} \rangle^{A}  \odot \langle \Psi^{+} \rangle^{B} \, ,  \]
since
\begin{align*}
\langle \Phi^{+} \rangle^{AB}    =&  ( X \otimes I ) \langle  \Psi^{+} \rangle^{AB}   \\
                                                                               =& (  X \otimes I ) ( \langle \Psi^{+} \rangle^{A} \odot \langle \Psi^{+} \rangle^{B} ) \\
                                =& X \langle \Psi^{+} \rangle^{A} \odot I \langle \Psi^{+} \rangle^{B} \\ 
                                                          = & X \langle \Psi^{+} \rangle^{A} \odot \langle \Psi^{+} \rangle^{B}  \, . 
\end{align*}

To prove that noumenal-phenomenal epimorphism $\varphi^{A}$ is not injective,  we show that $\varphi^{A} ( \langle \Psi^{+} \rangle^{A} ) $ and $ \varphi^{A} ( X \langle \Psi^{+} \rangle^{A}  ) $  are equal but that $ \langle \Psi^{+} \rangle^{A} $ and $X \langle \Psi^{+} \rangle^{A}$ are not equal.

First, we see that
\[\varphi^{A} (   \langle \Psi^{+} \rangle^{A}  ) = \varphi^{A} (  X \langle \Psi^{+} \rangle^{A} ) \, , \]
 since
\begin{align*}
 \varphi^{A} (  \langle \Psi^{+} \rangle^{A} )  &=  \varphi^{A} ( \mathrm{tr}_{B} ( \langle \Psi^{+} \rangle^{AB} )) \\
                                                                             &=   \mathrm{tr}_{B} ( \varphi^{AB} ( \langle \Psi^{+} \rangle^{AB} )) \\
                                                                            &= \mathrm{tr}_{B} ( \ket{ \Psi^{+} } \! \bra{ \Psi^{+} }^{AB} ) \\
                                                                            &= \frac{1}{2} \begin{pmatrix} 1 & 0 \\ 0 & 1 \end{pmatrix}    \\
                                                                           &= \frac{1}{2} I^{A}
\end{align*}
and
\begin{align*}
 \varphi^{A} ( X \langle \Psi^{+} \rangle^{A} )  &=  X \cdot  \varphi^{A} ( \langle \Psi^{+} \rangle^{A} ) \\
                                                                              &=  X \cdot  ( \frac{1}{2} I^{A} )   \\
                                                                                & = X  ( \frac{1}{2}  I^{A} ) X   \\             
& = \frac{1}{2} X^{2} \\
                                         &=  \frac{1}{2} I^{A}  \, . 
\end{align*}

Now we show that $ \langle \Psi^{+} \rangle^{A}$ and $ X \langle \Psi^{+} \rangle^{A} $ are not equal.
Recall that  $ \langle \Psi^{+} \rangle^{AB} $ and $ \langle \Phi^{+} \rangle^{AB} $ are distinct noumenal states, since they give rise to distinct corresponding phenomenal states.
From the fact that these two noumenal states are distinct and that $ \langle \Psi^{+} \rangle^{AB} = \langle \Psi^{+} \rangle^{A} \odot \langle \Psi^{+} \rangle^{B} $ and $\langle \Phi^{+} \rangle^{AB} = X \langle \Psi^{+} \rangle^{A} \odot \langle \Psi^{+} \rangle^{B}$, it follows that $ \langle \Psi^{+} \rangle^{A} $ and  $X \langle \Psi^{+} \rangle^{A}$ are distinct.  

We have now proved that the noumenal-epimorphism  $\varphi^{A} $ on system $A$ is not injective. It remains to prove that the noumenal-phenomenal epimorphism $\varphi^{AB}$ on system $AB$ is not injective.

To do so, we show that $ \langle \Psi^{+} \rangle^{AB} $ and $(  X \otimes X ) \langle \Psi^{+} \rangle^{AB} $ are distinct noumenal states giving rise to the same corresponding phenomenal state. From this it follows immediately that the noumenal-phenomenal epimorphism $\varphi^{AB}$ on the global system is not injective.

Let us  verify that they give rise to the same phenomenal  state:
\begin{align*}
 \varphi^{AB} ( ( X \otimes X ) \langle \Psi^{+} \rangle^{AB} ) &= ( X \otimes X ) \cdot \varphi^{AB} ( \langle \Psi^{+} \rangle^{AB} ) \\
                                                                                                   &= ( X \otimes X ) \cdot  | \Psi^{+} \rangle \! \langle \Psi^{+} |^{AB} \\
                                                                                                    &=  ( X \otimes X )| \Psi^{+} \rangle \! \langle \Psi^{+} |^{AB} ( X \otimes X ) \\
                                                                                                     & =| \Psi^{+} \rangle \! \langle \Psi^{+} |^{AB} \\
                                                                                                  & = \varphi^{AB} ( \langle \Psi^{+} \rangle^{AB}  )  \, .
\end{align*}

Now, we show that $ \langle \Psi^{+} \rangle^{AB}$ and $(  X \otimes X ) \langle \Psi^{+} \rangle^{AB} $ are distinct.  
Recall that 
\[ \langle \Psi^{+} \rangle^{AB} = \langle \Psi^{+} \rangle^{A} \odot \langle \Psi^{+} \rangle^{B} \, , \] 
and
 \[  ( X \otimes X ) \langle \Psi^{+} \rangle^{AB} = X \langle \Psi^{+} \rangle^{A} \odot X \langle \Psi^{+} \rangle^{B} \, . \]

From theorem  \ref{thm:uniquedec} and the previously proven fact that 
\[ \langle  \Psi^{+} \rangle^{A} \neq X \langle \Psi^{+} \rangle^{A} \, ,  \]
it follows that
\[  \langle \Psi^{+} \rangle^{A} \odot \langle \Psi^{+} \rangle^{B}  \neq  X \langle \Psi^{+} \rangle^{A} \odot X \langle \Psi^{+} \rangle^{B} \, . \]
And thus
\[  \langle \Psi^{+} \rangle^{AB} \neq (  X \otimes X ) \langle \Psi^{+} \rangle^{AB} \, .  \]

It follows that the noumenal-phenomenal epimorphism on the global system is not injective. This completes the proof.

\section{Change of basis}\label{basischange}

We state a few results related to evolution spaces over the same systems but in different bases.  

\begin{theorem}[Relation between bases]
Let  $A$ be a system, let $W$ be an operation on the global system $S$, and let $ \left\{ | i \rangle \right\}_{i \in I} $, $ \left\{ \ket{ k' } \right\}_{k' \in K}  $  be two orthonormal bases of $\mathcal{H}^{A}$ the Hilbert space associated to system $A$,
 \[  \left[ W \right]^{A}_{k'  \ell'  } = \sum\limits_{i \in I, j \in I} \langle k'  | i \rangle  \left[ W \right]_{ i j }^{A} \langle j |  \ell '  \rangle \, .    \]
\end{theorem}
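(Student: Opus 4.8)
The plan is to expand the rank-one outer product defining the primed entry in terms of the unprimed basis, using the completeness relation of $\mathcal{H}^A$, and then invoke the linearity of the map $X \mapsto W^{\dagger}(X \otimes I^{\overline{A}})W$. First I would write down the defining formula for the entries of the evolution matrix in the primed basis straight from Definition~\ref{evoma}, respecting its index-swapping convention: since the $ij$ entry uses $|j\rangle\langle i|$, the $k'\ell'$ entry is $[W]^A_{k'\ell'} = W^{\dagger}(|\ell'\rangle\langle k'| \otimes I^{\overline{A}})W$.

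Next I would insert resolutions of the identity of $\mathcal{H}^A$ in the unprimed basis on each factor of the outer product, writing $|\ell'\rangle = \sum_{j}|j\rangle\langle j|\ell'\rangle$ and $\langle k'| = \sum_{i}\langle k'|i\rangle\langle i|$, where the latter is obtained by taking the adjoint of the completeness expansion of $|k'\rangle$ and using $\overline{\langle i|k'\rangle} = \langle k'|i\rangle$. Multiplying these expansions yields $|\ell'\rangle\langle k'| = \sum_{i,j}\langle k'|i\rangle\,\langle j|\ell'\rangle\,|j\rangle\langle i|$, a finite linear combination of exactly the rank-one operators that appear in the unprimed definition.

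Then I would substitute this expansion into the defining formula and pull the scalars $\langle k'|i\rangle$ and $\langle j|\ell'\rangle$ through the operations of tensoring with $I^{\overline{A}}$ and conjugating by $W$; being complex numbers, they commute freely with every operator in $\mathcal{L}(\mathcal{H}^S)$. Each remaining term $W^{\dagger}(|j\rangle\langle i| \otimes I^{\overline{A}})W$ is precisely $[W]^A_{ij}$ by Definition~\ref{evoma}, so the sum collapses to $\sum_{i,j}\langle k'|i\rangle\,[W]^A_{ij}\,\langle j|\ell'\rangle$, which is the claimed identity.

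I do not anticipate any genuine obstacle here: this is a routine change-of-basis computation resting only on the bilinearity of the outer product, the linearity of the operator-valued map, and the completeness of the orthonormal basis $\{|i\rangle\}_{i \in I}$. The only points demanding care are bookkeeping ones, namely the index swap built into the evolution-matrix convention, the complex conjugation hidden when rewriting $\langle k'|$ in the unprimed basis, and the observation that the inner products are scalars and hence commute with the operator-valued entries, which is exactly what licenses placing the factors in the order $\langle k'|i\rangle\,[W]^A_{ij}\,\langle j|\ell'\rangle$.
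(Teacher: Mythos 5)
Your proposal is correct and follows essentially the same route as the paper's own proof: expand $|\ell'\rangle\langle k'|$ in the unprimed basis via the completeness relation, pull the scalar inner products through the conjugation by $W$ and the tensoring with $I^{\overline{A}}$, and recognize each surviving term as $\left[ W \right]^{A}_{ij}$. Your explicit attention to the index-swap convention and the complex conjugation when expanding $\langle k'|$ is exactly the bookkeeping the paper's computation relies on implicitly.
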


\begin{proof}
\begin{align*}
 & \lbrack W \rbrack^{A}_{ k'  \ell'  } \\
= &  W^{\dagger} \left(  | \ell '  \rangle \! \langle k'  | \otimes I^{\overline{A}} \right) W \\
= & \sum\limits_{i,j} W^{\dagger}  \left( \ket{j} \langle j | \ell '  \rangle \langle k'  | i \rangle \bra{i} \otimes I^{\overline{A}} \right) W \\
= &  \sum\limits_{i,j} \langle k'  | i \rangle W^{\dagger}  \left( \ket{j} \! \bra{i} \otimes I^{\overline{A}} \right) W   \langle j | \ell'  \rangle \\  
=& \sum\limits_{i,j} \langle k' | i \rangle  \left[ W \right]_{ i j }^{A} \langle j |  \ell'  \rangle \, .  \\
\end{align*}
\end{proof}

\paragraph{Change of basis.} Let  $A$ be a system,  let $\mathcal{B}_{1} =  \{ | i \rangle \}_{i \in I}$ and $\mathcal{B}_{2} = \{ | k' \rangle \}_{k' \in K} $ be orthonormal bases of $\mathcal{H}^{A}$ the Hilbert space associated to system $A$, and  let $M^{A}$ be an element of $ \textsf{Evolution-Space}^{A}_{\mathcal{B}_{1} }$.
We define $\mathcal{B}_{2} \leftarrow \mathcal{B}_{1} ( M^{A}  )$ as the operator matrix in basis $\mathcal{B}_{2} $ whose $ k' \ell' $ entry is
\[ ( \mathcal{B}_{2} \leftarrow \mathcal{B}_{1} ( M^{A} ) )_{ k' l'  }  \stackrel{\text{def}}{=}  \sum\limits_{i \in I, j \in I} \langle k' | i \rangle  M^{A}_{ij} \langle j |  \ell' \rangle   \, .  \]

\begin{theorem} Let $A$ be a system, let $\mathcal{B}_{1}$ and $\mathcal{B}_{2}$ be orthonormal bases of $\mathcal{H}^{A}$, for any unitary operation $W$ on the global system,
\[ \mathcal{B}_{2} \leftarrow \mathcal{B}_{1} ( \lbrack W \rbrack^{A} _{\mathcal{B}_{1} } ) = \lbrack W \rbrack^{A}_{\mathcal{B}_{2}} \, . \]
\end{theorem}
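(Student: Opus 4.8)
The plan is to prove the asserted equality of operator matrices by comparing their entries in the basis $\mathcal{B}_{2}$, invoking the fact used repeatedly in this paper that two operator matrices sharing the same entries are equal. Both sides of the claimed identity are operator matrices in basis $\mathcal{B}_{2}$: the right-hand side $\lbrack W \rbrack^{A}_{\mathcal{B}_{2}}$ by the definition of the evolution matrix, and the left-hand side $\mathcal{B}_{2} \leftarrow \mathcal{B}_{1} ( \lbrack W \rbrack^{A}_{\mathcal{B}_{1}} )$ by the definition of the change-of-basis operation. Hence a comparison of $k' \ell'$ entries is meaningful and suffices.

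First I would write the $k' \ell'$ entry of the left-hand side directly from the definition of $\mathcal{B}_{2} \leftarrow \mathcal{B}_{1}$, taking as input the operator matrix $M^{A} = \lbrack W \rbrack^{A}_{\mathcal{B}_{1}}$. This yields
\[ \left( \mathcal{B}_{2} \leftarrow \mathcal{B}_{1} ( \lbrack W \rbrack^{A}_{\mathcal{B}_{1}} ) \right)_{k' \ell'} = \sum_{i \in I, j \in I} \langle k' | i \rangle \, \lbrack W \rbrack^{A}_{ij} \, \langle j | \ell' \rangle \, , \]
where each $\lbrack W \rbrack^{A}_{ij}$ is understood as an entry in the basis $\mathcal{B}_{1}$.

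The crucial step is then to recognise that this sum is precisely the quantity computed in the Relation between bases theorem proved above, which states that
\[ \lbrack W \rbrack^{A}_{k' \ell'} = \sum_{i \in I, j \in I} \langle k' | i \rangle \, \lbrack W \rbrack^{A}_{ij} \, \langle j | \ell' \rangle \, , \]
the left-hand side here being the $k' \ell'$ entry of the evolution matrix in the basis $\mathcal{B}_{2}$. Combining the two displays shows that $\mathcal{B}_{2} \leftarrow \mathcal{B}_{1} ( \lbrack W \rbrack^{A}_{\mathcal{B}_{1}} )$ and $\lbrack W \rbrack^{A}_{\mathcal{B}_{2}}$ have identical $k' \ell'$ entries for every pair $k', \ell'$, whence they are equal as operator matrices in basis $\mathcal{B}_{2}$.

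I do not expect any genuine obstacle here: the operation $\mathcal{B}_{2} \leftarrow \mathcal{B}_{1}$ was defined precisely so as to transport entries by the very formula that the Relation between bases theorem establishes for evolution matrices, so the result is essentially immediate. The only point demanding a little care is notational bookkeeping, namely keeping straight that the inputs $\lbrack W \rbrack^{A}_{ij}$ are entries relative to $\mathcal{B}_{1}$ while the outputs are entries relative to $\mathcal{B}_{2}$, so that the two displayed sums are correctly identified as one and the same.
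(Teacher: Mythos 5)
Your proposal is correct and follows exactly the paper's own proof: the paper likewise expands the $k'\ell'$ entry of $\mathcal{B}_{2} \leftarrow \mathcal{B}_{1}(\lbrack W \rbrack^{A}_{\mathcal{B}_{1}})$ via the definition of the change-of-basis map and then invokes the preceding \emph{Relation between bases} theorem to identify that sum with $\lbrack W \rbrack^{A}_{k'\ell'}$ in basis $\mathcal{B}_{2}$. Your additional remarks on entry-wise equality of operator matrices and on which basis each set of entries refers to are just explicit statements of what the paper leaves implicit.
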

\begin{proof}
\[  (  \mathcal{B}_{2} \leftarrow \mathcal{B}_{1} ( \lbrack W \rbrack^{A} _{\mathcal{B}_{1} } ) )_{k' l' } 
=  \sum\limits_{i, j} \langle k' | i \rangle  \lbrack W \rbrack^{A}_{ij} \langle j |  \ell' \rangle     = \lbrack W \rbrack_{k'  \ell' }^{A}  \, . \]
\end{proof}

By the previous theorem it follows that the image of an element of $ \textsf{Evolution-Space}^{A}_{\mathcal{B}_{1} }$   under the  function $ \mathcal{B}_{2} \leftarrow \mathcal{B}_{1} $ is an element of $ \textsf{Evolution-Space}^{A}_{\mathcal{B}_{2} }$.  Formally,
\[ \mathcal{B}_{2} \leftarrow \mathcal{B}_{1} \colon  \textsf{Evolution-Space}^{A}_{\mathcal{B}_{1} } \to  \textsf{Evolution-Space}^{A}_{\mathcal{B}_{2}} \, .  \]

For the following three corollaries, let $A$ be a system and let $\mathcal{B}_{1}$, $\mathcal{B}_{2}$ and $\mathcal{B}_{3}$ be orthonormal bases associated with it.

\begin{corollary} Function
$ \mathcal{B}_{1} \leftarrow \mathcal{B}_{1} $ is the identity  on  $\textsf{Noumenal-Space}^{A}_{\mathcal{B}_{1} } $.
\end{corollary}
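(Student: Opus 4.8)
The plan is to argue entry by entry, relying on the remark made earlier that two operator matrices in the same basis are equal precisely when all of their entries agree. Since here $\textsf{Noumenal-Space}^{A}_{\mathcal{B}_{1}}$ is just $\textsf{Evolution-Space}^{A}_{\mathcal{B}_{1}}$, and since the preceding theorem already guarantees that $\mathcal{B}_{1} \leftarrow \mathcal{B}_{1}$ maps this evolution space into itself, it suffices to show that the map fixes every element $M^{A}$ of that space.

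First I would write $\mathcal{B}_{1} = \{ | i \rangle \}_{i \in I}$ and specialize the defining formula of $\mathcal{B}_{2} \leftarrow \mathcal{B}_{1}$ to the case $\mathcal{B}_{2} = \mathcal{B}_{1}$, so that for arbitrary indices $k, \ell \in I$,
\[ ( \mathcal{B}_{1} \leftarrow \mathcal{B}_{1} ( M^{A} ) )_{k \ell} = \sum\limits_{i \in I, j \in I} \langle k | i \rangle \, M^{A}_{ij} \, \langle j | \ell \rangle \, . \]
Then I would invoke orthonormality of $\mathcal{B}_{1}$, namely $\langle k | i \rangle = \delta_{ki}$ and $\langle j | \ell \rangle = \delta_{j\ell}$, which leaves only the term with $i = k$ and $j = \ell$ in the double sum, giving $M^{A}_{k\ell}$. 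As $k$ and $\ell$ were arbitrary, the operator matrices $\mathcal{B}_{1} \leftarrow \mathcal{B}_{1} ( M^{A} )$ and $M^{A}$ share all of their entries and are therefore equal; since $M^{A}$ ranges over the whole noumenal space, this shows that $\mathcal{B}_{1} \leftarrow \mathcal{B}_{1}$ is the identity.

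I expect no genuine obstacle here: the entire content is the collapse of the double sum under orthonormality, which is just the resolution of the identity $\sum_{i} | i \rangle \! \langle i | = I^{A}$ applied on both sides. The only points deserving a moment of care are the appeal to the equality-of-entries criterion for operator matrices and the standing assumption that the bases are \emph{faithfully indexed}, which is what makes the Kronecker deltas behave as stated; everything else is a routine computation.
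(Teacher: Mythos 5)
Your proof is correct. Note that the paper itself offers no proof to compare against --- it states ``The proof is left to the reader'' --- so your argument genuinely fills the gap. The entry-wise computation is sound: specializing the defining formula to $\mathcal{B}_{2} = \mathcal{B}_{1}$, orthonormality together with faithful indexing gives $\langle k | i \rangle = \delta_{ki}$ and $\langle j | \ell \rangle = \delta_{j\ell}$, the double sum collapses to $M^{A}_{k\ell}$, and the equality-of-entries criterion finishes it. Your care about faithful indexing is exactly right; without it the Kronecker deltas would fail. One remark: there is an even shorter route that avoids recomputing any sum. By the theorem immediately preceding the corollary, $\mathcal{B}_{2} \leftarrow \mathcal{B}_{1} \bigl( \lbrack W \rbrack^{A}_{\mathcal{B}_{1}} \bigr) = \lbrack W \rbrack^{A}_{\mathcal{B}_{2}}$ for every unitary $W$ on the global system; since every element of $\textsf{Noumenal-Space}^{A}_{\mathcal{B}_{1}} = \textsf{Evolution-Space}^{A}_{\mathcal{B}_{1}}$ is of the form $\lbrack W \rbrack^{A}_{\mathcal{B}_{1}}$, setting $\mathcal{B}_{2} = \mathcal{B}_{1}$ in that theorem yields the identity map at once. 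That route buys brevity and stays at the level of the structural theorem (in the spirit the paper advocates of avoiding entry manipulations once a fundamental property is established), whereas your computation buys generality: it shows $\mathcal{B}_{1} \leftarrow \mathcal{B}_{1}$ fixes \emph{every} operator matrix in basis $\mathcal{B}_{1}$, not merely the evolution matrices.
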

\begin{proof} The proof is left to the reader.
\end{proof}

\begin{corollary}
$  ( \mathcal{B}_{3} \leftarrow \mathcal{B}_{2} ) \circ ( \mathcal{B}_{2} \leftarrow \mathcal{B}_{1} )  = \mathcal{B}_{3} \leftarrow \mathcal{B}_{1} $.
\end{corollary}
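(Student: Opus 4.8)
The plan is to avoid grinding through double sums and instead lean on the preceding theorem, which identifies the change-of-basis map on evolution matrices: for any operation $W$ on the global system $S$, one has $\mathcal{B}_{2} \leftarrow \mathcal{B}_{1} ( \lbrack W \rbrack^{A}_{\mathcal{B}_{1}} ) = \lbrack W \rbrack^{A}_{\mathcal{B}_{2}}$. Both sides of the claimed identity are functions with domain $\textsf{Evolution-Space}^{A}_{\mathcal{B}_{1}}$, so it suffices to check that they agree on an arbitrary element of that space, and by definition every such element is of the form $\lbrack W \rbrack^{A}_{\mathcal{B}_{1}}$ for some operation $W$ on $S$.

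First I would evaluate the left-hand composite on $\lbrack W \rbrack^{A}_{\mathcal{B}_{1}}$. Applying the cited theorem once gives $( \mathcal{B}_{2} \leftarrow \mathcal{B}_{1} )( \lbrack W \rbrack^{A}_{\mathcal{B}_{1}} ) = \lbrack W \rbrack^{A}_{\mathcal{B}_{2}}$, which lies in $\textsf{Evolution-Space}^{A}_{\mathcal{B}_{2}}$ — exactly the domain of $\mathcal{B}_{3} \leftarrow \mathcal{B}_{2}$, so the composition is well defined. Applying the theorem a second time yields $( \mathcal{B}_{3} \leftarrow \mathcal{B}_{2} )( \lbrack W \rbrack^{A}_{\mathcal{B}_{2}} ) = \lbrack W \rbrack^{A}_{\mathcal{B}_{3}}$. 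On the other hand, a single application of the same theorem gives $( \mathcal{B}_{3} \leftarrow \mathcal{B}_{1} )( \lbrack W \rbrack^{A}_{\mathcal{B}_{1}} ) = \lbrack W \rbrack^{A}_{\mathcal{B}_{3}}$. The two outputs coincide, and since $W$ was arbitrary the two functions are equal. This mirrors the methodological remark made earlier in the construction: once the fundamental identity $\mathcal{B}_{2} \leftarrow \mathcal{B}_{1} ( \lbrack W \rbrack^{A}_{\mathcal{B}_{1}} ) = \lbrack W \rbrack^{A}_{\mathcal{B}_{2}}$ is in hand, the transitivity of change of basis falls out with no reference to matrix entries whatsoever.

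An entirely equivalent route, for readers who prefer to see it at the level of entries, is a direct computation: expand $\big(( \mathcal{B}_{3} \leftarrow \mathcal{B}_{2} ) \circ ( \mathcal{B}_{2} \leftarrow \mathcal{B}_{1} )\big)(M^{A})$ using the defining double sum of each change-of-basis map, and then collapse the intermediate index by the completeness relation $\sum_{k'} \ket{k'} \! \bra{k'} = I$ for the middle basis $\mathcal{B}_{2}$; this recovers precisely the single sum defining $\mathcal{B}_{3} \leftarrow \mathcal{B}_{1}$. There is no genuine obstacle in either approach. The only point requiring a moment's care is the domain bookkeeping in the slick proof — confirming that each tested element really is an evolution matrix $\lbrack W \rbrack^{A}_{\mathcal{B}_{1}}$ (so the preceding theorem applies) and that the intermediate output lands in $\textsf{Evolution-Space}^{A}_{\mathcal{B}_{2}}$ (so the composite is defined) — while in the direct-computation variant the corresponding subtlety is merely invoking completeness over a genuine orthonormal basis $\mathcal{B}_{2}$ of $\mathcal{H}^{A}$.
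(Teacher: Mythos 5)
Your proof is correct. The paper offers no argument to compare against here---its proof of this corollary is literally ``left to the reader''---so your proposal fills a genuine gap rather than paralleling an existing proof. Your primary route is exactly in the style the paper advocates in its methodological remark after Theorems~\ref{assoaction} and~\ref{identityaction}: establish a fundamental property by an entry computation (here, the preceding theorem $\mathcal{B}_{2} \leftarrow \mathcal{B}_{1} ( \lbrack W \rbrack^{A}_{\mathcal{B}_{1}} ) = \lbrack W \rbrack^{A}_{\mathcal{B}_{2}}$), then derive consequences with no further reference to entries. Evaluating both sides on an arbitrary element $\lbrack W \rbrack^{A}_{\mathcal{B}_{1}}$ of $\textsf{Evolution-Space}^{A}_{\mathcal{B}_{1}}$ and obtaining $\lbrack W \rbrack^{A}_{\mathcal{B}_{3}}$ in each case settles the identity; this is safe even though distinct global operations $W$, $W'$ may produce the same evolution matrix, because the change-of-basis maps are defined entry-wise on operator matrices rather than through a choice of $W$, so their values do not depend on the representative. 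Your fallback entry-wise computation is also valid: collapsing the intermediate index via the completeness relation $\sum_{k'} \ket{k'}\!\bra{k'} = I^{A}$ for the orthonormal basis $\mathcal{B}_{2}$ of $\mathcal{H}^{A}$ recovers the defining double sum of $\mathcal{B}_{3} \leftarrow \mathcal{B}_{1}$, with no convergence or commutation issues since the coefficients $\langle k' | i \rangle$ are scalars and all sums are finite in this setting. Either argument would serve as the omitted proof; the first is the one most consonant with the paper's stated philosophy.
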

\begin{proof} The proof is left to the reader.
\end{proof}

\begin{corollary} $ \mathcal{B}_{2} \leftarrow \mathcal{B}_{1}  \colon \textsf{Noumenal-Space}^{A}_{\mathcal{B}_{1} } \to \textsf{Noumenal-Space}^{A}_{\mathcal{B}_{2} } $ is a bijection.
\end{corollary}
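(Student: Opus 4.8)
The plan is to exhibit an explicit two-sided inverse and then reduce everything to the two preceding corollaries, so that no new computation with entries is required. The natural candidate for the inverse of $\mathcal{B}_{2} \leftarrow \mathcal{B}_{1}$ is the change-of-basis map in the opposite direction, $\mathcal{B}_{1} \leftarrow \mathcal{B}_{2}$. The first thing I would record is that this candidate is actually a legitimate map between the right spaces: by the theorem preceding these corollaries, $\mathcal{B}_{1} \leftarrow \mathcal{B}_{2}$ sends $\textsf{Evolution-Space}^{A}_{\mathcal{B}_{2}}$ into $\textsf{Evolution-Space}^{A}_{\mathcal{B}_{1}}$, and since the noumenal space of $A$ in a given basis is by definition the evolution space in that basis, this is exactly a function $\mathcal{B}_{1} \leftarrow \mathcal{B}_{2} \colon \textsf{Noumenal-Space}^{A}_{\mathcal{B}_{2}} \to \textsf{Noumenal-Space}^{A}_{\mathcal{B}_{1}}$.

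Next I would compute the two composites in both orders using the second corollary (functoriality of composition) and then collapse the result using the first corollary (that a map into its own basis is the identity). Concretely,
\[ ( \mathcal{B}_{1} \leftarrow \mathcal{B}_{2} ) \circ ( \mathcal{B}_{2} \leftarrow \mathcal{B}_{1} ) = \mathcal{B}_{1} \leftarrow \mathcal{B}_{1} = \mathrm{id} \]
on $\textsf{Noumenal-Space}^{A}_{\mathcal{B}_{1}}$, and symmetrically
\[ ( \mathcal{B}_{2} \leftarrow \mathcal{B}_{1} ) \circ ( \mathcal{B}_{1} \leftarrow \mathcal{B}_{2} ) = \mathcal{B}_{2} \leftarrow \mathcal{B}_{2} = \mathrm{id} \]
on $\textsf{Noumenal-Space}^{A}_{\mathcal{B}_{2}}$. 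Here I am instantiating the three-basis corollary twice, once with the triple $(\mathcal{B}_{1}, \mathcal{B}_{2}, \mathcal{B}_{1})$ and once with $(\mathcal{B}_{2}, \mathcal{B}_{1}, \mathcal{B}_{2})$. From the existence of a two-sided inverse it follows immediately that $\mathcal{B}_{2} \leftarrow \mathcal{B}_{1}$ is a bijection, which is the claim.

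I do not expect a genuine obstacle here, since the work has already been done in the two corollaries above; the only point requiring a little care is the bookkeeping of domains and codomains, making sure that each composite is the identity on the \emph{correct} space (the one matching the subscript basis) rather than asserting a single global identity. Because the two corollaries are stated only up to the reader-supplied proofs of routine entry computations, an alternative and equally short route would be to prove injectivity and surjectivity directly: surjectivity is essentially the content of the preceding theorem (every $[W]^{A}_{\mathcal{B}_{2}}$ is the image of $[W]^{A}_{\mathcal{B}_{1}}$), and injectivity follows by applying $\mathcal{B}_{1} \leftarrow \mathcal{B}_{2}$ to recover the original matrix. I would present the inverse-map argument, as it is the cleanest and makes the functorial structure explicit.
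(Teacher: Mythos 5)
Your proof is correct. The paper leaves this proof to the reader, and your two-sided-inverse argument---checking that $\mathcal{B}_{1} \leftarrow \mathcal{B}_{2}$ lands in the right space via the change-of-basis theorem, then computing $(\mathcal{B}_{1} \leftarrow \mathcal{B}_{2}) \circ (\mathcal{B}_{2} \leftarrow \mathcal{B}_{1}) = \mathcal{B}_{1} \leftarrow \mathcal{B}_{1} = \mathrm{id}$ and its mirror image from the two preceding corollaries---is exactly the argument that scaffolding was set up to deliver, so it matches the intended route.
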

\begin{proof}
The proof is left to the reader.
\end{proof}

\end{document}